\newcommand\smallO{
  \mathchoice
    {{\scriptstyle\mathcal{O}}}
    {{\scriptstyle\mathcal{O}}}
    {{\scriptscriptstyle\mathcal{O}}}
    {\scalebox{.7}{$\scriptscriptstyle\mathcal{O}$}}
  }
\newtheorem{thm}{Theorem}
\newtheorem{lem}{Lemma}
\newtheorem{prop}{Proposition}
\newtheorem{col}{Corollary}
\theoremstyle{definition}
\newtheorem{defn}{Definition}
\newtheorem{exam}{Example}
\begin{document}

\title{Incentivizing Information Acquisition%
	\thanks{%
		\scriptsize{I am indebted to my committee chair Luciano Pomatto and advisor Omer Tamuz for their generous encouragement, guidance, and support. This paper also benefits from insightful discussions with Alexander Bloedel, Peter Caradonna, Axel Niemeyer, and Kun Zhang. For helpful comments, I also thank S. Nageeb Ali, Pak Hung Au, Daghan Carlos Akkar,  Simon Board,   Simone Cerreia-Vioglio, Daniel Gottlieb, Simon Grant, Nima Haghpanah, Wei He, Jonathan Libgober, Shuo Liu, Yi Liu,  Jay Lu, Yusufcan Masatlioglu, Doron Ravid, Fedor Sandomirskiy, Balazs Szentes, Xin Shan, Andrew J. Sinclair, Jakub Steiner, Rakesh Vohra,  Mu Zhang, Mengxi Zhang, Weijie Zhong, seminar participants at Caltech, UCLA, HKU, HKUST, PKU, THU,  and conference participants at 2024 AMES, 2024 Midwest Theory Meeting.}
}

\author{Fan Wu\thanks{
		\scriptsize{Fan Wu: HSBC Business School, Peking University (email: fanwu@phbs.pku.edu.cn).}}
  }
}

\maketitle
\begin{abstract}
I study a principal-agent model in which a principal hires an agent to collect information about an unknown continuous state. The agent acquires a signal whose distribution is centered around the state, controlling the signal's precision at a cost. The principal observes neither the precision nor the signal, but rather, using transfers that can depend on the state, incentivizes the agent to choose high precision and report the signal truthfully. I identify a sufficient and necessary condition on the agent’s information structure which ensures that there exists an optimal transfer with a simple cutoff structure: the agent receives a fixed prize when his prediction is close enough to the state and receives nothing otherwise. This condition is mild and applies to all signal distributions commonly used in the literature.

\medskip

\noindent \textit{Keywords}. Information Acquisition; Principal-Agent Model.

\noindent \textit{JEL Codes}.   D82, D83, D86
\bigskip
\end{abstract}
\newpage

\section{Introduction}

In many situations, an agent is tasked with collecting information about a quantity of interest. Examples include a statistician collecting data for the Bureau of Statistics, a meteorologist forecasting weather for the National Weather Service, or a consultant assessing the profitability of a new market. A natural question is how to design a contract to incentivize the agent to collect information and report it accurately.

 A difficulty in designing such contracts is that the agent's effort and the information acquired are often unobservable to the principal. For example, a statistician might be able to fabricate part of their data. Similarly, in the case of a meteorologist analyzing weather patterns, the amount of time and effort spent on the analysis might not be easily observable. In the case of the consultant, predicting market profitability may rely on private connections or specialized skills, which are often proprietary. Therefore, the contract between the principal and the agent must take into account moral hazard.

In this paper, I study the design of an optimal contract for a principal  (she) that incentivizes the agent (he) to gather information and maximize its precision. The agent conducts a location-scale experiment that is centered around the state and controls the precision (the inverse of the scale) at a cost. Eventually, the state is revealed and the contract can depend on the state and the agent's report. I abstract from the principal's broader decision problem---that is, the way in which the principal uses the information generated by the agent---and focus on the problem where the principal's objective is only to maximize precision subject to some fixed budget constraint on transfers. My analysis and results shall be robust to whatever the broader decision problem may be. The budget reflects the maximum amount of transfer that the principal is able to use.

In practice, a budget that can only be used for a specific task is commonly observed. For example, government agencies typically operate within budget limits imposed by government funding and must return any unused funds to the government.\footnote{\cite{schick2008federal} provides an in-depth look at how the federal budget is structured and constrained by funding limits imposed by the government.} There are numerous reports of unspent government funding being returned to the government, such as \cite{Millionsremainunspent}, \cite{DCleftmillions}.\footnote{ Federal regulations also specify where unspent grant money is returned, as seen at the \cite{codeoffederal}.} In firms, it is also common for agents to operate under a budget and to eventually return unused funds to the firm. \citet{anthony2007management} and \citet{horngren2009cost} provide comprehensive empirical evidence for this practice.\footnote{\citet{anthony2007management} offer an extensive analysis of organizational budget management, highlighting the frequent practice of returning unspent funds to a central or general fund in both private and public sectors. \citet{horngren2009cost} explores cost control and budget allocation, noting how unused resources are often returned to a central pool for future use.} In the consulting example, the transfer could take the form of a promotion, a recommendation, or a grade. Sometimes, budgets are restricted by external funding designated for specific purposes, such as university grants.

 In such settings, a simple contract is to allocate the entire budget to the agent if their report is close to the actual state, and withhold payment otherwise. I call these incentive schemes \textit{cutoff transfers}. The cutoff transfer is the simplest transfer rule: It is straightforward to understand and easy to implement. The literature on principal-agent problems has long been interested in when simple contracts are optimal \citep{carroll2015robustness,Herweg2010,oyer2000quota,gottlieb2022simple}. 

 My main result is that cutoff transfers are optimal across a large variety of settings. In particular, I identify a sufficient and necessary condition on the agent's information structure such that for all cost functions, there exists an optimal transfer that is a cutoff transfer.

 Formally, I study a model in which the principal wants to incentivize the agent  to acquire information on an underlying state $\theta \in \mathbb{R}$. For simplicity, I assume the state $\theta$ admits an improper uniform common prior.\footnote{I generalize the model to the case where the state is multi-dimensional and the case with Gaussian prior.} The agent can acquire a costly signal $s\in \mathbb R$ about the state. The signal takes the form $s=\theta+\frac{1}{\lambda}\varepsilon$ where $\varepsilon$ is drawn from some symmetric and single-peaked probability density function, and $\lambda$ is a measure of precision controlled by the agent. An important example is the case where $\varepsilon$ admits a standard Gaussian distribution. The agent chooses precision $\lambda$ at a cost $c(\lambda)$. 

The principal observes neither the agent's signal $s$ nor the agent's choice of precision $\lambda$. The agent makes a report $a\in \mathbb R$ after observing the signal. The principal wants to incentivize the agent to maximize precision and truthfully report his signal. She does so by a transfer rule $t$ that depends on the report $a$ and the state $\theta$, which is eventually observed by both players.\footnote{I study the case where the state is unobservable but the principal has a private signal about the state later.}  I impose limited liability ($t\geq 0$) and limited budget. The budget can only be used for this task. The agent's payoff is his expected utility on transfer minus the cost. I allow for any increasing utility function, which accommodates arbitrary risk attitude of the agent. I also allows for arbitrary lower-semicontinuous cost function.

I refer to a transfer rule as \textit{optimal} if it induces the maximum precision among all transfers and elicits truthful reports. There are at least two difficulties in this problem. First, the transfer design problem is an infinite-dimensional optimization problem, as the transfer rule itself is a function. Second, there are two layers of incentive compatibility issues. The principal does not observe the precision chosen by the agent, which is a moral hazard problem, and she does not observe the signal either, leading to a communication problem.

I say that a transfer rule is a \textit{cutoff} transfer if it pays the entire budget when the distance between the report and the state is below a cutoff $d$ and pays $0$ otherwise. For which signal distributions are cutoff transfers optimal?

My main result (Theorem \ref{thm: step}) identifies a sufficient and necessary condition for cutoff transfers to be optimal. This condition is weaker than the monotone likelihood ratio property \citep{milgrom1981good,rogerson1985first,jewitt1988justifying} and is satisfied by most common distributions, including Gaussian, Laplace, logistic, and the uniform distribution. Theorem \ref{thm: step} shows that if this condition holds, then for all cost functions cutoff transfers are optimal. Conversely, if the condition does not hold, then there is a cost function for which no cutoff transfer is optimal. Moreover, this cost function is not pathological and can be taken to be increasing, convex, and differentiable. Furthermore, when the condition holds, I characterize the optimal cutoff (Theorem \ref{thm: optimal step function}).

An intuition for the optimality of cutoff rules is that they provide the strongest incentives among all contracts. Loosely speaking, since more precise signal structures generate signals that are ``more concentrated” around the state, a precision-maximizing contract must pay the agent more for the report closer to the state. The cutoff transfers take this logic to the extreme: the principal exhausts her entire budget when the report is sufficiently close to the state and pays nothing otherwise. Importantly, however, this intuition is incomplete, because the notion of ``more concentrated” signals is imprecise. My condition is exactly what is needed to make this logic tight.

All the results generalize to the $n$-dimensional case. The signal distribution is still symmetric, single-peaked, and centered around $\theta$. Here the symmetry means that the density of the distribution depends only on the Euclidean distance from the state $\theta$. A cutoff transfer pays $1$ when the Euclidean distance between the report and the state is less than a cutoff $d$. In this setting I show that an analogous result holds (Proposition \ref{prop: high dimensional}). 

I also generalize my results to the case of a proper Gaussian prior and Gaussian signals. I show that for all cost functions, cutoff transfers are optimal (Proposition \ref{prop: Gaussian}). 

In addition, I extend my results to the setting where the state is unobservable. Instead, the principal privately receives a signal about the state. When designing the transfer, she uses her signal instead of the state to discipline the agent. I show that given a Gaussian or uniform prior and Gaussian signal, cutoff transfers are optimal (Proposition \ref{prop: unobserved}).

Lastly, I apply my results to a classic principal-agent problem, offering new insights into the optimality of simple contracts. In the classic setting, a principal incentivizes the agent to produce an output. Unlike the traditional framework, in which the principal's goal is to maximize expected payoff (of outputs) minus transfers, I assume the principal’s sole objective is to maximize output, constrained by a budget limit.\footnote{This is a common practice in firms; see \citet{anthony2007management} and \citet{horngren2009cost}.} As a corollary of my main result, the monotone likelihood ratio property is sufficient to ensure the optimality of cutoff transfers (Corollary~\ref{col: princpal agent model}). Here, the cutoff transfer pays the entire budget if the output is above a cutoff. This corollary provides an alternative explanation for simple contracts: the optimality of cutoff transfers stems from the nature of budgets in many contracting scenarios. In contrast, closed-form solutions for the optimal transfers in the classic setting are generally not obtainable \citep[see][Chapter 4.5]
{bolton2004contract}.\footnote{\citet{Grossman1983} study this problem with continuous effort under the assumption of finite output levels. Their findings show that even the monotone likelihood ratio property is not sufficient to ensure optimal transfers to be increasing.} In Appendix \ref{sec: connect to classic}, I show that the classic model is closely related to my model mathematically.

 One point to note is that my proof relies on techniques from monotone comparative statics. This allows me to analyze the problem under weak assumptions on the cost function and signal distribution. In particular, I do not need to assume the validity of the first-order approach \citep{rogerson1985first,jewitt1988justifying}.

\subsection{Related Literature}
My paper is related to the literature on incentivizing information acquisition \citep[see, e.g.,][]{osband1989optimal,Liandlibgober2023,li2022optimization,neyman2021binary,zermeno2011,carroll2019robust,chen2023learning,chade2016delegated,whitmeyer2023buying,sharma2023procuring,clark2021contracts}.\footnote{In \cite{neyman2021binary}, a rational expert aims to predict the probability of a biased coin flip. The information is acquired by choosing the number of flip trials at a fixed cost per flip. \cite{neyman2021binary} study the optimal scoring rule that incentivizes precision. }

In \cite{li2022optimization}, an agent exerts a binary level of effort to refine a posterior from a prior. The paper studies optimizing proper scoring rules by maximizing the increase in the score with effort. In \cite{Liandlibgober2023}, a principal hires an agent to learn about a binary state. The agent acquires information over time through a Poisson information arrival technology. The principal rewards the agent with a fixed-value prize as a function of the agent's sequence of reports and the state. \cite{Liandlibgober2023} identify conditions under which it is without loss to elicit a single report after all the information has been acquired.

\cite{osband1989optimal} studies a principal with a quadratic prediction cost who incentivizes an expert to collect information. The expert can increase the precision (the inverse of variance) of the prediction at a constant cost. The principal minimizes the sum of the expected error variance and the expected transfer. In this stylized setting, the optimal transfer consists of a quadratic report error term plus a linear term on an initial belief error, with three parameters.

In \cite{whitmeyer2023buying} and \cite{sharma2023procuring}, a rational inattentive agent can acquire information flexibly subject to a posterior separable cost. The principal wants to minimize the expected monetary cost of implementing a given information structure. Similarly, \cite{clark2021contracts} study the Pareto optimal contract that maximizes social welfare. In all these papers, both the experiment and the signal are unobservable to the principal. In \cite{rappoport2017incentivizing}, a principal hires an agent to acquire costly information to influence a third party's decision. This paper assumes that the realized piece of information is observable and contractible.

\cite{zermeno2011} and \cite{papireddygari2022contracts} study menu design with information acquisition. In their setting, the principal first offers a menu of contracts. Then the agent privately acquire costly information. Next, the agent selects a contract from the menu. The selection therefore reveals some information the agent acquired.

In \cite{carroll2019robust}, the principal is uncertain about the expert's information acquisition technology and only knows some experiments that the agent can choose. The principal evaluates the incentive contract by a worse-case criterion.

\cite{argenziano2016strategic} and \cite{kreutzkamp2022endogenous} study costly information acquisition and transmission. In both papers, there is no transfer and the agent cares about the principal's action. In \cite{kreutzkamp2022endogenous} setting, the sender publicly chooses an experiment. In \cite{argenziano2016strategic}, the expert acquires information by choosing a number of binary trials to perform.

In the classic principal-agent model, several papers also show the optimality of the cutoff transfer but rely on different assumptions. \cite{oyer2000quota} assumes the validity of the first-order approach, the existence of the optimal contract, and the absence of the IR constraint. He shows that cutoff transfers are optimal among monotone contracts for a risk neutral agent. \cite{Herweg2010} show that cutoff transfers are optimal for expectation-based loss averse and risk neutral agent.

Notably, there is a growing literature that adopts the same assumption that the principal cannot take money from the agent and can only reward the agent with a prize for which they have no other uses; see \cite{Liandlibgober2023,li2022optimization,deb2018evaluating,deb2023indirect,dasgupta2023optimal,hebert2022engagement,wong2023dynamic}.

\section{The Model}
I study a principal-agent model, where the principal (she) wants to incentivize the agent (he) to acquire information regarding an underlying state $\theta\in \mathbb R$. They share a common prior over $\theta$. It will be convenient to assume the prior to be an improper uniform prior on  $\mathbb R$. I generalize my results to the case in which the state is multi-dimensional in Section \ref{sec: multidimensional}, and I study the case of a  
proper Gaussian prior in Section \ref{sec: Gaussian prior}.

 The agent can acquire a costly signal $s\in \mathbb R$ of the form
 $$
    s =\theta+\frac{1}{\lambda}\varepsilon,
 $$
 where $\varepsilon$ is drawn from a distribution with a symmetric and single-peaked probability density function (PDF) $\phi$. I assume $\phi$ is continuously differentiable. The function $\phi$ is supported on an interval, which can be bounded or unbounded. The parameter $\lambda$ is a measure of the signal's precision, a scale parameter that is inversely proportional to the standard deviation of $s$. The PDF of the signal, given $\theta$ and $\lambda$, is denoted by $\varphi(\cdot; \theta, \lambda)$. It takes the form
 $$
    \varphi(x;\theta, \lambda)=\lambda \phi(\lambda (x-\theta)).
 $$
 For example, if $\phi$ is the PDF of the standard Gaussian distribution, then $\varphi(\cdot;\theta,\lambda)$ is the PDF of a Gaussian distribution with mean $\theta$ and standard deviation $1/\lambda$. Note that $\phi(x) = \varphi(x;0,1)$.

The agent chooses the precision $\lambda$ at a cost $c(\lambda)$. For example, suppose the agent's signal is the aggregation of many small independent signals. Then as the number of small signals becomes large, the aggregate signal tends to a Gaussian distribution. If the agent incurs a cost that depends on the number of small signals he gathers, then the agent effectively controls the standard deviation of the aggregate signal at a cost. I assume that the cost function $c$ is lower semicontinuous.\footnote{It would be natural to also assume that the cost function is increasing---that is, the higher the precision, the higher the cost. Imposing this assumption or not does not affect my results.} The pair $(\phi,c)$ constitutes the primitive of the model. 

The principal observes neither the agent's signal $s$ nor the agent's choice of precision $\lambda$. Instead, after observing the signal, the agent sends a report $a\in \mathbb R$ to the principal.\footnote{It is without loss for the agent to report his signal only. Asking the agent to report the precision additionally shall lead to a cheap talk. The agent shall report the precision that results in the largest expected transfer.} Eventually both players observe the state $\theta$.\footnote{I relax this assumption in Section \ref{sec: unobserved state} where instead of eventually observing the state, they eventually observe a signal about the state.} The principal wants to incentivize the agent to maximize precision and truthfully report his signal. She does so by means of a transfer.  The transfer can depend on the state and the agent's report. The assumption that the realized state is contractible is familiar from the literature on belief elicitation via (proper) scoring rules and prediction markets, ubiquitous in the principal-expert literature, and well-suited to economic applications in which the state is publicly observable ex post (e.g., the outcome of an election being forecast by a pollster, or the conditions of a new market being analyzed by a consultant).

I assume that the transfer $t$ is a function of the difference $\theta-a$ and it vanishes at infinity. That is, it satisfies
\[
    \lim_{x \to -\infty} t(x) = \lim_{x \to \infty} t(x) = 0.
\]
As I discuss below, the assumption that the transfer  depends only on the difference between $a - \theta$ is to ensure the agent's payoff is well-defined. I also assume limited liability, i.e., that the principal cannot take money from the agent, so that $t\ge 0$. In addition, the principal has a limited budget that does not depend on the state. She can use this budget only for the transfer. Without loss, I take the budget to be $1$.

The timing of moves is as follows:
\begin{enumerate}
    \item The principal commits to a transfer rule $t\colon \mathbb R\rightarrow \mathbb [0,1]$.
    \item The agent chooses to accept or not. The game continues if the agent accepts and terminates otherwise.
    \item The agent privately chooses a signal precision $\lambda$.
    \item The agent privately observes a signal realization $s$ drawn from the distribution $\varphi(\cdot; \theta, \lambda)$.
    \item The agent makes a report $a\in \mathbb R$.
    \item Finally, the state $\theta$ is revealed and the agent receives the transfer $t(\theta-a)$.
\end{enumerate}

 The agent's payoff is given by the expected utility of the transfer minus the cost of acquiring information:
 \[
    \mathbb E u(t)-c(\lambda),
 \]
 where $u \colon [0,1] \to \mathbb{R}$ is strictly increasing and continuous, and normalized so that $u(0)=0$, and $u(1)=1$.\footnote{Here I can drop the continuity assumption on $u$ and only require $u$ to be weakly increasing. All my results still hold.} To ease the notation, it will be convenient to treat the transfer rule as paying in utils, rather than money. Under my assumptions on $u$, this is equivalent, and without loss of generality, to assuming $u$ is the identity function. Finally, I assume the agent has an outside option  which gives payoff $0$ if he rejects the contract.

 After choosing $\lambda$, the agent observes a signal. Conditional on a signal realization $s$, it follows from Bayes' rule (adapted to a uniform prior) that the agent's posterior has PDF $\varphi(\cdot; s,\lambda)$. The agent now chooses a report $a$ to maximize the expected transfer, which is therefore given by
$$
    E(\lambda;t) = \max_{a \in \mathbb{R}} \int_{\mathbb R} t(\theta'- a) \varphi(\theta'; s,\lambda) d\theta'.
$$
Maximizing over report $a$ shows that when choosing the precision, the agent anticipates how he optimally reports later. Thus, the model allows for double deviation. The assumption that $t$ vanishes at infinity ensures that the maximum is well defined.\footnote{I prove this in Lemma \ref{lem: expected transfer} in the appendix.} 

Note that as $t$ depends only on the difference $\theta-a$, the expected transfer computed above does not depend on $s$.\footnote{To see this, note that \[
\begin{split}
    &\max_a \int_{\mathbb R} t(\theta'- a) \varphi(\theta'; s,\lambda) d\theta'
    =\max_a \int_{\mathbb R} t(\theta'- a+s) \varphi(\theta'; 0,\lambda) d\theta'
    =\max_a \int_{\mathbb R} t(\theta'- a) \varphi(\theta'; 0,\lambda) d\theta'
\end{split}
\]} Consequently, the agent's (interim) expected transfer conditional on signal $s$ does not depend on $s$, which must coincide with his unconditional expected transfer. If $t$ depends arbitrarily on $a$ and $\theta$, the agent's unconditional expected transfer might not be well-defined, since the distribution of $s$ is improper. In Appendix \ref{sec: proper uniform},  I show that when the agent's expected transfer is well-defined, it is without loss to assume that the transfer depends only on the difference $a-\theta$.  An interesting fact is that almost all popular key performance indicators in the forecast industry depend only on the prediction error $a-\theta$ \citep{vandeput2021data}.

The principal's objective is to maximize $\lambda$ under the constraint of inducing a truthful report from the agent. (As I show later, this constraint will not be binding.) This objective can be interpreted as capturing, in reduced-form, settings in which the principal uses the information conveyed by the agent's report to solve some (un-modeled) decision problem. Indeed, under broad conditions, it is optimal for the principal to maximize precision regardless of the decision problem that she faces. First, when the density $\phi$ is strongly unimodal, for a large class of monotone decision problems \citep{karlin1956theory}, higher precision is always better for the principal \citep[see][Theorem 5.1 and 5.2]
{lehmann2011comparing}.\footnote{$\phi$ is strongly unimodal if $-\ln\phi$ is convex. This is slightly stronger than global increasing elasticity.}\footnote{The class of \textit{monotone} decision problems is defined in terms of the action space and the permissible loss functions. For each $\theta$, there is a correct action $A(\theta)$. The function $A(\theta)$ is real-valued and nondecreasing. The range of $A(\theta)$ is the action space. The loss function is minimized at the correct action and is nondecreasing as the action moves away from the correct action on either side.}  Second, when the distribution $\phi$ is \textit{self-decomposable}, the signal with different precisions are ranked in the Blackwell order; thus the principal is better off with higher precision for all decision problems. The self-decomposable distributions include all stable distributions (such as Gaussian and Cauchy distributions) and some non-stable ones such as Laplace.\footnote{See \cite{goel1979comparison} and \cite{lehmann2011comparing} for a detailed discussion.}

\section{Preliminary Analysis}

First, I define a relaxed problem, in which the principal designs a transfer to maximize the agent's signal precision, without incentivizing truthful reporting.
\begin{maxi}
{\lambda, t}   
\lambda  
{\label{eq:program}}  
{}  
\addConstraint{0}{\le t\le 1}
\addConstraint{\lambda}{\in \arg\max E(\cdot;t)-c(\cdot)}{\quad\text{IC}}
\addConstraint{E}{(\lambda;t) -c(\lambda)\ge 0}{\quad\text{IR}}
\end{maxi}
The first constraint in the relaxed problem is limited liability and limited budget. The second constraint is an incentive compatibility constraint, since the precision $\lambda$ is chosen by the agent and is unobserved by the principal and cannot be contracted upon.\footnote{The continuity of $E(\cdot; t)$ and the lower semicontinuity of $c$ ensures that the $\arg\max$ in IC is well defined.} The last constraint is a participation constraint, also known as individual rationality, and is implied by the outside option available to the agent. A transfer rule $t$ is \textit{optimal} if it solves problem \eqref{eq:program} and induces the agent to report truthfully.

A quantity that will be key to the analysis is the elasticity of the standardized signal distribution $\phi$.
\begin{defn}
    The \emph{elasticity} $\eta$ of signal distribution $\phi$ at $x > 0$ is defined as
    \[
        \eta (x)=-\frac{d  \phi(x)/\phi(x)}{d x/x}.
    \]
    For $x$ such that $\phi(x)=0$, I let $\eta (x)= +\infty$.
\end{defn}
Elasticity measures by how much a percentage change in $x$ leads to a percentage change in the density function $\phi(x)$. For example, if $\phi$ is the PDF of a standard Gaussian, then $\eta(x)=x^2$. If $\phi$ is the PDF of a standard Laplace, then $\eta(x)=x$. Notice that both examples have weakly increasing elasticity. In fact, weakly increasing elasticity is satisfied by most common nonatomic distributions defined on an interval, including the uniform distribution, the triangular distribution, and the logistic distribution. The next lemma shows that the weakly increasing elasticity condition is equivalent to a monotone likelihood property: for any $0\leq x_1\leq x_2$, the ratio 
$\frac{\varphi(x_1;0,\lambda)}{\varphi(x_2;0,\lambda)}$ increases in $\lambda$. That is, a more accurate signal ($x_1$) is more likely to appear than a less accurate signal ($x_2$) as precision increases.
\begin{lem}\label{lem: ratio}
$$\frac{\partial [\ln \phi(\lambda x)]}{\partial\ln \lambda}=-\eta (\lambda x).$$
In particular, for any $0\leq x_1\leq x_2$,
$$\frac{\varphi(x_1;0,\lambda)}{\varphi(x_2;0,\lambda)}\text{ increases in }\lambda \quad \Leftrightarrow \quad \eta (\lambda x_1)\leq \eta (\lambda x_2).$$ 
\end{lem}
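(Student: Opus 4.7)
The plan is to treat the lemma as a pair of direct chain-rule computations. No structural hypothesis on $\phi$ beyond the $C^1$ regularity already assumed is needed; convexity, log-concavity, or anything about the shape of $\eta$ play no role here.

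For the first identity, I would rewrite $\partial/\partial \ln\lambda = \lambda\cdot\partial/\partial\lambda$ and apply the chain rule:
$$
\frac{\partial \ln\phi(\lambda x)}{\partial \ln\lambda} \;=\; \lambda\cdot\frac{x\,\phi'(\lambda x)}{\phi(\lambda x)} \;=\; \frac{(\lambda x)\,\phi'(\lambda x)}{\phi(\lambda x)}.
$$
Rewriting the definition of elasticity in differential form as $\eta(y)=-y\phi'(y)/\phi(y)$ and evaluating at $y=\lambda x$ yields exactly $-\eta(\lambda x)$. At points where $\phi(\lambda x)=0$, the convention $\eta(\lambda x)=+\infty$ built into the definition makes both sides agree in the extended reals, so the identity holds throughout.

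For the equivalence, I would use $\varphi(x_i;0,\lambda)=\lambda\phi(\lambda x_i)$ to cancel the leading $\lambda$, so that
$$
\ln\frac{\varphi(x_1;0,\lambda)}{\varphi(x_2;0,\lambda)} \;=\; \ln\phi(\lambda x_1)-\ln\phi(\lambda x_2).
$$
Differentiating with respect to $\ln\lambda$ and invoking the first identity term-by-term gives a logarithmic derivative equal to $\eta(\lambda x_2)-\eta(\lambda x_1)$. Since the ratio is positive and is increasing in $\lambda$ iff increasing in $\ln\lambda$ iff its logarithmic derivative is nonnegative, monotonicity of the ratio in $\lambda$ is equivalent to the pointwise inequality $\eta(\lambda x_1)\le\eta(\lambda x_2)$, as claimed.

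The main obstacle, such as it is, is purely bookkeeping at boundary points where $\phi(\lambda x_2)=0$: single-peakedness of $\phi$ at $0$ together with $x_1\le x_2$ forces $\phi(\lambda x_1)\ge\phi(\lambda x_2)$, so the only awkward case is $\phi(\lambda x_2)=0<\phi(\lambda x_1)$, where the ratio is $+\infty$. In that regime both the monotonicity side and the inequality $\eta(\lambda x_1)\le\eta(\lambda x_2)=+\infty$ hold trivially under the stated convention, and everywhere else the statement reduces to the chain-rule calculation above.
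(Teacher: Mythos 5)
Your proposal is correct and follows essentially the same route as the paper's proof: a chain-rule computation gives $\partial\ln\phi(\lambda x)/\partial\ln\lambda = (\lambda x)\phi'(\lambda x)/\phi(\lambda x) = -\eta(\lambda x)$, and then cancelling the $\lambda$ factors in $\varphi(x_i;0,\lambda)=\lambda\phi(\lambda x_i)$ reduces the second claim to reading off the sign of $\eta(\lambda x_2)-\eta(\lambda x_1)$ from the first identity applied term by term. Your extra discussion of the boundary case $\phi(\lambda x_2)=0$ (which the paper silently skips) is a harmless refinement and does not change the substance of the argument.
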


Next, I define a weaker condition.
\begin{defn}
    The signal distribution satisfies \textit{increasing elasticity above 1} if $\eta$ single-crosses $1$ from below and is weakly increasing after the cross, i.e., if for every $x > 0$
    \[
        \eta (x)>1 \quad\text{implies}\quad \eta(y)\geq \eta(x) \text{~for all~} y > x.
    \]
\end{defn}
I define a useful quantity.
$$\eta ^{-1}(1)= \inf\{x\in \mathbb R_+|\eta (x) >  1\}.$$
Note that $\eta ^{-1}(1)$ is well-defined, as the density function $\phi$ is integrable.\footnote{Notice that function $1/x$ is not integrable over any neighborhood around $0$. Thus, $\eta (0^+)<1$ since $\phi$ is integrable over any neighborhood around $0$. Moreover, function $1/x$ is not integrable over any neighborhood around $+\infty$. Thus, $\eta (x)>1$ must hold for some $x\in \mathbb R_+$.} Later on, when I generalize the state and the signal to be $n$ dimensional, $\eta ^{-1}(n)$ is defined similarly.

\section{Characterization of the Optimal Transfer}\label{sec: main result}

A simple transfer rule is the \textit{cutoff transfer} that pays the agent $1$ when the distance between the report and the state is less than a cutoff $d$ and pays $0$ otherwise (see Figure \ref{figure step}). As cutoff transfers are symmetric and single-peaked, they have the additional desirable property of inducing truthful reports by the agent.
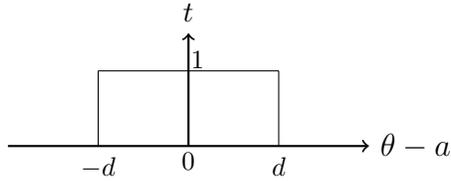
\begin{figure}
    \centering
    \begin{tikzpicture}[xscale=0.6,yscale=0.5]
        \draw[thick, ->] (0,0) -- (0,3) node[above]{\small $t$};
        \draw[thick, ->] (-4,0) -- (4,0) node[right]{ $\theta-a$};
        \draw  (-2,2) -- (2,2);
        \draw  (2,0) -- (2,2);
        \draw  (-2,0) -- (-2,2);
        \node at (0,-0.4) {\footnotesize 0};
        \node at (0.2,2.3) {\footnotesize $1$};
        \node[below] at (2,0) {\footnotesize $d$};
        \node[below] at (-2,0) {\footnotesize $-d$};
    \end{tikzpicture}
    \caption{Cutoff Transfer.} \label{figure step}
\end{figure}

As transfer rules are functions, the principal faces an infinite dimensional optimization problem. In my main result I show that when the standardized distribution $\phi$ satisfies increasing elasticity above one---which holds for all commonly used signal distributions---the principal chooses a cutoff transfer rule, reducing the problem to a one-dimensional one. Moreover, I show that this property of $\phi$ is also necessary for cutoff transfer rules to always be optimal.

\begin{thm}\label{thm: step}
The following are equivalent.
    \begin{enumerate}
        \item For all cost functions, there exists an optimal transfer that is a cutoff transfer.
        \item For all increasing, convex, and continuously differentiable cost functions, there exists an optimal transfer that is a cutoff transfer.
        \item Signal density function $\phi$ satisfies increasing elasticity above $1$.
    \end{enumerate}
\end{thm}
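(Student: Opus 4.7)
The plan is to prove the chain (1) $\Rightarrow$ (2) $\Rightarrow$ (3) $\Rightarrow$ (1); the first implication is immediate, so the substance lies in the sufficiency (3) $\Rightarrow$ (1) and the contrapositive of necessity (2) $\Rightarrow$ (3).

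For sufficiency, I would proceed in three reductions. First, by the Hardy--Littlewood rearrangement inequality applied to the agent's expected transfer---whose kernel $\lambda\phi(\lambda(\theta-s))$ is symmetric and unimodal---any $t\colon\mathbb{R}\to[0,1]$ vanishing at infinity is dominated pointwise in $\lambda$ by its symmetric decreasing rearrangement $t^*$, with the agent's optimal report then being $a=s$; so it suffices to consider symmetric decreasing transfers. Second, any such $t$ admits the layer-cake representation $t = \int_0^1 t_{d(\alpha)}\, d\alpha$ with $d(\alpha):=\sup\{x\ge 0 : t(x)\ge \alpha\}$ non-increasing, yielding
\[
E(\lambda;t)=\int_0^1\bigl[2F(\lambda d(\alpha))-1\bigr]\,d\alpha, \qquad \partial_\lambda E(\lambda;t)=\frac{2}{\lambda}\int_0^1 h(\lambda d(\alpha))\,d\alpha,
\]
where $F$ is the CDF of $\phi$ and $h(y):=y\phi(y)$. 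Third, the identity $h'(y)=\phi(y)(1-\eta(y))$ shows that condition (3) is equivalent to $h$ being unimodal with maximum $h_{\max}:=h(y_0)$ attained at $y_0:=\eta^{-1}(1)$, together with log-concavity in $\log y$ on the decreasing tail. Consequently $\partial_\lambda E(\lambda;t)\le 2h_{\max}/\lambda$ for every symmetric decreasing $t$, with equality attained by the cutoff $t_{y_0/\lambda}$.

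The principal's optimal cutoff is then derived by implicit differentiation of the agent's induced precision $\lambda^{\#}(d)$: the principal's first-order condition reduces to $\eta(\lambda^{\#} d^*)=1$, hence $\lambda^{\#} d^*=y_0$ and, when $c$ is differentiable, $\lambda^{\#}c'(\lambda^{\#})=2h_{\max}$. For a general symmetric decreasing $t$, integrating the agent's IC over $[\mu,\lambda^*(t)]$ yields $c(\lambda^*(t))-c(\mu)\le 2h_{\max}\log\bigl(\lambda^*(t)/\mu\bigr)$ for $\mu\le\lambda^*(t)$---a constraint that the cutoff $t_{d^*}$ saturates. A monotone-comparative-statics argument---choosing $\bar d$ so the cutoff's expected value matches $E(\lambda^*(t);t)$ and invoking the Jensen-type inequality $h(\bar y)\ge \int h(y_\alpha)\,d\alpha$ given $F(\bar y)=\int F(y_\alpha)\,d\alpha$ (which follows from the elasticity condition in the range relevant to the principal's optimum)---then shows that the cutoff induces precision at least $\lambda^*(t)$. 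Limited liability, limited budget, and IR transfer directly, and truthful reporting is immediate for cutoffs by symmetry and unimodality of both the transfer and the posterior density.

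For necessity, I argue by contrapositive: if $\phi$ fails increasing elasticity above $1$, then $h$ lacks the unimodality-plus-tail-log-concavity structure, so the Jensen-type inequality can be violated. One can then find $y_1<y_2$ and a weight $w\in(0,1)$ such that $w h(y_1)+(1-w) h(y_2)>h(\bar y)$ at the $\bar y$ satisfying $wF(y_1)+(1-w)F(y_2)=F(\bar y)$. Translated via the layer-cake representation, the mixed transfer $w t_{y_1/\lambda_0}+(1-w) t_{y_2/\lambda_0}$ has a strictly steeper expected-value curve than any cutoff sharing its value at a target $\lambda_0$. Calibrating an increasing, differentiable cost function $c$ whose marginal rate $\lambda c'(\lambda)$ is tangent exactly in this region yields an instance in which the mixed transfer strictly outperforms every cutoff, so no cutoff is optimal. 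The most delicate step is the monotone-comparative-statics argument in sufficiency: the Jensen-type comparison demands the condition only within the range of $y$'s induced by the principal's cutoff $d^*$, and verifying that the relevant $y$'s fall in the ``nice'' region (where $\eta\ge 1$ and non-decreasing) requires careful localization rather than appeal to global monotonicity of $\eta$. The necessity construction is similarly subtle: one must ensure the mixed transfer induces a global (not merely local) optimum for the agent under an increasing, differentiable cost function, and that this optimum strictly exceeds what any cutoff can implement.
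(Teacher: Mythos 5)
Your layer-cake decomposition, the formula $\partial_\lambda E(\lambda;t)=\frac{2}{\lambda}\int_0^1 h(\lambda d(\alpha))\,d\alpha$ with $h(y)=y\phi(y)$, and the observation that $h'(y)=\phi(y)(1-\eta(y))$ are all correct and dovetail with the paper's Lemma~\ref{lem: cross derivative}. But there are two genuine gaps in the sufficiency direction that prevent the argument from closing.

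First, the reduction via Hardy--Littlewood rearrangement does not achieve what you need. Rearrangement gives pointwise domination $E(\lambda;t^\ast)\ge E(\lambda;t)$ for every $\lambda$, but pointwise domination of the agent's value function does \emph{not} imply $\bm{\lambda}(t^\ast)\ge\bm{\lambda}(t)$: a transfer that is uniformly more generous can induce weakly \emph{lower} precision (the agent simply pockets the extra expected payment). What the argument actually requires is a symmetric transfer whose (truthful-reporting) expected value is \emph{weakly below} $E(\cdot;t)$ everywhere and \emph{equal at $\bm{\lambda}(t)$}, so that $\bm{\lambda}(t)$ remains the maximal argmax. The paper achieves this by translating $t$ so truthful reporting is optimal at $\bm{\lambda}(t)$ and then \emph{averaging} $t^\ast(x)$ with $t^\ast(-x)$ --- an operation that preserves $\hat E(\cdot;\cdot)$ exactly, unlike rearrangement, which inflates it.

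Second, the Jensen-type inequality $h(\bar y)\ge\int h(y_\alpha)\,d\alpha$ given $\Phi(\bar y)=\int\Phi(y_\alpha)\,d\alpha$ requires $h\circ\Phi^{-1}$ to be concave, and under condition~(3) this holds only on $\{y\ge\eta^{-1}(1)\}$ (where $\eta$ is weakly increasing). A general symmetric decreasing transfer has layers $y_\alpha=\lambda^*(t)d(\alpha)$ that can lie well below $\eta^{-1}(1)$, where $h\circ\Phi^{-1}$ can be convex, and then the inequality can genuinely fail (one can construct $\phi$ flat near $0$ for which it does). You flag ``careful localization'' as the delicate point, but acknowledging the gap is not the same as filling it. The paper's resolution is a separate Topkis-type step (Lemma~\ref{lem: aux fix}): first augment the symmetric transfer by setting it equal to $1$ on the inner region $\{\lvert x\rvert<\eta^{-1}(1)/\bm\lambda(t)\}$, which weakly increases induced precision because cutoff and precision are complements there; only after this augmentation do all remaining layers sit in the region where the monotone-likelihood-ratio comparison (Lemma~\ref{lem: ratio change}) gives the required single-crossing. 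Moreover, even with the Jensen inequality in hand, a higher marginal at $\lambda^*(t)$ alone does not establish $\bm{\lambda}(\bar d)\ge\lambda^*(t)$; one needs the global single-crossing statement $E(\mu;\bar d)\le E(\mu;t)$ for all $\mu<\lambda^*(t)$, which is exactly what Lemma~\ref{lem: ratio change} delivers and what your integrated-IC bound does not substitute for. Your necessity sketch is structurally the right idea (a perturbation exploiting non-monotone $\eta$, as in the paper's Example~\ref{counterexample} and Lemma~\ref{lem: monotone}), but it inherits the same localization issue and also does not verify that the perturbed transfer still elicits truthful reporting --- a check the paper carries out by controlling $\delta_1,\delta_2$ using $\eta(\lambda^*d^*)\ge 1$.
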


 Note that Theorem~\ref{thm: step} holds even if we assume that all cost functions are differentiable and increasing; that is, the result is not driven by considering ill behaved cost functions.

Once we verify the condition in statement (3), then we do not need to worry about the infinite dimensional optimization problem, as cutoff transfers shall be optimal. Moreover, the condition is very easy to check. We only need to compute the elasticity and check if it is monotone in some region. A direct implication of Theorem 1 is that for all cost functions, there exists an optimal transfer that is a cutoff transfer when the signal distribution is Gaussian, since the elasticity of a Gaussian distribution is increasing. The lower bound $1$ in the statement is  the dimension of the problem. So far both the state and the report are one-dimensional. In Section \ref{sec: multidimensional}, where I generalize this result to $n$-dimensional signal and states, the dimension $n$ will replace $1$.

 Note that there is a unique solution---which must be a cutoff transfer---if we additionally impose three mild assumptions.\footnote{I prove this uniqueness result at the end of the Proof of Theorem \ref{thm: step}.}
  \begin{enumerate}
        \item Signal density function $\phi$ satisfies strictly increasing elasticity above\footnote{Density function $\phi$ satisfies \textit{strictly increasing elasticity above 1} if $\eta (\cdot)$ single-crosses $1$ from below, $\{x|\eta(x)=1\}$ is a singleton, and is strictly increasing after the cross.} $1$.
        \item The cost function is continuously differentiable.
        \item The optimal precision is an interior solution.
    \end{enumerate}

In the following, I shall first illustrate the sufficiency part (i.e., why (3) implies (1)). Second, I optimize over cutoff transfers and characterize the optimal cutoff. Third, I provide intuition for the necessity part (i.e., why (2) implies (3)).

I define the agent's expected payoff given transfer $t$ as
 $$\pi(t)= \max_\lambda E(\lambda;t)-c(\lambda)$$
 and the agent's choice of precision given $t$
 \[\bm{\lambda}(t)=\begin{cases}
 \max [\arg\max_\lambda E(\lambda;t)-c(\lambda)] &\text{if } \pi(t)\geq 0,\\
 0 &\text{otherwise.}\end{cases}\]
 When analyzing cutoff transfers, with slight abuse of notations, I use $E(\lambda; d)$ to represent the expected transfer at precision $\lambda$ given the cutoff transfer with cutoff $d$, i.e., $E(\lambda; d)= E(\lambda; \textbf{1}_{|\theta-a|\leq d})$. Similarly, I denote by $\pi(d)$ and $\bm{\lambda}(d)$ the expected payoff and choice of precision for the cutoff transfer $d$, respectively.

\subsection{Cutoff Transfers Are Optimal}\label{subsec: sufficiency}

 I explain why increasing elasticity above 1 implies that for all cost functions, cutoff transfers are optimal. Suppose that increasing elasticity above 1 holds. I show that for any transfer rule $t$, there exists a cutoff transfer that induces a weakly larger precision and truthful report. 
 
 Here, for ease of exposition, I assume that $\eta$ is weakly increasing and consider a transfer rule $t$ that is symmetric in $\theta-a$ and weakly decreases in $|\theta-a|$. Given such a transfer rule, the agent reports truthfully and chooses the precision $\bm\lambda(t)$. I construct a cutoff transfer by choosing a cutoff $d$ such that
\begin{equation}\label{eq: equal}
E(\bm\lambda(t);d)=E(\bm\lambda(t);t).
\end{equation}
Such a $d$ exists because $E(\bm{\lambda}(t);d)$ increases in $d$ from $0$ to $1$ (as the budget is 1). See Figure \ref{pic: Delta t} for an example.
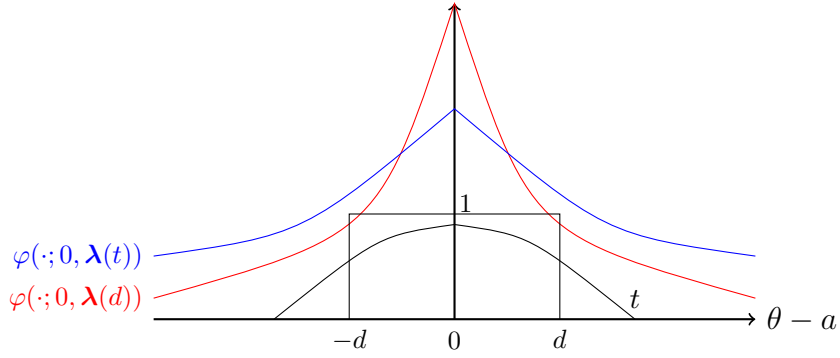
\begin{figure}
    \centering
    \begin{tikzpicture}[xscale=1, yscale=1.4]
         \draw[thick, ->] (0,0) -- (0,3);
        \draw[thick, ->] (-4,0) -- (4,0) node[right]{ $\theta-a$};
        \draw (2.4,0)node[above]{\small $t$}  .. controls (1,0.8) .. (0,0.9)..controls (-1,0.8)..(-2.4,0);
        \draw[red] (4,0.2)   .. controls (1,0.8) .. (0,3)..controls (-1,0.8)..(-4,0.2)node[left]{\footnotesize $\varphi(\cdot;0,\bm{\lambda}(d))$};
        \draw[blue] (4,0.6)  .. controls (2,0.8) .. (0,2)..controls (-2,0.8)..(-4,0.6)node[left]{\footnotesize $\varphi(\cdot;0,\bm\lambda(t))$};
        \node at (0,-0.2) {\footnotesize 0};
        \draw (-1.4,0)node[below]  {\footnotesize $-d$} -- (-1.4,1) -- (1.4,1) -- (1.4,0)node[below] {\footnotesize $d$};
        \node at (0.15,1.1) {\footnotesize 1};
    \end{tikzpicture}
    \caption{The Transfer $t$ and The Cutoff Transfer $d$. The given transfer $t$ and the matching cutoff transfer $d$ are shown in black. The wider distribution shown in blue is the signal distribution chosen by the agent under $t$, while the narrower (more precise) distribution in red is the one chosen under $d$. } \label{pic: Delta t}
\end{figure}

Next, I argue that 
\begin{align}
\label{eq:E-d-monotone}
    E(\lambda;d)-E(\lambda;t) \geq 0 &\text{ for } \lambda > \bm{\lambda}(t) \nonumber\\
    E(\lambda;d)-E(\lambda;t) \leq 0 &\text{ for } \lambda < \bm{\lambda}(t).
\end{align}
This is depicted in Figure~\ref{pic: illustration}.

To see this, note that as the cutoff transfer pays the entire budget $1$ when $|\theta-a|\leq d$, it is larger than the transfer $t$ within the cutoff region. In addition, Lemma \ref{lem: ratio}  shows that, since the elasticity of $\phi$ is weakly increasing,  for any $0<x_1<d<x_2$ the ratio 
$$\frac{\varphi(x_1;0,\lambda)}{\varphi(x_2;0,\lambda)}$$
increases in $\lambda$. Since $E(\lambda;t)$ is the integral of the transfer with respect to the signal distribution, \eqref{eq:E-d-monotone} follows. As a result, we have $\bm\lambda(d)\geq \bm\lambda (t)$ (see Figure \ref{pic: illustration}).
\begin{figure}[htp]
    \centering
    \begin{tikzpicture}[scale=0.8]
        \draw[thick, ->] (0,0) -- (0,6) node[left]{$E$};
        \draw[thick, ->] (0,0) -- (9,0) node[right]{ $\lambda$};
        \draw (7, 4)node[right] {$c(\lambda)$} .. controls (5,1) .. (1,0.3) ;
        \node[below,color=red] at (4.2,0) {$\bm{\lambda}(t)$};
        \draw[dotted] (4.2,0) -- (4.2,4.1);
        \draw[color=red] (1.5,0.3) .. controls (4,4.5) .. (7,5)node[right]{\footnotesize $E(\lambda; t)$} ;
        
        \draw (2,0.2) .. controls (4.5,5) .. (7,6)node[right]{\footnotesize $ E(\lambda; d)$} ;
        \fill (4.2,4.1) circle (0.05);
    \end{tikzpicture}
    \caption{\small Expected Transfer.} \label{pic: illustration}
\end{figure}
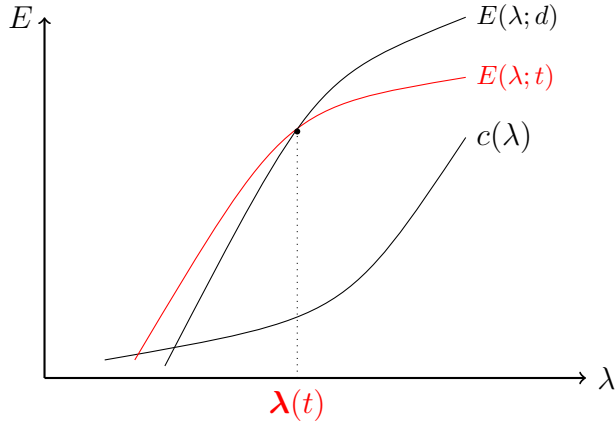

In the argument above, I use a stronger condition, global increasing elasticity, to show that cutoff transfers are optimal. By Lemma~\ref{lem: ratio},  global increasing elasticity is equivalent to the monotone ratio likelihood property:
$$\text{For all }0<x_1<x_2,\quad \frac{\varphi(x_1;0,\lambda)}{\varphi(x_2;0,\lambda)} \text{ is increasing in }\lambda.$$ Thus, this property is a sufficient condition for cutoff transfers to be optimal.
\begin{col}\label{col: monotone likelihood sufficient}
    If for all $x_2>x_1>0$, $\frac{\varphi(x_1;0,\lambda)}{\varphi(x_2;0,\lambda)}$
increases in $\lambda$, there exists an optimal transfer that is a cutoff transfer.
\end{col}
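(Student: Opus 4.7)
The plan is to derive the corollary directly from Theorem~\ref{thm: step} by recognizing that the hypothesized monotone likelihood ratio property is simply a global strengthening of condition~(3) in that theorem, once translated into the language of elasticity through Lemma~\ref{lem: ratio}.

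First, I would invoke Lemma~\ref{lem: ratio}. Its second part states that, for any $0 \le x_1 \le x_2$, the ratio $\varphi(x_1;0,\lambda)/\varphi(x_2;0,\lambda)$ is increasing in $\lambda$ if and only if $\eta(\lambda x_1) \le \eta(\lambda x_2)$. Consequently, the hypothesis of the corollary---which requires this monotonicity for every $0 < x_1 < x_2$ and every $\lambda > 0$---is equivalent, after the substitution $y_i = \lambda x_i$, to $\eta$ being weakly increasing on $(0, \infty)$.

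Second, I would verify that global weak monotonicity of $\eta$ implies increasing elasticity above $1$ in the sense of the definition preceding Theorem~\ref{thm: step}. This is immediate: if $\eta(x) > 1$, then for any $y > x$ global monotonicity yields $\eta(y) \ge \eta(x) > 1$, so the single-crossing-and-monotone-afterward condition holds trivially. Thus condition~(3) of Theorem~\ref{thm: step} is satisfied, and invoking the implication (3)~$\Rightarrow$~(1) of that theorem delivers an optimal transfer that is a cutoff transfer.

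There is no substantive obstacle; the corollary is essentially a dictionary entry between the familiar monotone likelihood ratio property and the elasticity condition that drives Theorem~\ref{thm: step}. The only steps requiring attention are making sure the direction of Lemma~\ref{lem: ratio} being applied matches the direction of the hypothesis (monotonicity of the ratio in $\lambda$ corresponds to monotonicity, not reverse monotonicity, of $\eta$), and that the universal quantifier over $\lambda$ in the hypothesis translates cleanly, via the change of variables, into a quantifier-free monotonicity statement about $\eta$ on $(0, \infty)$.
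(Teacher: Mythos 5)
Your proof is correct and takes essentially the same route as the paper: both translate the monotone likelihood ratio hypothesis into global weak monotonicity of $\eta$ via Lemma~\ref{lem: ratio}, observe that this trivially implies increasing elasticity above $1$, and then conclude by Theorem~\ref{thm: step} (the paper presents the corollary as an immediate byproduct of the discussion surrounding the sketch of (3)~$\Rightarrow$~(1)).
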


 To complete the proof of sufficiency in the general setting, there are two more technical issues. First, for a general transfer $t$ (which is not necessarily symmetric), the agent's report depends on both the signal and the precision. But in our previous example (Figure \ref{pic: illustration}), the agent's report equals the signal and does not depend on the precision. Second, the proof sketch above assumed globally increasing elasticity, which is stronger than increasing elasticity above $1$. Under this weaker assumption, we do not have for all $x_2>x_1>0$, $\frac{\varphi(x_1;0,\lambda)}{\varphi(x_2;0,\lambda)}$
increases in $\lambda$. The proof in the appendix deals with both issues.

\subsection{Optimization of Cutoff Transfers}
When $\phi$ satisfies increasing elasticity above $1$, Theorem \ref{thm: step} reduces the infinite dimensional optimization problem to a one-dimensional problem of finding the value of the optimal cutoff. In this section, I will solve this one-dimensional problem and identify the optimal cutoff.

As the cutoff $d$ increases, the agent's expected transfer $E(\cdot; d)$ increases. Let $\bar d$ denote the minimum cutoff such that IR constraint holds, i.e.,
$$\bar d= \min \{d\geq 0| \pi(d)\geq 0\}.$$
Notice that $\bar d$ depends on the cost function $c$. The continuity of $E(\cdot; d)$ for all $d$ and the lower semicontinuity of $c$ ensures that this minimum is well defined.

\begin{thm}\label{thm: optimal step function}
If $\phi$ satisfies increasing elasticity above $1$, the cutoff transfer $$d^*= \min \{d\geq \bar d| \bm{\lambda}(d) d\geq \eta ^{-1}(1)\}$$ is optimal.  
\end{thm}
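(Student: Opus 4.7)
By Theorem~\ref{thm: step}, the principal's problem reduces to maximizing $\bm\lambda(d)$ over $d\ge\bar d$. For a cutoff transfer with cutoff $d$ the agent's expected transfer is $E(\lambda;d)=F(\lambda d)$, where $F(y):=2\int_0^y\phi(u)\,du$ is the CDF of $|\varepsilon|$. The key object is the marginal-benefit integrand
\[
r(\lambda,x):=\lambda F'(\lambda x)=2\lambda\phi(\lambda x),\qquad \frac{\partial r}{\partial\lambda}(\lambda,x)=F'(\lambda x)\bigl[1-\eta(\lambda x)\bigr],
\]
which, by increasing elasticity above $1$, is weakly positive when $\lambda x\le y^*:=\eta^{-1}(1)$ and weakly negative when $\lambda x\ge y^*$. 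The plan is to show $\bm\lambda(d^*)\ge\bm\lambda(d)$ for every $d\ge\bar d$ by contradiction, in two cases depending on whether $d>d^*$ or $d<d^*$. Both cases begin by summing the revealed-preference inequalities at $d$ and $d^*$, which after rearrangement yields
\[
F(\bm\lambda(d^*)d^*)-F(\bm\lambda(d^*)d)\ \ge\ F(\bm\lambda(d)d^*)-F(\bm\lambda(d)d).\quad(\star)
\]

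\textbf{Case $d>d^*$.} Suppose $\bm\lambda(d)>\bm\lambda(d^*)$. Writing each side of $(\star)$ as an integral of $r(\cdot,x)$ over $[d^*,d]$ gives $\int_{d^*}^{d}\bigl[r(\bm\lambda(d),x)-r(\bm\lambda(d^*),x)\bigr]\,dx\ge 0$. For any $x\in[d^*,d]$ and any $\lambda\in[\bm\lambda(d^*),\bm\lambda(d)]$ one has $\lambda x\ge\bm\lambda(d^*)d^*\ge y^*$, so $r(\cdot,x)$ is weakly decreasing on that range, making the integrand nonpositive---a contradiction except in degenerate cases in which $r$ is constant, where $\bm\lambda(d)=\bm\lambda(d^*)$ can be chosen as an equivalent optimum.

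\textbf{Case $d<d^*$.} Suppose $\bm\lambda(d)>\bm\lambda(d^*)$. By minimality of $d^*$, $\bm\lambda(d)d<y^*$; combined with $\bm\lambda(d)>\bm\lambda(d^*)\ge y^*/d^*$, the point $d'':=y^*/\bm\lambda(d)$ lies in $(d,d^*)$ and satisfies $\bm\lambda(d)d''=y^*$. I will establish the subclaim: \emph{if $d_1<d_2$ and $\bm\lambda(d_1)d_2\le y^*$, then $\bm\lambda(d_2)\ge\bm\lambda(d_1)$.} This follows by the same revealed-preference argument: were $\bm\lambda(d_2)<\bm\lambda(d_1)$, the analogue of $(\star)$ for $(d_1,d_2)$ gives $\int_{d_1}^{d_2}\bigl[r(\bm\lambda(d_2),x)-r(\bm\lambda(d_1),x)\bigr]\,dx\ge 0$, but under the hypothesis all products $\lambda x$ in the relevant rectangle lie in $[0,y^*]$, where $r(\cdot,x)$ is weakly increasing, forcing the integrand to be nonpositive. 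Applying the subclaim with $d_1=d$, $d_2=d''$ yields $\bm\lambda(d'')\ge\bm\lambda(d)$, hence $\bm\lambda(d'')d''\ge y^*$; but then $d''\in\{d'\ge\bar d:\bm\lambda(d')d'\ge y^*\}$ with $d''<d^*$, contradicting the minimality of $d^*$. Finally, $d^*\ge\bar d$ holds by construction, IR is inherited since $\pi(\cdot)$ is weakly increasing in $d$, and cutoff transfers induce truthful reporting because they are symmetric and single-peaked in $\theta-a$.

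The main obstacle is the case $d<d^*$: the relevant products $\lambda x$ straddle $y^*$, so the cross partial of $F(\lambda d)$ has no uniform sign on the rectangle $[\bm\lambda(d^*),\bm\lambda(d)]\times[d,d^*]$ and a direct revealed-preference comparison between $d$ and $d^*$ is indecisive. The auxiliary cutoff $d''$ is chosen precisely so that the comparison is restricted to the region $\lambda x\le y^*$ where the monotonicity of $r$ in $\lambda$ is uniform, and the minimality in the definition of $d^*$ then closes the loop.
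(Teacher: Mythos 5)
Your argument is essentially correct and the underlying mechanism is the same as the paper's---monotone comparative statics driven by the sign of $\partial^2 E/\partial\lambda\partial d$, which flips at $\lambda d=\eta^{-1}(1)$---but you execute it differently. The paper splits into cases according to whether $\bm\lambda(\bar d)\bar d$ is above or below $\eta^{-1}(1)$ and invokes Topkis' theorem on the sub- and supermodular regions; you instead split on whether $d$ is above or below $d^*$ and write out the revealed-preference (rearrangement) inequality $(\star)$ directly. The genuine novelty in your write-up is the auxiliary cutoff $d'':=\eta^{-1}(1)/\bm\lambda(d)$ in the case $d<d^*$: since the region $\{(\lambda,x):\lambda x\le \eta^{-1}(1)\}$ is not a rectangle, a direct Topkis-type comparison between $d$ and $d^*$ is indecisive, and $d''$ is chosen precisely so that the relevant rectangle $[\bm\lambda(d^*),\bm\lambda(d)]\times[d,d'']$ sits entirely in the supermodular region. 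This is cleaner and more self-contained than the paper's terse assertion that $\bm\lambda(d)$ is increasing on $[\bar d,d^*]$, which sweeps the same non-rectangle subtlety under the rug.

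Two small points to tighten. First, the degenerate case in ``Case $d>d^*$'' is phrased loosely: the correct closing move is that if $(\star)$ holds with equality then both revealed-preference inequalities are tight, so $\bm\lambda(d)\in\arg\max_\lambda E(\lambda;d^*)-c(\lambda)$, and since $\bm\lambda(d^*)$ is by definition the \emph{largest} element of that argmax, $\bm\lambda(d^*)\ge\bm\lambda(d)$---a contradiction, not an ``equivalent optimum.'' (The same fix applies inside your subclaim.) Second, you do not address the well-definedness of $d^*$, i.e., that $\{d\ge\bar d:\bm\lambda(d)d\ge\eta^{-1}(1)\}$ is nonempty and contains its infimum; the paper notes this explicitly (via $\eta^{-1}(1)<\infty$ and $\bm\lambda(d)$ being bounded below by $\bm\lambda(\bar d)>0$ on $[\bar d,d^*]$, so $\bm\lambda(d)d\to\infty$), and your proof should as well, since your entire argument presupposes $d^*$ is a finite number.
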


\begin{figure}[htp]
    \centering
    \begin{tikzpicture}[scale=1]
        \draw[thick, ->] (0,0) -- (0,2.4) node[above]{\small $\varphi$};
        \draw[thick, ->] (-3,0) -- (3,0) node[right]{ $\theta-a$};
       
        \draw (3,0.66)  .. controls (2,0.8) .. (0,2)..controls (-2,0.8)..(-3,0.66);
        \fill[red, opacity=0.2] (0,2) -- (1.2,1.3) -- (1.2,0)-- (0,0) -- cycle;
        \fill[blue] (1.2,1.3) --(1.4,1.2) -- (1.4,0) -- (1.2,0)-- cycle;
        \fill[red, opacity=0.2] (0,2) -- (-1.2,1.3) -- (-1.2,0)-- (0,0) -- cycle;
        \fill[blue] (-1.2,1.3) --(-1.4,1.2) -- (-1.4,0) -- (-1.2,0)-- cycle;
        \node at (0,-0.2) {\footnotesize 0};
        \node at (1.5,1.4) {\footnotesize $1/x$};
        \node[below] at (1.2,0) {\footnotesize $d$};
        \node[below] at (-1.2,0) {\footnotesize $d$};
        \node[color=red] at (1,2)  {\footnotesize $E(\lambda;d)$};
    \end{tikzpicture}
    \caption{\small Increment of Expected Transfer When Increasing $\lambda$.\\\small  The agent slightly increases precision from $\lambda$ to $\lambda+\Delta\lambda$. The area of the red region is the expected transfer $E(\lambda;d)$.
    The area of two blue regions is the increment of probability that the signal lies into the cutoff.} \label{pic: eta }
\end{figure}

To obtain this result, the following lemma is crucial.
\begin{lem}[Complements or Substitutes]\label{lem: cross derivative} The expected transfer $E(\lambda;d)$ satisfies:
    $$\frac{\partial^2 E(\lambda; d)}{\partial \lambda\partial d}\geq 0\quad \Leftrightarrow \quad \eta (\lambda d)\leq 1.$$
\end{lem}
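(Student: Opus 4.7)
The plan is to write $E(\lambda;d)$ in a closed form using the CDF of the standardized distribution, and then reduce the cross-partial to a single expression involving $\eta(\lambda d)$ via the identity $\phi'(x) = -\eta(x)\phi(x)/x$. The key observation is that because the cutoff transfer is symmetric and single-peaked in $\theta-a$, truth-telling is optimal for the agent, so conditional on any realized signal the agent reports that signal and the probability of receiving the prize is simply $\Pr(|s-\theta|\le d)$ under the chosen precision.

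Concretely, I would first write
\[
E(\lambda;d) \;=\; \int_{-d}^{d} \lambda\,\phi(\lambda x)\,dx \;=\; \int_{-\lambda d}^{\lambda d} \phi(y)\,dy \;=\; 2\Phi(\lambda d)-1,
\]
where $\Phi$ is the CDF of $\phi$ (using symmetry of $\phi$). Differentiating in $d$ gives $\partial_d E = 2\lambda\,\phi(\lambda d)$. Then differentiating in $\lambda$:
\[
\frac{\partial^2 E(\lambda;d)}{\partial\lambda\,\partial d}
\;=\; 2\phi(\lambda d) \;+\; 2\lambda d\,\phi'(\lambda d).
\]

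Next I would plug in the definition of elasticity. Since $\eta(x)=-x\phi'(x)/\phi(x)$ whenever $\phi(x)>0$, we get $\lambda d\,\phi'(\lambda d) = -\eta(\lambda d)\,\phi(\lambda d)$, and therefore
\[
\frac{\partial^2 E(\lambda;d)}{\partial\lambda\,\partial d}
\;=\; 2\phi(\lambda d)\bigl(1-\eta(\lambda d)\bigr).
\]
Since $\phi(\lambda d)\ge 0$, the sign of the cross-partial equals the sign of $1-\eta(\lambda d)$, which proves the equivalence. The case where $\phi(\lambda d)=0$ is handled by the convention $\eta(\lambda d)=+\infty$: both $\partial_d E$ and the cross-partial vanish there while $\eta(\lambda d)>1$, which is consistent with the ``$\Leftarrow$'' direction being trivially satisfied; one should note that in the vanishing region the equivalence is vacuous and only the open set where $\phi(\lambda d)>0$ carries content.

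There is no real obstacle; the only thing worth flagging is the symmetry/truth-telling step that justifies the closed form for $E(\lambda;d)$, and the mild care needed at points of the support where $\phi$ vanishes. Everything else is a one-line differentiation followed by a substitution of the elasticity identity.
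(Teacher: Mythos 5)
Your proof is correct and follows essentially the same route as the paper: both compute $E(\lambda;d)=2\Phi(\lambda d)-1$, take the mixed partial (you differentiate in $d$ first, the paper in $\lambda$ first, which is immaterial), and substitute the elasticity identity to obtain $2\phi(\lambda d)\bigl(1-\eta(\lambda d)\bigr)$. The closing remark about the boundary case $\phi(\lambda d)=0$ is a reasonable bit of extra care that the paper leaves implicit.
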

This lemma characterizes whether the precision and the cutoff are complements or substitutes. When the cross derivative $\frac{\partial^2 E(\lambda; d)}{\partial \lambda\partial d}$ is positive, increasing the cutoff would increase the marginal return to the precision; thus they are complements. On the other hand, when the cross derivative is negative, the cutoff and the precision are substitutes.

I illustrate the intuition of the boundary case when $\eta (\lambda d)=1$ and show that  $$\frac{\partial^2 E(\lambda; d)}{\partial \lambda\partial d}=0.$$ 
I plot the agent's the expected transfer in Figure \ref{pic: eta }. Since the agent receives transfer $1$ when the signal lies in the cutoff, the expected transfer $E(\lambda;d)$ is the probability that the signal lies in the cutoff, which is the area of the light red region. Suppose that the agent slightly increases precision from $\lambda$ to $\lambda+\Delta\lambda$. The signal shall lie in the cutoff with a higher probability. The area of dark blue region is the incremental probability that the signal lies in the cutoff. Since $\eta (\lambda d)=1$, in a neighborhood of $ d$, $\varphi(\cdot;0,\lambda )$ behaves like $x^{-1}$. We can compute the incremental expected transfer, which coincides with the incremental probability, by the area of the dark blue region 
$$E(\lambda+\Delta\lambda;d)-E(\lambda;d)\approx 2 \varphi(d;0,\lambda )\left(d-d\frac{\lambda}{\lambda+\Delta\lambda}\right)$$
which is independent of $d$ as $\varphi(\cdot;0,\lambda )$ decays at $x^{-1}$. As a result, $\frac{\partial^2 E(\lambda; d)}{\partial \lambda\partial d}=0$. A similar argument shows that $\frac{\partial^2 E(\lambda; d)}{\partial \lambda\partial d}$ and $\eta (\lambda d)-1$ have the opposite sign. Here $1$ appears as this problem is one dimensional, which is also why ``increasing elasticity above 1" appears in Theorem \ref{thm: step} and $\eta^{-1}(1)$ appears in Theorem \ref{thm: optimal step function}. Later on when I generalize this problem to $n$-dimensional, $n$ shows up in both characterizations.

\begin{figure}[htp]
    \centering
    \begin{tikzpicture}[scale=0.7]
        \draw[thick, ->] (0,0) -- (0,6) node[left]{$E$};
        \draw[thick, ->] (0,0) -- (9,0) node[right]{ $\lambda$};
        \draw (1,0.2) .. controls (4.2,4.4) .. (7,5)node[right]{\footnotesize $ E(\lambda; d)$} ;
        \draw[dashed] (0,2.5) -- (7,2.5)node[right]{\footnotesize $\lambda d=\eta ^{-1}(1)$};
        \node[below] at (5,2.5) {Complement};
        \node[below] at (5,3.5) {Substitute};
    \end{tikzpicture}
    \caption{\small Expected Transfer} \label{pic Substitute Complement}
\end{figure}
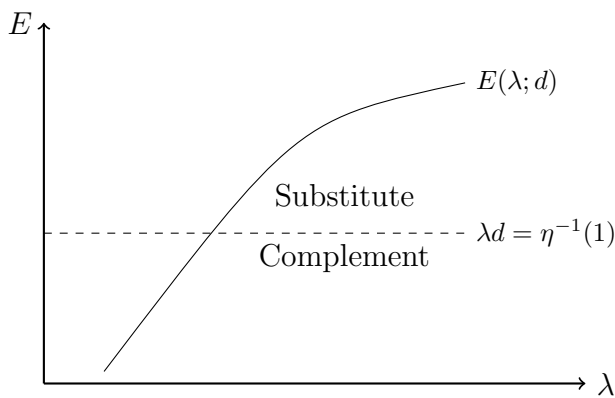

As the expected transfer $E(\lambda;d)$ is the probability that a signal lies within the cutoff region, we have $E(\lambda;d)=2\Phi(\lambda d)-1$ where $\Phi$ is the CDF of the distribution $\phi$. Consequently, $E(\lambda;d)$ is increasing in $\lambda d$. Since $\frac{\partial^2 E(\lambda; d)}{\partial \lambda\partial d}$ and $\eta (\lambda d)-1$ have the opposite sign, by the definition of $\eta ^{-1}(1)$, we have
$$\frac{\partial^2 E(\lambda; d)}{\partial \lambda\partial d}\geq 0\quad\text{if}\quad \lambda d\leq \eta ^{-1}(1)$$
$$\frac{\partial^2 E(\lambda; d)}{\partial \lambda\partial d}\leq 0\quad\text{if}\quad \lambda d\geq \eta ^{-1}(1).$$
Thus, the graph of expected transfer is separated into two parts (see Figure \ref{pic Substitute Complement}). When $ \lambda d\geq \eta ^{-1}(1)$, $\frac{\partial^2 E(\lambda; d)}{\partial \lambda\partial d}\leq 0$, I refer to this part as the \textit{substitute region}. When $ \lambda d< \eta ^{-1}(1)$, $\frac{\partial^2 E(\lambda; d)}{\partial \lambda\partial d}>0$, I refer to this part as the \textit{complement region}.

Recall that $\bar d$ denotes the minimum cutoff that the agent accepts the contract. If $(\bar d,\bm\lambda(\bar d))$ lies in the substitute region, increasing $d$ shall reduce $\bm\lambda(d)$, which is not optimal for the principal. Consequently, she should set the cutoff at the minimum $\bar d$. In this case, the IR constraint is binding and the agent's surplus is zero. If $(\bar d,\bm\lambda(\bar d))$ lies in the complement region, increasing $d$ shall first increase $\bm\lambda(d)$, and $\bm\lambda(d) d$ shall increase until it hits the boundary $\eta ^{-1}(1)$. After hitting the boundary, if we further increase $d$, then $\bm\lambda(d)$ shall decrease as $(d, \bm\lambda(d))$ enters the substitute region. Thus, the optimal cutoff is the $d$ where $\bm\lambda(d) d$ first hits the boundary $\eta ^{-1}(1)$. In this case, the IR constraint is relaxed and the agent enjoys some surplus.

Note that although optimal precision $\lambda^*$ is unique by design, the optimal cutoff transfer is not necessarily unique. To see this, recall that changing $d$ causes the curve $E(\cdot;d)$ to rotate. If the cost function $c$ has a kink and $\bm\lambda(d)$ is stuck at this kink, rotating $E(\cdot;d)$ slightly might not affect $\bm\lambda(d)$, which gives rise to multiple optimal cutoff transfers. The cutoff $d^*$ in Theorem \ref{thm: optimal step function} is the optimal cutoff transfer that provides the strongest local incentive around $\lambda^*$, i.e., it has the largest derivative $\frac{\partial E(\lambda;d)}{\partial \lambda}\big|_{\lambda=\lambda^*}$ among all optimal cutoff transfers. But this example is not generic due to the kink in the cost function. This is also why a continuous differentiable cost function helps us to get a unique optimal transfer rule.

\subsection{The Necessity of Increasing Elasticity above 1}
In this section, I explain the intuition behind the necessity result---(2) implies (3)---in Theorem~\ref{thm: step}. That is, why increasing elasticity above 1 is necessary for a cutoff transfer to always be optimal.

I first provide an example where $\phi$ does not satisfy increasing elasticity above $1$. In this case, for some increasing and differentiable cost function, all cutoff transfers are suboptimal.
\begin{exam}\label{counterexample}
The standardized signal distribution is truncated $\exp(1/x)$:
\[\phi(x)=\begin{cases}
k\exp(1/\epsilon), & \text{ if } x\in[0,\epsilon),\\
k\exp(1/x), & \text{ if } x\in[\epsilon,1],\\
0, & \text{ otherwise,}\end{cases}\]
where $k$ is a normalizing factor (see the left panel of Figure~\ref{fig: exam distribution}). I plot the elasticity of $\phi$ on the right panel of Figure \ref{fig: exam distribution}. The signal distribution does not satisfy increasing elasticity above $1$ as $\eta(x)$ is decreasing for $x\in[\epsilon,1]$. Fix a pair $(\lambda^*,d^*)$ such that $\epsilon<\lambda^* d^*<1$. Suppose an increasing cost function $c\in C^1$ is tangent to $E(\cdot;d^*)$ at $\lambda=\lambda^*$ and $c(\lambda)$ is strictly above $E(\lambda;d^*)$ for all $\lambda\neq \lambda^*$ (see the right panel of Figure \ref{fig:exam transfer}).
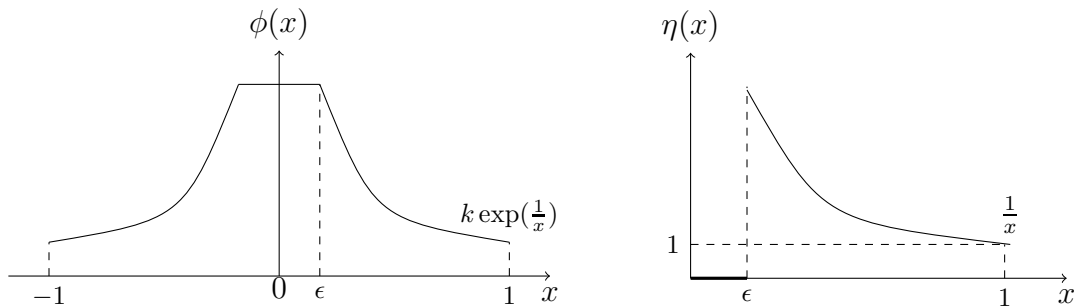
\begin{figure}[!htb]
    \begin{minipage}[b]{0.5\linewidth}
    \centering
    \begin{tikzpicture}[xscale=1.8, yscale=1.5]
    \draw[->](-2,0)--(2,0)node[left,below]{$x$};
    \draw[->](0,0)--(0,2)node[above]{$\phi(x)$};
    \draw (-0.3,1.7)--(0.3,1.7) ;
    \node (O) at (0,-0.1) {0};
    \draw  (0.3,1.7)..controls (0.7,0.5)..(1.7,0.3)node[above]{\footnotesize $k\exp(\frac{1}{x})$};
    \draw  (-0.3,1.7)..controls (-0.7,0.5)..(-1.7,0.3);
    \draw[dashed] (1.7,0)node[below]{\footnotesize $1$} -- (1.7,0.3);
    \draw[dashed] (-1.7,0)node[below]{\footnotesize $-1$} -- (-1.7,0.3);
    \draw[dashed] (0.3,0)node[below]{\footnotesize $\epsilon$} -- (0.3,1.7);
    \end{tikzpicture}
     \end{minipage}
    \begin{minipage}[b]{0.5\linewidth}
     \centering
    \begin{tikzpicture}[xscale=2.5,yscale=1.5]
    \draw[->](0,0)--(2,0)node[left,below]{$x$};
    \draw[->](0,0)--(0,2)node[above]{$\eta(x)$};
    \draw  (0.3,1.67)..controls (0.7,0.5)..(1.7,0.3)node[above]{$\frac{1}{x}$};
    \draw[dashed] (1.67,0)node[below]{\footnotesize $1$} -- (1.67,0.3);
    \draw[dashed] (0.3,0)node[below]{\footnotesize $\epsilon$} -- (0.3,1.7);
    \draw[dashed] (0,0.3)node[left]{\footnotesize $1$} -- (1.67,0.3);
    \draw[very thick] (0,0)--(0.3,0) ;
    \end{tikzpicture}
     \end{minipage}
      \caption{Signal Distribution and Elasticity}
    \label{fig: exam distribution}
\end{figure}

\begin{figure}[!htb]
    \begin{minipage}[b]{0.5\linewidth}
     \centering
    \begin{tikzpicture}[scale=1.8]
    \draw[thick, ->](-2.1,0)--(2.1,0)node[left,below]{$s$};
    \draw[thick, ->](0,0)node[below]{$\theta$}--(0,2);
    \draw (-0.3,1.7)--(0.3,1.7) ;
    \draw[red] (-1,0)--(-1,1.3)--(-0.6,1.3)--(-0.6,0)--(-0.5,0)--(-0.5,1.3)--(0.5,1.3)--(0.5,0)--(0.6,0)--(0.6,1.3)--(1,1.3)--(1,0);
    
    \draw[red] (-1.6,0)--(-1.6,1.3)--(-1.4,1.3)--(-1.4,0);
    \draw[red] (1.6,0)--(1.6,1.3)node[above]{$\underline t$}--(1.4,1.3)--(1.4,0);
    \fill[blue, opacity=0.6] (0.5,0)--(0.5,1.14)--(0.6,0.9)--(0.6,0) -- cycle;
    \fill[blue, opacity=0.6] (-0.5,0)--(-0.5,1.14)--(-0.6,0.9)--(-0.6,0) -- cycle;
    \fill[brown, opacity=0.2] (1.6,0)--(1.6,0.4)--(1.4,0.43)--(1.4,0) -- cycle;
    \fill[brown, opacity=0.2] (-1.6,0)--(-1.6,0.4)--(-1.4,0.43)--(-1.4,0) -- cycle;
    \draw  (0.3,1.7)..controls (0.7,0.5)..(2.1,0.3);
    \node at (0.5,1.8) {\footnotesize $\varphi(s;\theta,\lambda^*)$};
    \draw  (-0.3,1.7)..controls (-0.7,0.5)..(-2.1,0.3);
    \node[below] at (1,0) {\footnotesize $d^*$};
    \end{tikzpicture}

     \end{minipage}
     \begin{minipage}[b]{0.5\linewidth}
    \centering
    \begin{tikzpicture}[scale=0.6]
    \draw[thick, ->] (0.5,0) -- (0.5,6) node[left]{$E$};
        \draw[thick, ->] (0.5,0) -- (9,0) node[right]{ $\lambda$};
        \draw (7, 6)node[right] {$c(\lambda)$} .. controls (5,3) .. (1,2.3) ;
        \node[below,color=red] at (4.5,0) {$\lambda^*$};
        \draw[dotted] (4.5,0) -- (4.5,3.1);
        \draw (1.5,0) .. controls (3.8,3.04) .. (7,4)node[right]{\footnotesize $E(\lambda; d^*)$} ; 
        \draw[color=red] (2.3,-0.2) .. controls (4.89,4) .. (7.2,5)node[right]{\footnotesize $ E(\lambda; \underline t)$} ;
        \fill (4.5,3.15) circle (0.05);
        \draw[dashed, color=blue] (0.5,2)--(9,2); 
        \node[below, color=blue] at (7,1.8) {\footnotesize Complement};
        \node[below, color=blue] at (7,3) {\footnotesize Substitute};
    \end{tikzpicture}
     \end{minipage}
      \caption{The New Transfer Rule and The Expected Transfer}
    \label{fig:exam transfer}
\end{figure}
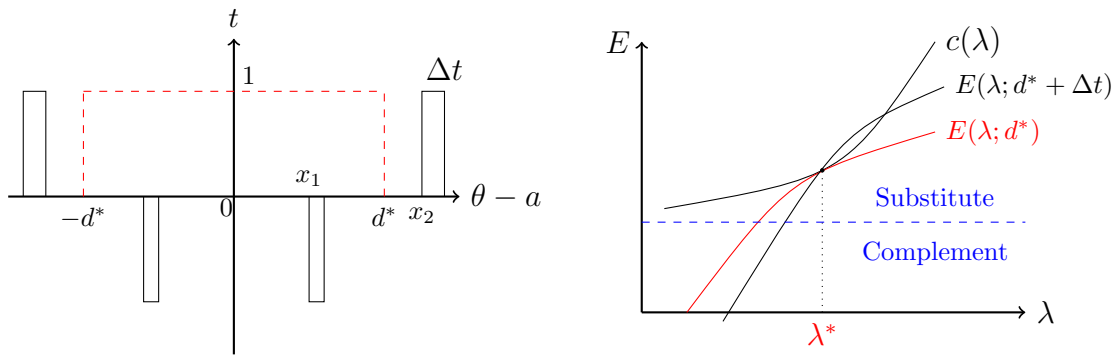
First, note that $d^*$ is the best cutoff among all cutoff transfers: by design, $d^*$ is the minimum cutoff for the agent to work. By Lemma \ref{lem: cross derivative}, as $(\lambda^*,d^*)$ lies in the substitute region, any cutoff transfer with cutoff $d>d^*$ induces a smaller precision. Thus, $d^*$ is the best cutoff transfer.

Second, I construct a new transfer rule $\underline t$ which induces a strictly larger precision. Starting with the best cutoff transfer rule $d^*$, I modify the transfer rule by setting one interior region (dark blue) within the cutoff to 0 and one exterior region (light brown) outside the cutoff to 1 (see the left panel of Figure \ref{fig:exam transfer}). These two regions are chosen such that the area of the dark blue region and the light brown region are the same. Note that the area of each region is the expected transfer contributed by this region at precision $\lambda^*$. Thus, the new transfer rule $\underline t$ has the same expected transfer as cutoff transfer $d^*$ at precision $\lambda^*$.

As the signal distribution features decreasing elasticity, the density function $\varphi(s;\theta,\lambda^*)$ is steeper at the blue region. Once we increase precision, the blue area shall shrink more than the brown area. This implies that the new transfer rule $\underline t$ has higher expected transfer than the cutoff transfer $d^*$ for slightly higher precision. I translate this comparison to the right panel of Figure \ref{fig:exam transfer}. For higher precision $\lambda>\lambda^*$, $E(\lambda;\underline t)>E(\lambda;d^*)$. Thus, $\underline t$ shall induce a higher precision.

\end{exam}

This example is key to understand more generally the necessity of increasing elasticity above $1$. I call a pair $(\lambda^*, d^*)$ \textit{exposed} if the cutoff transfer $d^*$ is the optimal transfer for some increasing convex cost function $c\in C^1$ and induces precision $\lambda^*$. A pair $(\lambda^*, d^*)$ being exposed implies that I cannot perturb the optimal cutoff transfer $d^*$ to generate a larger incentive for the agent to increase the precision. That is,
$$\frac{\partial E(\lambda; d^*)}{\partial \lambda}\bigg|_{\lambda=\lambda^*}$$
cannot be increased by any perturbations of cutoff transfer $d^*$. Otherwise, any small increment shall be detected by the continuously differentiable cost function $c$ and reflected in a larger optimal precision $\arg\max_{\lambda} E(\lambda;t)-c(\lambda)$. This explains why only smooth cost functions are needed as testing functions.

There are at least two ways to perturb the cutoff transfer $d^*$
\begin{itemize}
    \item First, I can slightly increase the cutoff (without affecting IR). This implies that $(\lambda^*, d^*)$ must be substitutes,
    $$\frac{\partial^2 E(\lambda; d)}{\partial \lambda\partial d}\bigg|_{(\lambda^*,d^*)}\leq 0\quad \Leftrightarrow \quad \eta (\lambda^* d^*)\geq 1.$$
    Otherwise, increasing the cutoff shall lead to a precision larger than $\lambda^*$.
    \item Second, I can alter the cutoff transfer $d^*$ as in Figure \ref{fig:exam transfer}. This implies that $\eta (x_1)\leq \eta (x_2)$ for all $0\leq x_1\leq \lambda^* d^*\leq x_2$. Otherwise, I can use the construction in Example \ref{counterexample} to construct a new transfer that induces a strictly larger precision.
\end{itemize}
When for all increasing, convex, continuously differentiable cost functions, there exists an optimal transfer that is a cutoff transfer. Exposing any pair $(\lambda^*,d^*)$ with  $\eta (d^*\lambda^*)\geq 1$ is not hard, as I can always find an increasing, convex, continuously differentiable cost function that is tangent to $E(\cdot;d^*)$ at $\lambda^*$ and strictly above $E(\cdot;d^*)$ everywhere else, like in Figure \ref{fig:exam transfer}. This implies that all pairs $(\lambda^*,d^*)$ with $\eta (d^*\lambda^*)\geq 1$ are exposed. Then the second bullet point above implies increasing elasticity above 1. The first bullet point above implies that all pairs in the complement regions are not exposed. Consequently, we do not have  monotonicity for $\eta$ when it is below 1.

\subsection{Comparative Statics}

Next I study how the optimal cutoff changes with the cost function. Without loss of generality, I assume that the cost function is weakly increasing.\footnote{Given any cost function $\tilde c$ that is not weakly increasing, I can define a new cost function $c$ to be the largest weakly increasing function that is below $\tilde c$. The agent's precision choice problem under cost function $c$ is the same as under cost $\tilde c$, because $E(\lambda;d)$ is weakly increasing in $\lambda$.} I express dependence on the cost function. Let $d^*(c)$ and $\lambda^*(c)$ denote the optimal cutoff and precision given cost function $c$. Consider two cost functions $c_1\leq c_2$ where $c_1$ is less costly. Intuitively, if the agent's cost is higher, the optimal cutoff increases. The next result formalizes this intuition.

\begin{prop}[Comparative Statics]\label{prop: cost}
Suppose that $\phi$ satisfies increasing elasticity above 1 and $c_1\leq c_2$. If $c_2(\lambda)-c_1(\lambda)$ is weakly increasing in $\lambda$, then $d^*(c_1)\leq d^*(c_2)$ and $\lambda^*(c_2)\leq \lambda^*(c_1)$. In particular, if $c_2=k c_1$ for some constant $k>1$, then $d^*(c_1)\leq d^*(c_2)$ and $\lambda^*(c_2)\leq \lambda^*(c_1)$.
\end{prop}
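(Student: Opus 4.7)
\medskip
\noindent\textbf{Proof plan.} The approach is to combine two monotonicity facts with the explicit formula $d^*(c)=\min\{d\geq\bar d(c):\bm{\lambda}_c(d)\,d\geq\eta^{-1}(1)\}$ from Theorem \ref{thm: optimal step function}. First, because $c_1\leq c_2$ pointwise implies $\pi_1(d)\geq\pi_2(d)$ for every cutoff, the feasibility set for the participation constraint expands under $c_1$, so $\bar d(c_1)\leq\bar d(c_2)$. Second, I would prove a monotone-comparative-statics claim: for every transfer rule $t$, if $c_2-c_1$ is weakly increasing then $\bm{\lambda}_{c_2}(t)\leq\bm{\lambda}_{c_1}(t)$.

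To prove the comparative-statics claim, the corner cases are immediate: IR failure under $c_1$ forces IR failure under $c_2$ (since $\pi_1\geq\pi_2$), and IR failure only under $c_2$ makes $\bm{\lambda}_{c_2}(t)=0$. When IR holds under both and $\bm{\lambda}_{c_2}(t)=\lambda_2>\lambda_1=\bm{\lambda}_{c_1}(t)$, summing the two optimality inequalities at $\lambda_1$ and $\lambda_2$ yields $(c_2-c_1)(\lambda_2)\leq(c_2-c_1)(\lambda_1)$; together with the monotonicity of $c_2-c_1$, this forces both original inequalities to be equalities, placing $\lambda_2$ in $\arg\max[E(\cdot;t)-c_1]$ and contradicting the $\max$-selection $\bm{\lambda}_{c_1}(t)=\lambda_1<\lambda_2$.

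Given these two facts, the proposition follows quickly. For $d^*(c_1)\leq d^*(c_2)$, observe that $d^*(c_2)\geq\bar d(c_2)\geq\bar d(c_1)$ and
\[
\bm{\lambda}_{c_1}(d^*(c_2))\,d^*(c_2)\;\geq\;\bm{\lambda}_{c_2}(d^*(c_2))\,d^*(c_2)\;\geq\;\eta^{-1}(1),
\]
so $d^*(c_2)$ belongs to the feasible set whose minimum defines $d^*(c_1)$. For $\lambda^*(c_2)\leq\lambda^*(c_1)$, use that $d^*(c_1)$ maximizes $\bm{\lambda}_{c_1}(\cdot)$ among cutoffs feasible under $c_1$: since $d^*(c_2)$ is such a cutoff, $\lambda^*(c_1)=\bm{\lambda}_{c_1}(d^*(c_1))\geq\bm{\lambda}_{c_1}(d^*(c_2))\geq\bm{\lambda}_{c_2}(d^*(c_2))=\lambda^*(c_2)$. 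The special case $c_2=kc_1$ with $k>1$ reduces to the general case: the wlog normalization $c_1\geq 0$ gives $c_2\geq c_1$, and $c_2-c_1=(k-1)c_1$ is weakly increasing. The only mildly subtle point in the whole argument is handling the $\max\arg\max$ selection in the comparative-statics step; beyond that, the proof is essentially a mechanical application of Theorem \ref{thm: optimal step function}.
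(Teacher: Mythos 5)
Your proof is correct, and it takes a genuinely different route from the paper's. The paper's argument case-splits on whether $c_2(\lambda)\geq E(\lambda;d^*(c_1))$ holds for all $\lambda$ or not, and in each case reasons through the substitute/complement decomposition of $E(\lambda;d)$ and the internals of Theorem \ref{thm: optimal step function}'s proof (Topkis in the substitute region in the first case; a boundary argument pinning $\lambda^*(c_1)d^*(c_1)=\eta^{-1}(1)$ in the second). Your approach instead isolates a self-contained comparative-statics lemma --- for any fixed transfer $t$, if $c_2-c_1$ is weakly increasing then $\bm{\lambda}_{c_2}(t)\leq\bm{\lambda}_{c_1}(t)$ --- proved by the standard revealed-preference interchange: adding the two optimality inequalities gives $(c_2-c_1)(\lambda_2)\leq(c_2-c_1)(\lambda_1)$, which forces equality and then contradicts the $\max$-of-$\arg\max$ selection. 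You then combine this with the monotonicity of $\bar d$ in $c$ and the explicit formula for $d^*$ from Theorem \ref{thm: optimal step function}, using only its statement rather than its proof. What this buys is a cleaner modular structure: the precision-monotonicity claim holds for arbitrary transfers and makes no reference to the complement/substitute regions or to where $\bm{\lambda}(d)\,d$ sits relative to $\eta^{-1}(1)$, so the case distinction disappears and the $d^*$ and $\lambda^*$ inequalities both fall out of one line each. One small caveat worth flagging explicitly: in the step $\lambda^*(c_1)=\bm{\lambda}_{c_1}(d^*(c_1))\geq\bm{\lambda}_{c_1}(d^*(c_2))$ you implicitly need $d^*(c_2)$ to satisfy IR under $c_1$, which follows because $d^*(c_2)\geq\bar d(c_1)$ and $\pi_1(\cdot)$ is weakly increasing in the cutoff; it would be worth writing this sentence out, but the logic is sound.
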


The cost difference $c_2(\lambda)-c_1(\lambda)$ being weakly increasing occurs if $c_2$ is more convex than $c_1$. Another interesting comparative statics is to change the budget from $1$ to $1/k$. This is equivalent to keeping the budget at $1$ and change the cost function from $c_1$ to $kc_1$. Thus, lowering the budget would lead to a larger optimal cutoff.

As a direct corollary, I can study what happens if the agent's signal is more precise. Suppose that $\varepsilon_2=k \varepsilon_1$ with some constant $k>1$. Recall that the agent's signal is $s=\theta+\frac{1}{\lambda}\varepsilon$. Thus, a smaller noise $\varepsilon_1$ corresponds to a more precise signal, which leads to a smaller cutoff by the next result.
\begin{col}\label{col: comparative signal}
    Suppose that the cost function is weakly convex. If $\varepsilon_2=k \varepsilon_1$ with some constant $k>1$, $d^*_1\leq d^*_2$.
\end{col}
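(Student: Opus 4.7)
The plan is to reduce Corollary~\ref{col: comparative signal} to Proposition~\ref{prop: cost} by recasting a rescaling of the noise as a rescaling of the cost function on a fixed signal distribution.

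First, I would observe that under any cutoff transfer with cutoff $d$ the agent reports truthfully, so his expected transfer at precision $\lambda$ equals $\Pr(|\varepsilon_i|\le\lambda d)$. Since $\varepsilon_2=k\varepsilon_1$, this gives $E_2(\lambda;d)=\Pr(|\varepsilon_1|\le\lambda d/k)=E_1(\lambda/k;d)$. The change of variables $\mu=\lambda/k$ therefore converts agent~$2$'s precision problem under $\phi_2$ and cost $c$ into agent~$1$'s precision problem under $\phi_1$ with the transformed cost $\tilde c(\mu):=c(k\mu)$; maximizing $\lambda=k\mu$ is equivalent to maximizing $\mu$, and the cutoff $d$ itself is unaffected by the substitution. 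A short calculation yields $\eta_2(x)=\eta_1(x/k)$, so increasing elasticity above $1$ is preserved and Theorem~\ref{thm: step} applies in either framing. In particular, $d_1^*$ is the optimal cutoff in the $\phi_1$-problem with cost $c$, while $d_2^*$ is the optimal cutoff in the $\phi_1$-problem with cost $\tilde c$.

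Next, I would check that $c_1:=c$ and $c_2:=\tilde c$ satisfy the hypotheses of Proposition~\ref{prop: cost}. Weak monotonicity of $c$ (WLOG, as noted in the paper) together with $k>1$ gives $c(\mu)\le c(k\mu)$, hence $c_1\le c_2$. For the monotonicity of the difference, I write (using a non-decreasing subgradient $c'$ if $c$ is not differentiable)
\[
\tilde c(\mu)-c(\mu)=\int_\mu^{k\mu}c'(s)\,ds.
\]
As $\mu$ grows, the interval $[\mu,k\mu]$ shifts rightward and its length $(k-1)\mu$ also grows, while convexity of $c$ makes $c'$ non-decreasing; all these effects are non-negative, so $\tilde c(\mu)-c(\mu)$ is weakly increasing in $\mu$.

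Finally, Proposition~\ref{prop: cost} applied to $(c_1,c_2)$ in the $\phi_1$-problem yields $d^*(c_1)\le d^*(c_2)$, which by the equivalence established above is exactly $d_1^*\le d_2^*$. The only step that genuinely uses convexity is the monotonicity of $\tilde c-c$, and this is also the main obstacle: handling non-differentiable $c$ requires the integral/subgradient argument above, while everything else is bookkeeping for the noise rescaling.
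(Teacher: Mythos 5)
Your proof is correct and takes essentially the same route as the paper: both reduce the noise rescaling $\varepsilon_2 = k\varepsilon_1$ to a cost rescaling $\tilde c(\mu) = c(k\mu)$ under the fixed noise $\varepsilon_1$, and then invoke Proposition~\ref{prop: cost}. Your write-up is a bit more explicit than the paper's one-line argument (in particular, the verification that increasing elasticity above $1$ is preserved, the observation that the cutoff is unaffected by the substitution, and the integral/subgradient argument for the monotonicity of $\tilde c - c$), but these are amplifications of the same idea rather than a different method.
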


What if the cost difference $c_2-c_1$ is not increasing? Then the answer is more involved. Denote by $\bm\lambda(d;c)$ the induced precision given cutoff transfer $d$ and the cost function $c$.  Given the cost function $c_1$, the minimum cutoff $\bar d(c_1)$ and the induced precision $\bm\lambda(\bar d(c_1);c_1)$ pair lies either in the substitute region $\bm\lambda(\bar d(c_1);c_1) \bar d(c_1)\geq \eta ^{-1}(1)$ or not. If this pair lies in the substitute region, then the optimal cutoff coincides with the minimum cutoff, $d^*(c_1)=\bar d(c_1)$, by Theorem \ref{thm: optimal step function}. Then given a larger cost function $c_2$, the minimum cutoff must be larger, $\bar d(c_2)\geq \bar d(c_1)$, which entails a larger optimal cutoff, $d^*(c_2)\geq d^*(c_1)$.

 However, if the $\bm\lambda(\bar d(c_1);c_1),\, \bar d(c_1)$ pair lies in the complement region, the comparison of $d^*$ can go both directions. If $c_2-c_1$ is decreasing, it may render higher precision relatively more attractive to the agent under cost $c_2$ than $c_1$. This may induce a higher precision for each cutoff transfer. As the optimal precision and optimal cutoff are substitutes by Theorem \ref{thm: optimal step function}, a higher induced precision leads to a lower optimal cutoff.

Readers may wonder what happens to the optimal precision $\lambda^*$ given $c_2\geq c_1$. It turns out that increasing the cost function can shift $\lambda^*$ in both directions, depending on the shape of the cost difference $c_2-c_1$.

\section{Extensions}
\subsection{The Multi-Dimensional Case}\label{sec: multidimensional}
My results can generalize to the case with a multi-dimensional state. In this section, suppose the state, the signal, and the report are n-dimensional, i.e., $\theta,\,s,\, a\in \mathbb R^n$. The signal distribution is still symmetric, single-peaked, and centered around $\theta$. Here symmetry means that the density of the distribution depends only on the Euclidean distance from the state $\theta$. A cutoff transfer pays $1$ when the Euclidean distance between the report and the state is less than a cutoff $d$. The next proposition shows that a statement analogous to that of Theorem~\ref{thm: step} applies in this case too.
\begin{prop}\label{prop: high dimensional}
Suppose the state is n-dimensional. The following statements are equivalent.
    \begin{enumerate}
        \item For all cost functions, there exists an optimal transfer that is a cutoff transfer.
        \item For all increasing, convex, and continuously differentiable cost functions, there exists an optimal transfer that is a cutoff transfer.
        \item Density function $\phi$ satisfies increasing elasticity above $n$.
    \end{enumerate} 
Moreover, if $\phi$ satisfies increasing elasticity above $n$, the cutoff transfer$$d^*= \min \{d\geq \bar d| \bm{\lambda}(d) d\geq \eta ^{-1}(n)\}$$ is optimal. 
\end{prop}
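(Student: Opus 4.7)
The plan is to mirror the proofs of Theorem~\ref{thm: step} and Theorem~\ref{thm: optimal step function}, with the key numerical shift from threshold $1$ to threshold $n$ arising from the Jacobian of spherical coordinates in $\mathbb R^n$. First I would compute, for a cutoff transfer, the expected transfer in closed form. Since both the budget indicator and $\varphi(\cdot;\theta,\lambda)$ are radially symmetric around $\theta$, setting $a=\theta$ (which is optimal for a radially-symmetric transfer) and changing variables $y=\lambda(x-\theta)$ gives
\[E(\lambda;d) \;=\; \omega_{n-1}\int_0^{\lambda d} r^{n-1}\phi(r)\,dr,\]
where $\omega_{n-1}$ is the surface area of the unit $(n{-}1)$-sphere. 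Direct differentiation then yields the $n$-dimensional analog of Lemma~\ref{lem: cross derivative},
\[\frac{\partial^2 E(\lambda;d)}{\partial\lambda\,\partial d} \;=\; \omega_{n-1}\, d^{n-1}\lambda^{n-2}\phi(\lambda d)\bigl[n-\eta(\lambda d)\bigr],\]
so the substitute/complement dichotomy is controlled by whether $\eta(\lambda d)$ is above or below $n$. Intuitively, the extra $r^{n-1}$ weight in the radial density gives outer shells more mass, raising the break-even elasticity from $1$ to $n$.

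Next, for the sufficiency direction (3)~$\Rightarrow$~(1), I would adapt the argument from Section~\ref{subsec: sufficiency}. The crucial point is that the radial marginal of $\varphi(\cdot;0,\lambda)$ is proportional to $r^{n-1}\phi(\lambda r)$, and the ratio $r_1^{n-1}\phi(\lambda r_1)/\bigl[r_2^{n-1}\phi(\lambda r_2)\bigr]$ is increasing in $\lambda$ iff $\eta(\lambda r_1)\le \eta(\lambda r_2)$ (the $r^{n-1}$ prefactors are $\lambda$-free and cancel in Lemma~\ref{lem: ratio}). Hence under increasing elasticity above $n$, the radial projection satisfies the MLRP inequality that drives the 1D proof: for any transfer $t$, I choose the cutoff $d$ so that $E(\bm\lambda(t);d)=E(\bm\lambda(t);t)$, and conclude as before that the cutoff transfer induces at least as much precision. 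For the necessity direction (2)~$\Rightarrow$~(3), I would adapt Example~\ref{counterexample} by taking $\phi$ to be a truncated version of $r^{1-n}\exp(n/r)$ (or similar) whose elasticity decreases on an interval above $n$; mimicking the blue-brown shell swap—now with shells weighted by $r^{n-1}$—produces a transfer strictly beating the best cutoff for a tangent cost function.

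The optimal-cutoff formula $d^*=\min\{d\ge \bar d\mid \bm\lambda(d)d\ge \eta^{-1}(n)\}$ then follows from the same monotone comparative statics as in Theorem~\ref{thm: optimal step function}: if $(\bar d,\bm\lambda(\bar d))$ lies in the substitute region $\{\lambda d\ge \eta^{-1}(n)\}$, the IR constraint binds and $d^*=\bar d$; otherwise raising $d$ from $\bar d$ drives $(\bm\lambda(d), d)$ rightward through the complement region until $\bm\lambda(d) d$ first reaches $\eta^{-1}(n)$, after which any further increase would reduce $\bm\lambda(d)$.

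The main obstacle will be handling transfers $t$ that are not radially symmetric: the agent's optimal report then depends on $\lambda$, breaking the clean "report equals signal" property used in the symmetric sketch. The natural fix, following the appendix argument for the 1D case, is to show that radially symmetrizing $t$ about the agent's best response (or equivalently working with the induced radial profile of $t$ centered at the optimal action) preserves the expected-transfer comparison needed to establish the precision ordering. Executing this carefully in $\mathbb R^n$, while tracking that the cutoff transfer still induces truthful reporting because it is radially symmetric and single-peaked around $a=\theta$, will be the most delicate part of the proof.
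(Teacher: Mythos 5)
Your proposal is correct and follows essentially the same route as the paper's own proof: compute the expected transfer under a cutoff via spherical coordinates, observe that the sign of $\frac{\partial^2 E(\lambda;d)}{\partial\lambda\partial d}$ is controlled by $n-\eta(\lambda d)$, and then run the machinery of Theorems~\ref{thm: step} and~\ref{thm: optimal step function} with $n$ replacing $1$, including the MLRP argument on radial densities (where the $r^{n-1}$ prefactors indeed cancel), the symmetrization step, and the counterexample construction. One minor algebraic slip: the cross-derivative should carry $\lambda^{n-1}d^{n-1}=(\lambda d)^{n-1}$ rather than $\lambda^{n-2}d^{n-1}$, but since only the sign of $[n-\eta(\lambda d)]$ matters this does not affect the argument.
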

Note that the increasing elasticity condition of Theorem~\ref{thm: step} is generalized in this proposition to increasing elasticity above the dimension $n$, providing an explanation for why the number 1 appeared in Theorem~\ref{thm: step}.

\subsection{Gaussian Prior and Gaussian Signal}\label{sec: Gaussian prior}

In this section, I assume that both the prior and the signal admit Gaussian distributions. Since the prior is no longer uniform, a slight adjustment is that now the principal wants the agent to truthfully report the posterior mean rather than the signal.

Let the prior distribution be $\mathcal N( 0, 1/\lambda_0^2)$.  Conditional on a signal $s$, the agent's posterior is also Gaussian
$$\mathcal N(\frac{s\lambda^2}{\lambda_0^2+\lambda^2}, \frac{1}{\lambda_0^2+\lambda^2}),$$
where $\lambda$ is the agent's choice of precision. I let $\Lambda$ denote the precision of the posterior
$$\Lambda(\lambda)= \sqrt{\lambda_0^2+\lambda^2}.$$
Then, the expected transfer becomes $E(\Lambda(\lambda);t)$ instead of $E(\lambda; t)$. Therefore, $\Lambda(\lambda)$ shall replace $\lambda$ in the characterization. Other than this difference, the results remain the same as in previous sections, as the posterior is Gaussian whose elasticity is increasing.
\begin{prop}\label{prop: Gaussian}
Suppose the prior and the signal admit  Gaussian distribution. There exists an optimal transfer that is a cutoff transfer. Moreover, the cutoff transfer $$d^*= \min \{d\geq \bar d| \Lambda(\bm{\lambda}(d)) d\geq \eta ^{-1}(1)\}$$ is optimal. 
\end{prop}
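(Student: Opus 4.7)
The plan is to reduce Proposition~\ref{prop: Gaussian} to Theorems~\ref{thm: step} and~\ref{thm: optimal step function} via the change of variable from signal precision $\lambda$ to posterior precision $\Lambda(\lambda)=\sqrt{\lambda_0^2+\lambda^2}$, exploiting the fact that the standard Gaussian density has elasticity $\eta(x)=x^2$, which is strictly increasing above $1$.

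First, I would compute the agent's interim expected transfer. By conjugacy, the posterior given signal $s$ is Gaussian with mean $\mu(s)=s\lambda^2/(\lambda_0^2+\lambda^2)$ and precision $\Lambda(\lambda)$. Since the transfer $t$ depends only on $\theta-a$, the same translation-invariance argument used under the uniform prior shows that the optimal report is $\mu(s)$ and that the maximized interim transfer equals $E(\Lambda(\lambda);t)$, which coincides with the quantity from the uniform-prior problem with $\Lambda(\lambda)$ replacing $\lambda$ and the standard Gaussian as the driving density. Consequently, truthful reporting of the posterior mean is automatically induced by any symmetric single-peaked transfer, in particular by every cutoff transfer.

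Second, I would rewrite the agent's problem in terms of $\Lambda$. Define the effective cost $\tilde c(\Lambda)=c\bigl(\sqrt{\Lambda^2-\lambda_0^2}\bigr)$ for $\Lambda\ge \lambda_0$ and $\tilde c(\Lambda)=+\infty$ otherwise; $\tilde c$ inherits lower semicontinuity from $c$. Because $\lambda\mapsto\Lambda(\lambda)$ is a strictly increasing bijection from $[0,\infty)$ onto $[\lambda_0,\infty)$, maximizing $\lambda$ is equivalent to maximizing $\Lambda$, and the IC/IR constraints are preserved under the reparameterization. The principal's problem is thus an instance of the uniform-prior problem with standard-Gaussian $\phi$, effective precision variable $\Lambda$, and effective cost $\tilde c$ supported on $[\lambda_0,\infty)$. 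Applying Theorem~\ref{thm: step} yields that a cutoff transfer is optimal, and Theorem~\ref{thm: optimal step function} gives optimal cutoff $d^*=\min\{d\ge\bar d:\tilde{\bm\lambda}(d)\, d\ge \eta^{-1}(1)\}$, where $\tilde{\bm\lambda}(d)$ denotes the induced choice of $\Lambda$. Pulling back via $\tilde{\bm\lambda}(d)=\Lambda(\bm\lambda(d))$ recovers the stated formula.

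The main obstacle is verifying that the underlying proofs of Theorems~\ref{thm: step}--\ref{thm: optimal step function} apply verbatim to this reduced problem even though the agent always receives a strictly positive baseline transfer $E(\lambda_0;t)>0$ from the prior alone (he can set $\lambda=0$ and still collect a prize). I would argue that this only alters the definition of $\bar d$ (it may be smaller than in the uniform-prior case, possibly equal to the $d$ at which the prior-only payoff crosses $0$), but leaves intact Lemma~\ref{lem: cross derivative} and the monotone comparative statics used to identify cutoff transfers as optimal, since those arguments only require lower semicontinuity of $\tilde c$ together with the elasticity condition on the driving density---both of which hold here.
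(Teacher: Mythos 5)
Your proof is correct and takes essentially the same approach as the paper: reparametrize from signal precision $\lambda$ to posterior precision $\Lambda(\lambda)=\sqrt{\lambda_0^2+\lambda^2}$, observe that the standard Gaussian $\phi$ has elasticity $\eta(x)=x^2$ (strictly increasing above $1$), and invoke Theorems~\ref{thm: step} and~\ref{thm: optimal step function}. The paper's appendix proof is far terser---it simply cites increasing elasticity and records the cross-partial with $\Lambda(\lambda)$ replacing $\lambda$---so your explicit construction of the effective cost $\tilde c$ on $[\lambda_0,\infty)$ and your remark that the strictly positive baseline transfer $E(\lambda_0;t)$ only relaxes the IR constraint (shifting $\bar d$) without disturbing the comparative-statics arguments fill in details the paper leaves implicit.
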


\subsection{Unobserved State}\label{sec: unobserved state}
In this section, I extend my results to the setting where the state is unobservable. Instead, the principal has some private information about the state. When designing the transfer, she uses her private information instead of the state to discipline the agent.\footnote{Note that the principal does not have utility for their budget, and so have no incentive to lie about their private signals. Thus, it is without loss of generality to assume that the principal has commitment power.} 

I assume that the prior is uniform or Gaussian $\mathcal N( 0,1/\lambda_0^2)$. The principal wants the agent to truthfully report his posterior mean and maximizes $\lambda$.  The principal and the agent receive independent Gaussian signals: $s_p\sim \mathcal N(\theta,1/\lambda_p^2)$ and $s\sim \mathcal N(\theta,1/\lambda^2)$. I show that it is without loss to focus on cutoff transfers that pay $1$ when $|s_p-a|$ is less than a cutoff and $0$ otherwise.

\begin{prop}\label{prop: unobserved}
Suppose that the state is unobservable and the prior admits uniform or Gaussian distribution. The principal and the agent receive independent Gaussian signals. There exists an optimal transfer that is a cutoff transfer. Moreover, the cutoff transfer $$d^*= \min \{d\geq \bar d| \Lambda(d) d\geq \eta ^{-1}(1)\}$$ is optimal where $\Lambda(d)= (\frac{1}{\lambda_p^2}+\frac{1}{\bm\lambda^2(d)})^{-\frac{1}{2}}$ for the uniform prior and $\Lambda(d)= (\frac{1}{\lambda_p^2}+\frac{1}{\bm\lambda^2(d)+\lambda_0^2})^{-\frac{1}{2}}$ for the Gaussian prior. 
\end{prop}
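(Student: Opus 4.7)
My plan is to reduce the unobserved-state problem to the observed-state problem of Proposition~\ref{prop: Gaussian} by conditioning on the agent's signal $s$ and deriving the induced predictive distribution of the principal's signal $s_p$. Under the uniform prior, the agent's posterior on $\theta$ given $s$ is $\mathcal N(s, 1/\lambda^2)$; since $s_p = \theta + \varepsilon_p$ with $\varepsilon_p \sim \mathcal N(0, 1/\lambda_p^2)$ independent of both $\theta$ and $s$, the predictive distribution of $s_p$ given $s$ is $\mathcal N(s,\, 1/\lambda^2 + 1/\lambda_p^2)$, whose precision is exactly $\Lambda(\lambda) = (1/\lambda_p^2 + 1/\bm\lambda^2(d))^{-1/2}$ evaluated at $\lambda = \bm\lambda(d)$. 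Under the Gaussian prior the posterior on $\theta$ is $\mathcal N(s\lambda^2/(\lambda_0^2+\lambda^2),\, 1/(\lambda_0^2+\lambda^2))$, and convolving with $\mathcal N(0,1/\lambda_p^2)$ yields a Gaussian predictive for $s_p$ with precision $(1/\lambda_p^2 + 1/(\lambda_0^2+\lambda^2))^{-1/2}$, matching the second formula in the statement. In either case, the predictive is symmetric and single-peaked around the agent's posterior mean of $\theta$, so truthful reporting of that posterior mean is optimal under any symmetric single-peaked transfer, including every cutoff transfer.

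Next, I will re-interpret the principal's problem as a standard observed-state problem in which $s_p$ plays the role of the state and the agent acquires a Gaussian location-scale experiment about $s_p$ with effective precision $\Lambda(\lambda)$ at cost $c(\lambda)$. Because $\Lambda$ is a strictly increasing continuous bijection on its domain, the reparametrization $\tilde c(\mu) := c(\Lambda^{-1}(\mu))$ is lower semicontinuous, maximizing $\lambda$ is equivalent to maximizing $\mu = \Lambda(\lambda)$, and the IC/IR constraints transform into their analogues in the reduced problem. Since the standard Gaussian elasticity $\eta(x)=x^2$ satisfies (strictly) increasing elasticity above $1$, Theorem~\ref{thm: step} applies to the reduced problem and delivers an optimal cutoff transfer on $|s_p - a|$. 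Translating the characterization of Theorem~\ref{thm: optimal step function} back through $\Lambda$ gives $d^* = \min\{d \geq \bar d\mid \Lambda(\bm\lambda(d))\, d \geq \eta^{-1}(1)\}$.

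The main obstacle I expect is bookkeeping around the reduction: one must check that the expected transfer $E[t(s_p-a)\mid s,\lambda]$ computed in the reduced problem really equals the agent's expected transfer in the original problem (which follows by the tower property and the predictive calculation above), and that the principal's objective of maximizing $\lambda$ corresponds to maximizing $\Lambda(\lambda)$ under this reparametrization. A secondary point is that the realized report $a$ depends on the realized $s$ (through the posterior mean), but because every cutoff transfer is symmetric and single-peaked the agent's report function is pinned down and the two-layer moral-hazard/communication problem collapses to a one-dimensional choice of $\lambda$, exactly as in the baseline analysis. Once these equivalences are established, Proposition~\ref{prop: Gaussian} and Theorem~\ref{thm: optimal step function} can be invoked verbatim.
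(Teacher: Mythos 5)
Your proposal is correct and follows essentially the same route as the paper: condition on the agent's signal $s$, compute the predictive distribution of $s_p$ given $s$ (Gaussian with precision $\Lambda(\lambda)$ in both the uniform- and Gaussian-prior cases), and thereby reduce the unobserved-state problem to the baseline Gaussian location-family problem with $\lambda$ replaced by $\Lambda(\lambda)$, after which the main theorems apply because Gaussian elasticity is increasing.

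One small clarification worth making: after conditioning on $s$, the predictive of $s_p$ is a \emph{location family} centered at the agent's posterior mean (which is $s$ for the uniform prior and $s\lambda^2/(\lambda_0^2+\lambda^2)$ for the Gaussian prior), so in both cases the reduction is to the uniform-prior baseline of Theorems~\ref{thm: step} and~\ref{thm: optimal step function} rather than to Proposition~\ref{prop: Gaussian}; the paper also invokes the theorems directly. A second minor point is that $\Lambda$ is increasing but bounded above (by $\lambda_p$), so it is a bijection onto a bounded interval rather than all of $\mathbb R_+$; this does not affect the argument, since one can set the reparametrized cost to $+\infty$ outside the range of $\Lambda$, but it is worth stating when you write ``bijection on its domain.''
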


\section{Implications for Classic Principal-Agent Problem}\label{sec: principal agent}

In this section, I demonstrate how my results apply to a classic principal-agent problem, offering new insights into the optimality of simple contracts. In the classic setting, a principal incentivizes the agent to produce an output. Unlike the traditional framework, in which the principal's goal is to maximize expected payoff (of outputs) minus transfers, I assume the principal’s sole objective is to maximize output, constrained by a budget limit. The budget with a specific usage is a common practice in firms. \citet{anthony2007management} and \citet{horngren2009cost} provide comprehensive empirical evidence for this approach. In firms, managers (the principals in my model) control the resources without being the ultimate owners, giving rise to agency costs, known as residual loss in finance \citep{JENSEN1976}. To combat this loss, a budget constraint with a specific usage is commonly deployed.

Suppose that an agent chooses an effort level $e \geq 0$, which can be discrete or continuous. A continuous output $y \geq 0$ is random and distributed according to a probability density function $g(y;e)$. A principal's objective is to maximize the agent's output, subject to a budget constraint equals to $1$. The transfer $t(y)$ depends on the output $y$, such that $0 \leq t(y) \leq 1$. A cutoff transfer  with parameter $d$ is of the form $t(y) = 1$ if $y \geq d$ and $t(y) = 0$ otherwise. The agent's cost of effort is given by a lower semi-continuous function $c(e)$ and he maximizes the expected transfer minus the cost.

\begin{col}\label{col: princpal agent model}
    Suppose that $g$ satisfies the monotone likelihood ratio condition: for all $y_1\leq y_2$, $ \frac{g(y_2;e)}{g(y_1;e)}$ is weakly increasing in $e$. Then, there exists an optimal transfer that is a cutoff transfer.\footnote{\cite{BANKS1998293} study how a long-lived principal interacts with a series of short-lived agents with moral hazard and adverse selection. Agents have different types (abilities) and can derive utility from working. Each agent can work for at most two periods. The principal designs a retention rule that maps from the first-period output to two outcomes: retain or fire. The principal aims to maximize output. \cite{BANKS1998293} show that a cutoff retention rule can be optimal under some condition on the agent's preference and MLRP. However, the principal is unable to induce effort with moral hazard alone. See their proposition 3.5 for a detailed discussion.}
\end{col}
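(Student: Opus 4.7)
The plan is to adapt the swap argument from the sufficiency proof in Section~\ref{subsec: sufficiency} (and Corollary~\ref{col: monotone likelihood sufficient}) to the one-sided principal-agent setting. Given any transfer rule $t\colon [0,\infty) \to [0,1]$ inducing effort $e^\star \in \arg\max_{e}\bigl(\mathbb{E}[t(y);e]-c(e)\bigr)$, I will construct a cutoff transfer with threshold $d$ that induces weakly greater effort while preserving participation. Since MLRP implies first-order stochastic dominance of $g(\cdot;e)$ in $e$, weakly greater effort translates directly into weakly greater expected output, yielding the optimality of cutoff transfers.

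The threshold $d$ is chosen by matching expected transfers at $e^\star$: set $d$ so that $\mathbb{P}(y \geq d; e^\star) = \mathbb{E}[t(y); e^\star]$, which exists by continuity and monotonicity of the survival function in $d$. The key object is $h(y) := \mathbf{1}_{y \geq d}-t(y)$; since $t\in[0,1]$ it satisfies $h(y)\le 0$ for $y<d$ and $h(y)\ge 0$ for $y\ge d$, i.e.\ $h$ single-crosses zero from below at $d$, and by construction $\int h(y)\,g(y;e^\star)\,dy=0$. Writing $F(e):=\int h(y)\,g(y;e)\,dy$, the core claim is $F(e)\ge 0$ for $e\ge e^\star$ and $F(e)\le 0$ for $e\le e^\star$. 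For $e>e^\star$, let $r(y):=g(y;e)/g(y;e^\star)$, which by MLRP is weakly increasing in $y$; set $\kappa:=r(d)$. Because both $h$ and $r-\kappa$ change sign at $d$ in the same direction, $h(y)[r(y)-\kappa]\ge 0$ pointwise. Integrating against $g(\cdot;e^\star)$ yields $F(e)-\kappa\,F(e^\star)\ge 0$, and since $F(e^\star)=0$, $F(e)\ge 0$. The case $e<e^\star$ is handled symmetrically with $r(y)=g(y;e^\star)/g(y;e)$.

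Let $V_t(e):=\mathbb{E}[t(y);e]-c(e)$ and $V_d(e):=\mathbb{P}(y\ge d;e)-c(e)=V_t(e)+F(e)$. By the sign structure of $F$, for any $e<e^\star$, $V_d(e)\le V_t(e)\le V_t(e^\star)=V_d(e^\star)$, so taking $\bm{e}(d):=\max\arg\max_e V_d(e)$ one obtains $\bm{e}(d)\ge e^\star$. IR is preserved since $V_d(\bm{e}(d))\ge V_d(e^\star)=V_t(e^\star)\ge 0$. MLRP then delivers $\mathbb{E}[y;\bm{e}(d)]\ge \mathbb{E}[y;e^\star]$, so the cutoff transfer with threshold $d$ weakly dominates $t$ for the principal, proving the corollary. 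The main obstacle is the single-crossing dominance step establishing monotonicity of $F$; although classical, this calculation must carefully pair the single-crossing structure of $h$ with the monotone likelihood ratio $r$ so that integration against $g(\cdot;e^\star)$ makes the cross-term collapse, which is exactly where MLRP (rather than a weaker stochastic-dominance assumption) is used.
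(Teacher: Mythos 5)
Your proof is correct and is essentially the paper's own approach: the paper treats this as an immediate consequence of Corollary~\ref{col: monotone likelihood sufficient}, i.e.\ of the swap argument from Section~\ref{subsec: sufficiency} (construct a cutoff matching the expected transfer at the induced effort, then use MLRP to show the expected-transfer gap single-crosses zero, so the new $\arg\max$ of effort can only move up). You have simply spelled out that argument directly in the one-sided output setting, including the IR check that the paper leaves implicit.
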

This is an immediate corollary of Corollary~\ref{col: monotone likelihood sufficient}. My model can be ``projected down” onto this classic setting. The argument involves making the signal $s$ publicly observable, which removes the truthful report issue. Then set output $y=1/|s-\theta|$ and effort $e=\lambda$. I formalize a sense that my principal-expert setting is strictly richer than the classic principal-agent setting and derive new results for the classic setting. Economically, this corollary suggests that my insights apply not only to contracting for modern ``knowledge/information economy” jobs but also for classic “production/manufacturing” jobs.

The literature on principal-agent problems has long been interested in when simple contracts are optimal \citep{carroll2015robustness,Herweg2010,oyer2000quota}. This corollary provides an alternative explanation for simple contracts: the optimality of cutoff transfers stems from the nature of budgets in many contracting scenarios. The bounded budget that can only be used for this task can lead to the optimality of cutoff transfers, and the monotone likelihood ratio property is a sufficient condition to generate this sharp characterization.

Classic principal-agent models  maximize the principal's expected payoff (of outputs) minus transfers, subject to incentive compatibility (IC), individual rationality (IR), and sometimes limited liability constraints. Generally, closed-form solutions for the optimal transfers are not obtainable \citep[see][Chapter 4.5]
{bolton2004contract}. In a special case with binary effort levels and a risk-neutral principal, a well-known result emerges: the monotone likelihood ratio property is necessary and sufficient for optimal transfers to increase with output \citep[see][Proposition 4.6]
{laffont2009theory}. However, analyzing the general case with more than two effort levels is much more challenging. \citet{Grossman1983} study this problem with continuous effort under the assumption of finite output levels. Their findings show that even the monotone likelihood ratio property is not sufficient to ensure optimal transfers to be increasing. Beyond some weak predictions, we do not have a characterization of optimal transfers.\footnote{For a risk-neutral principal, we only know that the optimal transfer cannot be decreasing everywhere, nor can it increase faster than the output everywhere.} In Appendix \ref{sec: connect to classic}, I show that the classic model is closely related to my model. The only difference is that instead of bounding the ex-ante expected transfer, I bound the transfer by a hard ex-post constraint.

From a methodological perspective, my results rely on techniques of monotone comparative statics, which allows me to impose relatively mild assumptions on the output distribution. Another popular method is the first-order approach \citep{rogerson1985first,jewitt1988justifying}. To ensure the validity of the first-order approach, people either impose restrictive conditions on the distribution of outputs, like the convexity of output distribution \citep{rogerson1985first}, or impose strong assumptions on the agent's utility function \citep{jewitt1988justifying}. My results do not rely on the first-order approach. So I require neither of these assumptions.

\newpage

\bibliography{bibfile}
\newpage

\appendix
\section{Connection with the Classic Problem}\label{sec: connect to classic}
In the classic principal-agent model, the principal maximizes the expected payoff minus expected transfer. I show that the classic model is closely related to my model. The only difference is that instead of bounding the ex-ante expected transfer, I bound the transfer by a hard ex-post constraint.

The classic model is as follows. Note that the expected transfer enters directly in the objective.
\begin{maxi}
{\lambda, t}   
{U(\lambda)-E(\lambda;t)}  
{}  
{}  
\addConstraint{0}{\le t}
\addConstraint{\lambda}{\in \arg\max E(\cdot;t)-c(\cdot)}{\quad\text{IC}}
\addConstraint{E}{(\lambda;t) -c(\lambda)\ge 0}{\quad\text{IR}}
\end{maxi}
where $U(\cdot)$ is the principal's payoff function of the precision. Function $U$ is increasing. Next, I shall write down the dual of this problem. That is, I maximize $U(\lambda)$ given each possible value of expected transfer.
\begin{maxi}
{\lambda, t}   
{U(\lambda)}  
{}  
{}  
\addConstraint{0}{\le t}
\addConstraint{\lambda}{\in \arg\max E(\cdot;t)-c(\cdot)}{\quad\text{IC}}
\addConstraint{E}{(\lambda;t) -c(\lambda)\ge 0}{\quad\text{IR}}
\addConstraint{E}{(\lambda;t)\le \text{constant}}
\end{maxi}
By varying the constant value, I can trace out the frontier of the principal's payoff. Since function $U$ is increasing, the program is equivalent to the following.
\begin{maxi}
{\lambda, t}   
{\lambda}  
{}  
{}  
\addConstraint{0}{\le t}
\addConstraint{\lambda}{\in \arg\max E(\cdot;t)-c(\cdot)}{\quad\text{IC}}
\addConstraint{E}{(\lambda;t) -c(\lambda)\ge 0}{\quad\text{IR}}
\addConstraint{E}{(\lambda;t)\le \text{constant}}
\end{maxi}
Note that so far I only reformulate the classic principal-agent model. But the last program resembles my model \eqref{eq:program}. The only difference is that instead of bounding the ex-ante expected transfer, I bound the transfer by a hard ex-post constraint.

\section{The Assumption That Transfer Depends Only on the Difference Between the State and Report}\label{sec: proper uniform}
In this section, I show that given the uniform prior, it is without loss to assume that the transfer depends only on the difference $a-\theta$. To make the expected transfer well-defined for the uniform prior, I take two approaches. In the first approach, I define the agent's expected transfer to be conditional on the event that the signal lies in some set of finite Lebegue measure. Then I show that the conclusion holds for any such set. The conditioning on some set is merely an artifact to deal with the improperness of unbounded uniform prior. In the second approach, I study an alternative setting where the state space is a circle in $\mathbb R^2$ and the prior is proper and uniform. This unit circle is similar to the one-point compactification of $\mathbb R$.

Throughout this section,  I allow the transfer to depend on both the state and the report. To ensure that the agent's report is well-defined, I assume that the family of maps $a \mapsto  t(y+a,a)$ parametrized by $y$ is equicontinuous. This is trivially satisfied when the transfer depends only on $\theta-a$, as $t(y+a,a)$ stays constant as $a$ varies. When the prior is improper uniform on $\mathbb R$, I additionally assume that $t(\theta,\theta+x)$ vanishes uniformly (in $\theta$) as $x$ tends to infinity.

First, I show that the agent's report is well-defined.

\begin{lem}\label{lem: maximum well defined}
    The maximum $$
    \max_{a \in \mathbb{R}} \int_{\mathbb R} t(\theta', a) \varphi(\theta'; s,\lambda) d\theta'
$$  is well-defined.
\end{lem}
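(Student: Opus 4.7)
My plan is to show that the function $F(a) := \int_{\mathbb R} t(\theta', a)\, \varphi(\theta'; s,\lambda)\, d\theta'$ is (i) continuous in $a$ and (ii) vanishes as $|a| \to \infty$ in the improper-uniform-prior case. Combined with $0 \le F \le 1$, these two facts imply that the supremum of $F$ is attained: any maximizing sequence can be truncated to a bounded interval outside which $F$ is uniformly small, and on that bounded interval the extreme-value theorem delivers a maximizer. In the alternative setting where the state space is a circle, the report space is itself compact, so only continuity is needed.

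For continuity, I would change variables $y = \theta' - a$ to rewrite $F(a) = \int g(y, a)\, \lambda \phi(\lambda(y + a - s))\, dy$, where $g(y, a) := t(y + a, a)$. Fixing $a_0$ and splitting $F(a) - F(a_0) = \int [g(y,a) - g(y,a_0)]\, \lambda \phi(\lambda(y+a-s))\, dy + \int g(y,a_0)\, \lambda[\phi(\lambda(y+a-s)) - \phi(\lambda(y+a_0-s))]\, dy$, the first term is bounded by $\sup_y |g(y,a) - g(y,a_0)|$, which vanishes by the assumed equicontinuity of the family $\{g(y,\cdot)\}_y$; the second is bounded by $\|g\|_\infty$ times the $L^1$-distance between a translate of $\phi$ and $\phi$ itself, which vanishes by continuity of translation in $L^1$ (valid for any $\phi \in L^1$, a fortiori for our continuously differentiable density).

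For vanishing at infinity, fix $\varepsilon > 0$. Pick $R$ large enough that $\int_{|\theta' - s| > R} \varphi(\theta'; s, \lambda)\, d\theta' < \varepsilon/2$, possible because $\varphi(\cdot; s, \lambda)$ is a probability density. By the assumed uniform vanishing of $t(\theta, \theta + x)$ as $|x| \to \infty$, pick $X$ with $t(\theta, \theta + x) < \varepsilon/2$ for all $\theta$ and $|x| \ge X$. Then whenever $|a - s| \ge X + R$ and $|\theta' - s| \le R$ we have $|a - \theta'| \ge X$, hence $t(\theta', a) < \varepsilon/2$; splitting the integral over $\{|\theta' - s| \le R\}$ and its complement yields $F(a) < \varepsilon$. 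The step I expect to be the main obstacle is the continuity argument, since $t$ depends on two coupled arguments through $\theta' = y + a$ and $a$; the equicontinuity hypothesis is exactly the right tool to decouple them, and $L^1$-translation continuity of $\phi$ handles the remaining piece without requiring any hypothesis on $\phi$ beyond integrability.
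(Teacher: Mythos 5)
Your proof is correct and takes essentially the same route as the paper's: both split the increment $F(a)-F(a_0)$ by inserting a diagonally shifted copy of $t$, controlling one piece via the assumed equicontinuity of $a \mapsto t(\cdot + a, a)$ and the other via $L^1$-translation continuity of $\phi$ (the paper computes this bound explicitly through the CDF $\Phi$), and both reduce the maximization to a compact set by combining the uniform vanishing of $t(\theta,\theta+x)$ in $|x|$ with the finite tail mass of $\varphi$. Your change of variables $y = \theta' - a$ is a harmless reorganization, and your $\varepsilon$-phrasing of the tail estimate (genuine vanishing at infinity) is nominally stronger than the paper's bound $T(a) \le T(a_1)$ outside a compact set, but serves the identical purpose.
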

\begin{proof}[Proof of Lemma \ref{lem: maximum well defined}]
     Fix a signal $s$ and precision $\lambda$. Let $T(a)$ be the agent's expected transfer after reporting $a$, $$T(a)= \int t(\theta',a )\varphi (\theta';s,\lambda)d\theta'.$$
 If $T(a)=0$ for all $a\in\mathbb R$, then the maximum is well-defined. Suppose that there exists $ a_1\in \mathbb R$ such that $T(a_1)>0$. As $t(\theta,\theta+x)$ vanishes uniformly (in $\theta$) as $x$ tends to infinity, there exists a $K_1>0$ such that for all $|\theta-a|>K_1$, $t(\theta,a)< \frac{T(a_1)}{2}$. Moreover, there exists a $K_2>0$ such that $\Phi(\lambda K_2)> 1-\frac{T(a_1)}{4}$. Let $K=\max(K_1,K_2)$. If $|a-s|>2K$,
\[
\begin{split}
    T(a)=& \int t(\theta',a )\varphi (\theta';s,\lambda)d\theta'\\
    =&  \int_{\theta':|\theta'-s|<K} t(\theta',a )\varphi (\theta';s,\lambda)d\theta'+\int_{\theta':|\theta'-s|\geq K} t(\theta',a )\varphi (\theta';s,\lambda)d\theta'\\
    \leq & \frac{T(a_1)}{2} \int_{\theta':|\theta'-s|<K} \varphi (\theta';s,\lambda)d\theta'+1\cdot \int_{\theta':|\theta'-s|\geq K} \varphi (\theta';s,\lambda)d\theta'\\
    \leq  & \frac{T(a_1)}{2} +2[1- \Phi(\lambda K)]\\
    =& T(a_1).
\end{split}
\]
Thus, $$\max_{a\in\mathbb R} T(a)=\max_{|a-s|\leq 2K} T(a).$$
Next, I show that $T(\cdot)$ is continuous. 
\[
\begin{split}
    T(a+\epsilon)-T(a)=&\int t(\theta',a+\epsilon )\varphi (\theta';s,\lambda)d\theta'-\int t(\theta',a )\varphi (\theta';s,\lambda)d\theta'\\
    =&\int [t(\theta',a+\epsilon )-t(\theta'-\epsilon,a )]\varphi (\theta';s,\lambda)d\theta'+\int [t(\theta'-\epsilon,a )-t(\theta',a )]\varphi (\theta';s,\lambda)d\theta'\\
    =&\int [t(\theta',a+\epsilon )-t(\theta'-\epsilon,a )]\varphi (\theta';s,\lambda)d\theta'+\int t(\theta',a )[\varphi (\theta'+\epsilon;s,\lambda)-\varphi (\theta';s,\lambda)]d\theta'
\end{split}
\]
\[
\begin{split}
    |T(a+\epsilon)-T(a)|&\leq \sup_{\theta'}| t(\theta',a+\epsilon )-t(\theta'-\epsilon,a )|+\bigg|\int t(\theta',a )[\varphi (\theta';s-\epsilon,\lambda)-\varphi (\theta';s,\lambda)]d\theta'\bigg|\\
    &\leq \sup_{\theta'}| t(\theta',a+\epsilon )-t(\theta'-\epsilon,a )|+\int |\varphi (\theta';s-\epsilon,\lambda)-\varphi (\theta';s,\lambda)|d\theta'\\
    &\leq \sup_{\theta'}| t(\theta',a+\epsilon )-t(\theta'-\epsilon,a )|+2 \int_{\theta'\leq s-\frac{\epsilon}{2}} \varphi (\theta';s-\epsilon,\lambda)-\varphi (\theta';s,\lambda)d\theta'\\
    &\leq \sup_{\theta'}| t(\theta',a+\epsilon )-t(\theta'-\epsilon,a )|+2[\Phi(\frac{\epsilon}{2}\lambda)-\Phi(-\frac{\epsilon}{2}\lambda)]\\
\end{split}
\]
Note that $\sup_{\theta'} |t(\theta',a+\epsilon )-t(\theta'-\epsilon,a )|$ tends to $0$ as $\epsilon$ tends to $0$ by the equicontinuity of $t$. Moreover, the CDF $\Phi$ is continuous. Thus, $T$ is continuous.  A continuous function achieves a maximum on a compact set. So the maximum  is well-defined.
\end{proof}

Next, I show that it is without loss to assume that the transfer depends only on the difference $\theta-a$. For a transfer rule $t(\theta-a)$ that depends only on the difference $\theta-a$, let $\hat E(\lambda;t)$ denote the agent's expected transfer when he has to report truthfully,
$$\hat E(\lambda; t)=\int_{\mathbb R}  t(\theta') \varphi (\theta'; 0, \lambda)d\theta'.$$
Let $\bm{\hat\lambda}(t)$ denote the corresponding induced precision,
\[\bm{\hat\lambda}(t)=\begin{cases}
\max [\arg\max \hat E(\cdot;t)-c(\cdot)], & \text{ if } \max \hat E(\cdot;t)-c(\cdot)\geq 0,\\
0, & \text{ otherwise.}\end{cases}\]

For a transfer rule $t(\theta,a)$ that depends on both the state and the report, we need to integrate the signal when computing the expected transfer. Let $g$ denote the PDF of the signal, which is also uniform. Fix any measureable set $A$ of finite Lebegue measure. I define the agent's expected transfer conditional on the signal falls in set 
$A$, i.e., $s\in A$
$$E(\lambda;t)=\int_{s\in A} g(s)ds\bigg(\max_a \int t(\theta', a)\varphi(\theta';s,\lambda)d\theta'\bigg)$$
\[\bm{\lambda}(t)=\begin{cases}
\max [\arg\max  E(\cdot;t)-c(\cdot)], & \text{ if } \max  E(\cdot;t)-c(\cdot)\geq 0,\\
0, & \text{ otherwise.}\end{cases}\]

\begin{lem}\label{lem: without loss}
    For a transfer rule $t(\theta,a)$, there exists a transfer rule $t^*(\theta-a)$ that depends only on the difference $\theta-a$; given transfer rule $t^*$, the agent chooses  precision $\bm\lambda(t)$ when he is forced to reveal the signal. That is,
    $$\bm\lambda(t)=\hat{\bm\lambda}(t^*).$$
\end{lem}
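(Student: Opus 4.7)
The plan is to reduce the general transfer to the difference-only case by constructing $t^*$ as a translation-average of $t$ along the agent's optimal reporting policy at the target precision, and then to conclude via a short revealed-preference chain.

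First, I let $\lambda^*=\bm\lambda(t)$. For each signal $s\in A$, Lemma~\ref{lem: maximum well defined} yields an optimal report
\[
a^*(s)\in\arg\max_{a}\int t(\theta,a)\,\varphi(\theta;s,\lambda^*)\,d\theta,
\]
and a measurable selection $s\mapsto a^*(s)$ exists by a standard measurable-maximum theorem (using the equicontinuity of $t$). I then define the candidate difference-only transfer
\[
t^*(y)=\int_{A}g(s)\,t\bigl(s+y,\,a^*(s)\bigr)\,ds,
\]
which lies in $[0,1]$ because $t\in[0,1]$ and $g$ is normalized on $A$, and which vanishes at $\pm\infty$ by dominated convergence, using the uniform decay of $t(\theta,\theta+x)$ in $x$.

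Second, by Fubini and the substitution $\theta=s+y$,
\[
\hat E(\lambda;t^*)=\int t^*(y)\,\lambda\phi(\lambda y)\,dy
=\int_{A}g(s)\int t\bigl(\theta,a^*(s)\bigr)\,\varphi(\theta;s,\lambda)\,d\theta\,ds.
\]
For each $s$, the inner integral is a particular feasible value of the interim problem at $(s,\lambda)$, hence at most $U(s,\lambda):=\max_{a}\int t(\theta,a)\varphi(\theta;s,\lambda)d\theta$; and by construction of $a^*(s)$ it attains $U(s,\lambda^*)$ at $\lambda=\lambda^*$. Averaging over $s$ gives $\hat E(\lambda;t^*)\le E(\lambda;t)$ for every $\lambda$, with equality at $\lambda=\lambda^*$.

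Finally, the revealed-preference chain
\[
\hat E(\lambda;t^*)-c(\lambda)\le E(\lambda;t)-c(\lambda)\le E(\lambda^*;t)-c(\lambda^*)=\hat E(\lambda^*;t^*)-c(\lambda^*)
\]
shows that $\lambda^*$ maximizes $\hat E(\cdot;t^*)-c(\cdot)$ and that IR is preserved. Moreover, any hypothetical $\lambda'>\lambda^*$ that also maximized $\hat E(\cdot;t^*)-c(\cdot)$ would, by the same chain, maximize $E(\cdot;t)-c(\cdot)$ as well, contradicting the definition of $\bm\lambda(t)$ as the \emph{largest} such maximizer; hence $\hat{\bm\lambda}(t^*)=\lambda^*=\bm\lambda(t)$. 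The degenerate case $\pi(t)<0$ (so $\bm\lambda(t)=0$) is handled separately by taking $t^*\equiv 0$, which gives $\hat E(\lambda;t^*)-c(\lambda)=-c(\lambda)<0$ and thus $\hat{\bm\lambda}(t^*)=0$. The main obstacle I expect is purely technical: the measurable selection of $a^*(s)$ and the verification that $t^*$ inherits the equicontinuity and uniform-decay hypotheses that Lemma~\ref{lem: maximum well defined} imposes. The substantive content of the proof is the translation-averaging construction together with the revealed-preference chain above.
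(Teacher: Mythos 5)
Your proof is correct and takes essentially the same route as the paper's own argument: you define $t^*$ by averaging $t$ over signals along the optimal report policy fixed at precision $\bm\lambda(t)$, then show via Fubini and a change of variables that $\hat E(\lambda;t^*)\le E(\lambda;t)$ with equality at $\bm\lambda(t)$, and conclude by revealed preference together with the ``largest maximizer'' convention. The only differences are cosmetic: you spell out the revealed-preference chain and the tie-breaking argument that the paper compresses into ``by the definition of $\bm\lambda(t)$,'' and you explicitly handle the degenerate $\pi(t)<0$ case, which the paper leaves implicit.
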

\begin{proof}
 Let $\bm{a}(s;\lambda)$ denote the agent's report after observing signal $s$ given precision $\lambda$. 
$$\bm{a}(s;\lambda)= \arg\max_a \int_{\mathbb R} t(\theta', a)\varphi(\theta';s,\lambda)d\theta'.$$
$$E(\lambda;t)=\int_{s\in A} g(s)ds \int t(\theta', \bm a(s; \lambda))\varphi(\theta';s,\lambda)d\theta'$$
Note that this integral is well-defined since $\bm a(s; \lambda)$ is well-defined and the set $A$ has finite Lebegue measure. Let $\theta''= \theta'-s$,
\[
\begin{split}
    E(\lambda;t)
    &=\int_{s\in A} g(s)ds \int t(\theta''+s, \bm a(s; \lambda))\varphi(\theta'';0,\lambda)d\theta''.
\end{split}
\]
Let $t^*(\theta'')= \int_{s\in A} g(s) t(\theta''+s, \bm a(s; \bm\lambda(t))) ds$. By Fubini-Tonelli Theorem,
\[
\begin{split}
    E(\bm\lambda(t);t)&=\int_{s\in A} g(s)ds \int t(\theta''+s, \bm a(s; \bm\lambda(t)))\varphi(\theta'';0,\bm\lambda(t))d\theta''\\
    &= \int_{\mathbb R} t^*(\theta'')\varphi(\theta'';0,\bm\lambda(t))d\theta''\\
    &=\hat E(\bm\lambda(t);t^*).
\end{split}
\]
Moreover, at $\lambda\neq \bm\lambda(t)$,
\[
\begin{split}
    E(\lambda;t)&=\int_{s\in A} g(s)ds \int_{\mathbb R} t(\theta', \bm a(s; \lambda))\varphi(\theta';s,\lambda)d\theta'\\
    &\geq \int_{s\in A} g(s)ds \int_{\mathbb R} t(\theta', \bm a(s; \bm\lambda(t)))\varphi(\theta';s,\lambda)d\theta'\\
    &=\int_{s\in A} g(s)ds \int t(\theta''+s, \bm a(s; \bm\lambda(t)))\varphi(\theta'';0,\lambda)d\theta''\\
    &= \int_{\mathbb R} t^*(\theta'')\varphi(\theta'';0,\lambda)d\theta''\\
    &=\hat E(\lambda;t^*)
\end{split}
\]where the inequality follows by the definition of $\bm a(s;\lambda)$. By the definition of $\bm{\lambda}(t)$,
$\bm{\hat\lambda}(t^*)=\bm\lambda(t)$.
\end{proof}
Given this Lemma, we can input the constructed $t^*$ into the proof of Theorem \ref{thm: step}, where I show that there exists a cutoff transfer that induces a higher precision.

\vspace{5mm}
In the second approach, the state space is a circle in $\mathbb{R}^2$ with a circumference of $1$, and the prior distribution is uniform. This setup corresponds to cases where the state is periodic, such as the unit vector in $\mathbb R^2$ which corresponds a direction in $\mathbb R^2$ (In Machina triangle, each direction corresponds to an indifference curve.), the time of day (e.g., 15:20) or the day of the year (e.g., October 19). The model remains almost the same. The only exception is that $\phi$ admits a compact support $[-M,M]$ and the precision $\lambda\in [2M,+\infty]$. The proof that it is without loss to assume that transfer depends only on the difference $\theta-a$ is almost the same as above. The exception is that the integral of $s$ in Lemma \ref{lem: without loss} can be taken over the entire circle.

\section{Omitted Proofs}
\begin{proof}[Proof of Lemma \ref{lem: ratio}]
$$\frac{\partial \ln \phi(\lambda x)}{\partial\ln \lambda}=\frac{\phi'(\lambda x)}{\phi(\lambda x)}\lambda x=-\eta (\lambda x).$$
Thus, we have 
$$\frac{\partial [\ln\frac{\phi(\lambda x_1)}{\phi(\lambda x_2)}]}{\partial\ln \lambda}=\eta (\lambda x_2)-\eta (\lambda x_1).$$
The second part follows by
$$\frac{\partial [\frac{\varphi(x_1;0,\lambda)}{\varphi(x_2;0,\lambda)}]}{\partial \lambda}=\frac{\partial [\frac{\phi(\lambda x_1)}{\phi(\lambda x_2)}]}{\partial \lambda}.$$
\end{proof}

\begin{proof}[Proof of Lemma \ref{lem: cross derivative}]
Let $\Phi$ denote the anti-derivative of $\phi$, i.e., $\Phi(x)= \int_{-\infty}^x \phi(y)dy$. We compute the expected transfer given the cutoff transfer $d$ 
$$E(\lambda; d)=2\Phi(\lambda d)-1.$$
The derivative of the expected transfer with respect to $\lambda$ is
$$\frac{\partial E(\lambda; d)}{\partial \lambda}=2d \phi(\lambda d).$$
Let us see how the derivative changes when varying $d$,
$$\frac{\partial \frac{\partial E(\lambda; d)}{\partial \lambda}}{\partial d}=2\phi(\lambda d)[1-\eta (\lambda d)].$$
\end{proof}

\begin{lem}\label{lem: expected transfer}
    The agent's expected transfer $$
    E(\lambda;t) = \max_{a \in \mathbb{R}} \int_{\mathbb R} t(\theta'- a) \varphi(\theta'; s,\lambda) d\theta'
$$  is well-defined.
\end{lem}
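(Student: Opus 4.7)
The plan is to reduce the question to two standard ingredients: continuity of the map $a\mapsto T(a):=\int_{\mathbb R} t(\theta'-a)\,\varphi(\theta';s,\lambda)\,d\theta'$, and the fact that $T(a)\to 0$ as $|a|\to\infty$. Together with $0\le t\le 1$, these imply the supremum is attained on a compact interval. Note that $t$ is only assumed measurable and vanishing at infinity, not continuous, so I would not try to bound $|t(x+\epsilon)-t(x)|$ directly.

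First I would establish continuity of $T$. By the change of variable $u=\theta'-a$,
\[
T(a)=\int_{\mathbb R} t(u)\,\varphi(u+a;s,\lambda)\,du,
\]
so, using $0\le t\le 1$,
\[
|T(a+\epsilon)-T(a)|\le \int_{\mathbb R}\bigl|\varphi(u+a+\epsilon;s,\lambda)-\varphi(u+a;s,\lambda)\bigr|\,du.
\]
Since $\varphi(\cdot;s,\lambda)\in L^1(\mathbb R)$ and translations are continuous in $L^1$, the right-hand side tends to $0$ as $\epsilon\to 0$. Hence $T$ is continuous on $\mathbb R$.

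Second I would show $\lim_{|a|\to\infty}T(a)=0$. Fix $\delta>0$. Because $t$ vanishes at infinity and is bounded by $1$, there is $M>0$ such that $t(x)<\delta$ whenever $|x|>M$. Because $\varphi(\cdot;s,\lambda)$ is a probability density, there is $N>0$ such that $\int_{|\theta'|>N}\varphi(\theta';s,\lambda)\,d\theta'<\delta$. For $|a|>M+N$ and $|\theta'|\le N$ we have $|\theta'-a|>M$, so $t(\theta'-a)<\delta$, giving
\[
T(a)\le \delta\int_{|\theta'|\le N}\varphi(\theta';s,\lambda)\,d\theta' + \int_{|\theta'|>N}\varphi(\theta';s,\lambda)\,d\theta' \le 2\delta.
\]

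Finally I would conclude. If $T\equiv 0$ the maximum is $0$. Otherwise pick $a_0$ with $T(a_0)>0$; by the previous step there exists $K>0$ such that $T(a)<T(a_0)/2$ for $|a|>K$. The continuous function $T$ attains its maximum on the compact set $[-K,K]$, and this maximum dominates $T(a_0)$, hence is the global maximum of $T$ on $\mathbb R$. The main (minor) obstacle is simply handling possible discontinuity of $t$; this is bypassed by moving the translation onto $\varphi$ and invoking $L^1$-continuity of translation.
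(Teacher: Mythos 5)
Your proposal is correct and follows essentially the same two-step strategy as the paper: establish that $T$ vanishes as $|a|\to\infty$, establish that $T$ is continuous, and conclude via compactness. The vanishing-at-infinity argument is essentially identical (differing only in whether one centers the "mass" interval at $0$ or at $s$). The one genuine difference is the continuity step: you move the translation onto $\varphi$ via a change of variables and invoke the standard fact that translation is continuous in $L^1(\mathbb R)$, whereas the paper bounds $\int|\varphi(\theta';s-\epsilon,\lambda)-\varphi(\theta';s,\lambda)|\,d\theta'$ directly by exploiting the symmetry and unimodality of $\phi$ to compute it explicitly as $2[\Phi(\tfrac{\epsilon\lambda}{2})-\Phi(-\tfrac{\epsilon\lambda}{2})]$. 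Your version is marginally more general (it needs only $\varphi\in L^1$, not unimodality) and arguably cleaner; the paper's version is self-contained and gives an explicit modulus of continuity. Both are fully valid.
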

\begin{proof}[Proof of Lemma \ref{lem: expected transfer}]
    I show that the maximum is well-defined, given that the transfer rule $t$ vanished at infinity. Fix a signal $s$ and precision $\lambda$. Let $T(a)$ be the agent's expected transfer after reporting $a$, $$T(a)= \int t(\theta'-a )\varphi (\theta';s,\lambda)d\theta'.$$
If $T(a)=0$ for all $a\in\mathbb R$, the $\arg\max_a T(a)$ is well-defined. Suppose that there exists $ a_1\in \mathbb R$ such that $T(a_1)>0$. As $t$ vanished at infinity, there exists a $K_1>0$ such that for all $|x|>K_1$, $t(x)< \frac{T(a_1)}{2}$. Moreover, there exists a $K_2>0$ such that $\Phi(\lambda K_2)> 1-\frac{T(a_1)}{4}$. Let $K=\max(K_1,K_2)$. If $|a-s|>2K$,
\[
\begin{split}
    T(a)=& \int t(\theta'-a )\varphi (\theta';s,\lambda)d\theta'\\
    =&  \int_{\theta':|\theta'-s|<K} t(\theta'-a )\varphi (\theta';s,\lambda)d\theta'+\int_{\theta':|\theta'-s|\geq K} t(\theta'-a )\varphi (\theta';s,\lambda)d\theta'\\
    \leq & \frac{T(a_1)}{2} \int_{\theta':|\theta'-s|<K} \varphi (\theta';s,\lambda)d\theta'+1\cdot \int_{\theta':|\theta'-s|\geq K} \varphi (\theta';s,\lambda)d\theta'\\
    \leq  & \frac{T(a_1)}{2} +2[1- \Phi(\lambda K)]\\
    =& T(a_1).
\end{split}
\]
Thus, $$\arg\max_{a\in\mathbb R} T(a)=\arg\max_{|a-s|\leq 2K} T(a).$$
Next, I show that $T(\cdot)$ is continuous. 
\[
\begin{split}
    |T(a+\epsilon)-T(a)|=&\bigg|\int t(\theta'-a-\epsilon )\varphi (\theta';s,\lambda)d\theta'-\int t(\theta'-a )\varphi (\theta';s,\lambda)d\theta'\bigg|\\
    =&\bigg|\int t(\theta'-a )\varphi (\theta'+\epsilon;s,\lambda)d\theta'-\int t(\theta'-a )\varphi (\theta';s,\lambda)d\theta'\bigg|\\
    =&\bigg|\int t(\theta'-a )\varphi (\theta';s-\epsilon,\lambda)d\theta'-\int t(\theta'-a )\varphi (\theta';s,\lambda)d\theta'\bigg|\\
    =& \bigg|\int t(\theta'-a )[\varphi (\theta';s-\epsilon,\lambda)-\varphi (\theta';s,\lambda)]d\theta'\bigg|\\
    \leq & \int \bigg|\varphi (\theta';s-\epsilon,\lambda)-\varphi (\theta';s,\lambda)\bigg|d\theta'\\
    = & 2 \int_{\theta'\leq s-\frac{\epsilon}{2}} \varphi (\theta';s-\epsilon,\lambda)-\varphi (\theta';s,\lambda)d\theta'\\
    =& 2[\Phi(\frac{\epsilon}{2}\lambda)-\Phi(-\frac{\epsilon}{2}\lambda)]
\end{split}
\]
Thus, $T$ is continuous as the CDF $\Phi$ is continuous. A continuous function achieves a maximum on a compact set. So the maximum in $E(\lambda;t)$ is well-defined.
\end{proof}

Let $\hat E(\lambda;t)$ denote the agent's expected transfer when he has to report truthfully
$$\hat E(\lambda; t)=\int_{\mathbb R}  t(\theta') \varphi (\theta'; 0, \lambda)d\theta'.$$
Let $\bm{\hat\lambda}(t)$ denote the corresponding induced precision
\[\bm{\hat\lambda}(t)=\begin{cases}
\max [\arg\max \hat E(\cdot;t)-c(\cdot)], & \text{ if } \max \hat E(\cdot;t)-c(\cdot)\geq 0,\\
0, & \text{ otherwise.}\end{cases}\]

\begin{lem}\label{lem: aux fix}
    Suppose $\phi$ satisfies increasing elasticity above 1. Suppose the agent has to report truthfully. Given a symmetric transfer rule $\tilde t$ and induced $\lambda^*=\bm{\hat\lambda}(\tilde t)$, the new transfer
    \[
    \tilde t'(x)=\begin{cases}
1, & \text{ if } \lambda^* |x|< \eta ^{-1}(1),\\
\tilde t(x), & \text{ otherwise }   \end{cases}\]
can induce a weakly higher precision $\bm{\hat\lambda}(\tilde t')\geq \bm{\hat\lambda}(\tilde t)$.
    
\end{lem}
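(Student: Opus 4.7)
The plan is to combine a monotone-comparative-statics style revealed-preference argument with a direct verification that, for $x$ in the newly-modified region, the signal density $\varphi(x;0,\lambda)=\lambda\phi(\lambda x)$ is non-decreasing in $\lambda$ on $[0,\lambda^*]$. I would first observe that $\tilde t'\ge \tilde t$ pointwise (we only raised values, and never above the budget $1$), so $\hat E(\lambda;\tilde t')\ge \hat E(\lambda;\tilde t)$ for every $\lambda$; in particular the IR constraint under $\tilde t'$ holds whenever it holds under $\tilde t$, so $\bm{\hat\lambda}(\tilde t')$ is genuinely the largest maximizer of $\hat E(\cdot;\tilde t')-c$. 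Let
\[
D(\lambda)\;=\;\hat E(\lambda;\tilde t')-\hat E(\lambda;\tilde t)\;=\;\int_R (1-\tilde t(x))\,\varphi(x;0,\lambda)\,dx,
\]
where $R=\{x:\lambda^*|x|<\eta^{-1}(1)\}$ is the region on which $\tilde t$ was replaced by $1$.

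The core step is to show that $D$ is non-decreasing on $[0,\lambda^*]$. Since $(1-\tilde t(x))\ge 0$, it suffices to show that $\lambda\mapsto \varphi(x;0,\lambda)$ is non-decreasing on $[0,\lambda^*]$ for every fixed $x\in R$. Using the definition of $\eta$ and the symmetry of $\phi$, one computes
\[
\frac{\partial \varphi(x;0,\lambda)}{\partial\lambda}\;=\;\phi(\lambda x)\bigl[1-\eta(\lambda|x|)\bigr].
\]
For $x\in R$ and $\lambda\in[0,\lambda^*]$ we have $\lambda|x|\le\lambda^*|x|<\eta^{-1}(1)$, and by the very definition of $\eta^{-1}(1)=\inf\{y:\eta(y)>1\}$ this forces $\eta(\lambda|x|)\le 1$. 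Hence the derivative is nonnegative, so $\varphi(x;0,\cdot)$ is non-decreasing on $[0,\lambda^*]$ for each $x\in R$, and therefore so is $D$.

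Finally, I would close by contradiction. Assume $\lambda^*>0$ (otherwise the claim is vacuous) and suppose $\lambda':=\bm{\hat\lambda}(\tilde t')<\lambda^*$. Since $\lambda'$ is a maximizer of $\hat E(\cdot;\tilde t')-c$ and $\lambda^*$ is one of $\hat E(\cdot;\tilde t)-c$, the two revealed-preference inequalities add to give
\[
D(\lambda')\;\ge\;D(\lambda^*).
\]
But the monotonicity of $D$ on $[0,\lambda^*]$ gives $D(\lambda')\le D(\lambda^*)$, so equality holds throughout. Unwinding, this equality shows that $\lambda^*$ itself also maximizes $\hat E(\cdot;\tilde t')-c$, and therefore $\bm{\hat\lambda}(\tilde t')=\max\arg\max\ge \lambda^*>\lambda'$, contradicting the definition of $\lambda'$. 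Hence $\bm{\hat\lambda}(\tilde t')\ge \lambda^*=\bm{\hat\lambda}(\tilde t)$.

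The main obstacle is not any single hard calculation but rather making sure the two pieces mesh: the set $R$ is defined using $\lambda^*$, and this is exactly what is needed to restrict the monotonicity of $\varphi(x;0,\cdot)$ to the correct range of $\lambda$. The increasing elasticity above $1$ hypothesis enters only through the definition of $\eta^{-1}(1)$ and the fact that $\eta\le 1$ to the left of this threshold; the behavior of $\eta$ above $1$ plays no role in this particular lemma, consistent with the lemma being a building block (using ``increase inside the $\eta\le 1$ region'') that will be combined with the complementary ``decrease outside'' argument elsewhere in the proof of Theorem~\ref{thm: step}.
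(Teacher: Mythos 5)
Your proof is correct and reaches the paper's key inequality ($D(\lambda)\le D(\lambda^*)$ for all $\lambda\le\lambda^*$) by a slightly more direct route. The paper decomposes $\Delta t=\tilde t'-\tilde t$ as a layer-cake integral of differences of cutoff indicators $\mathbf 1_{|x|\le d_2}-\mathbf 1_{|x|\le d_1}$ with $d_1<d_2\le \eta^{-1}(1)/\lambda^*$ and invokes Lemma~\ref{lem: cross derivative} to get $\partial_\lambda E(\lambda;d)$ increasing in $d$ over that range; you instead differentiate the density pointwise, using $\partial_\lambda\,\varphi(x;0,\lambda)=\phi(\lambda x)\bigl[1-\eta(\lambda|x|)\bigr]\ge 0$ for $\lambda|x|<\eta^{-1}(1)$, and integrate against $(1-\tilde t)\ge 0$ on the modified region. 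The two arguments are essentially equivalent (the cutoff decomposition is just a dressed-up version of the same pointwise fact), and both hinge only on $\eta\le1$ to the left of $\eta^{-1}(1)$. Your closing revealed-preference step --- add the two optimality inequalities, combine with monotonicity of $D$ to get $D(\lambda')=D(\lambda^*)$, deduce that $\lambda^*$ also maximizes $\hat E(\cdot;\tilde t')-c$, and contradict $\lambda'<\lambda^*$ via the ``take the largest maximizer'' convention --- is a clean expansion of the paper's terse assertion that the $D$-monotonicity ``implies'' $\bm{\hat\lambda}(\tilde t')\ge\bm{\hat\lambda}(\tilde t)$. One small point worth spelling out (you gesture at it with the IR remark): since $\tilde t'\ge\tilde t$, the participation constraint under $\tilde t'$ holds whenever it holds under $\tilde t$, which is what guarantees $\bm{\hat\lambda}(\tilde t')$ is in fact the largest maximizer rather than $0$.
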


\begin{proof}[Proof of Lemma \ref{lem: aux fix}]
Let $\Delta t= \tilde t'-\tilde t$. I shall show that 
$$\hat E(\lambda^*;\Delta t)\geq \hat E(\lambda;\Delta t)\quad \text{for all } \lambda\leq \lambda^*.$$
This implies that $\bm{\hat\lambda}(\tilde t')\geq \bm{\hat\lambda}(\tilde t)$ since $\bm{\hat\lambda}(t)=
\max [\arg\max \hat E(\cdot;t)-c(\cdot)]$.

Consider two cutoff transfers $0<d_1<d_2\leq \eta ^{-1}(1)/\lambda^*$. For all $\lambda\leq \lambda^*$, we have $\eta (\lambda d_1)\leq 1$, $\eta (\lambda d_2)\leq 1$. By Lemma \ref{lem: cross derivative},
$$\frac{\partial^2 E(\lambda; d)}{\partial \lambda\partial d}\geq 0\quad \Leftrightarrow \quad \eta (\lambda d)\leq 1.$$
This implies 
$\frac{\partial E(\lambda; d)}{\partial \lambda}$ increases in $d\in[d_1,d_2]$ for all $\lambda\leq \lambda^*$, which implies $E(\lambda;d_2)-E(\lambda;d_1)$ increases in $\lambda\in[0,\lambda^*]$. As $\Delta t$ is an integral of such difference of cutoff transfers and $\hat E(\lambda;t)$ is linear in $t$, $\hat E(\lambda;\Delta t)$ increases in $\lambda\in[0,\lambda^*]$.

\end{proof}

\begin{lem}\label{lem: ratio change}
Suppose $\phi$ satisfies increasing elasticity above 1. Let $\eta ^{-1}(1)<x_1<x_2$. Then
$$\frac{\phi(\lambda' x_2)}{\phi(\lambda' x_1)}\geq \frac{\phi(x_2)}{\phi(x_1)}\quad \text{for all} \quad \lambda'<1.$$
\end{lem}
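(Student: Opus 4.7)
The plan is to reduce the desired log-ratio inequality to a pointwise comparison of elasticities via integration, and then dispatch that comparison using the increasing-elasticity-above-1 property in two cases.

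\textbf{Step 1: rewrite both log-ratios as integrals of $\eta$.} From Lemma~\ref{lem: ratio}, or directly from the definition of $\eta$, one has $\frac{d}{dy}\ln\phi(y) = -\eta(y)/y$. Thus for any $0<y_1<y_2$ (with $\phi$ positive on the interval),
\[
\ln \frac{\phi(y_2)}{\phi(y_1)} \;=\; -\int_{y_1}^{y_2} \frac{\eta(y)}{y}\,dy.
\]
Applying this to $(y_1,y_2) = (\lambda' x_1,\lambda' x_2)$ and to $(x_1,x_2)$, the desired inequality $\ln\frac{\phi(\lambda' x_2)}{\phi(\lambda' x_1)} \ge \ln\frac{\phi(x_2)}{\phi(x_1)}$ becomes
\[
\int_{\lambda' x_1}^{\lambda' x_2} \frac{\eta(y)}{y}\,dy \;\le\; \int_{x_1}^{x_2} \frac{\eta(y)}{y}\,dy.
\]

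\textbf{Step 2: change variables to align the limits.} Substituting $y=\lambda' z$ in the left integral gives $\frac{\eta(y)}{y}\,dy = \frac{\eta(\lambda' z)}{z}\,dz$, so the inequality is equivalent to
\[
\int_{x_1}^{x_2} \frac{\eta(z) - \eta(\lambda' z)}{z}\,dz \;\ge\; 0.
\]
It therefore suffices to show the pointwise inequality $\eta(\lambda' z) \le \eta(z)$ for every $z \in [x_1,x_2]$.

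\textbf{Step 3: verify the pointwise inequality using increasing elasticity above 1.} Fix $z\in[x_1,x_2]$, so $z > \eta^{-1}(1)$. First I observe that $\eta(z)>1$: by the definition of $\eta^{-1}(1)$ as an infimum, there exist points $z'\in(\eta^{-1}(1),z)$ with $\eta(z')>1$, and increasing elasticity above $1$ then yields $\eta(z)\ge\eta(z')>1$. Now split into cases on $\lambda' z$. If $\lambda' z \le \eta^{-1}(1)$, then by the definition of $\eta^{-1}(1)$ we have $\eta(\lambda' z)\le 1 < \eta(z)$. If instead $\lambda' z > \eta^{-1}(1)$, then the same argument as above gives $\eta(\lambda' z)>1$, and a second application of increasing elasticity above 1 (to the two points $\lambda' z < z$, both exceeding $\eta^{-1}(1)$) yields $\eta(z)\ge\eta(\lambda' z)$. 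In either case $\eta(z)-\eta(\lambda' z)\ge 0$, so the integrand in Step~2 is nonnegative, completing the proof.

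\textbf{Main obstacle.} The only delicate point is the mixed case where $z>\eta^{-1}(1)$ but $\lambda' z\le\eta^{-1}(1)$: here the monotonicity clause of increasing elasticity above $1$ does not apply directly because $\lambda' z$ sits below the threshold. The saving observation, used above, is that elasticity is automatically bounded by $1$ below the threshold (by definition of $\eta^{-1}(1)$), while it strictly exceeds $1$ strictly above the threshold, so the comparison $\eta(\lambda' z)\le\eta(z)$ holds ``for free'' in that regime. Apart from this, the argument is an elementary rewriting plus one-variable calculus.
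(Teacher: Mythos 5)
Your proof is correct, and it takes a genuinely cleaner route than the paper's. Both proofs start by writing log-ratios of $\phi$ as integrals of the elasticity; the difference is in the parameterization. The paper's Lemma~\ref{lem: ratio} integrates over the scale factor $\lambda$ from $\lambda'$ to $1$ (with $x_1$ or $x_2$ held fixed), which after a change of variable $y=\ln\lambda$ produces two integrals of $g(y)=\eta(x_2 e^y)$ over the \emph{same} range $[\ln\lambda',0]$ but with \emph{shifted} arguments, $g(y)$ versus $g(y-\Delta y)$. Comparing these requires splitting into two cases depending on whether the shifted and unshifted windows overlap, and then chaining monotonicity inequalities across the overlap. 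Your proof instead integrates $\eta(y)/y$ over the spatial variable and then rescales $y=\lambda'z$ to align both integrals onto the common interval $[x_1,x_2]$; the whole inequality then reduces to the single pointwise comparison $\eta(\lambda' z)\le\eta(z)$ for $z\in[x_1,x_2]$, which follows in two short cases directly from the definition of increasing elasticity above $1$ and the definition of $\eta^{-1}(1)$ as an infimum. In effect you have absorbed the paper's shift-and-overlap bookkeeping into a change of variables, which is a nice simplification.

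One small point of rigor worth flagging: in the case $\lambda' z\le\eta^{-1}(1)$ you assert $\eta(\lambda' z)\le 1$ ``by definition of $\eta^{-1}(1)$.'' That is immediate for $\lambda' z<\eta^{-1}(1)$, but if $\lambda' z=\eta^{-1}(1)$ the infimum definition alone does not rule out $\eta(\eta^{-1}(1))>1$. The conclusion $\eta(\lambda' z)\le\eta(z)$ still holds there — either by continuity of $\eta$ at $\eta^{-1}(1)$ (so that $\eta(\eta^{-1}(1))=1$), or simply by applying increasing elasticity above $1$ to the pair $\lambda' z<z$ when $\eta(\lambda' z)>1$ — so this is a boundary-case remark rather than a gap, but a careful writeup should address it.
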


\begin{proof}[Proof of Lemma \ref{lem: ratio change}]
By Lemma \ref{lem: ratio}, 
$$\ln \frac{\phi(\lambda' x)}{\phi(x)}=\int_{\lambda'}^1 \eta (\lambda x)d\ln\lambda.$$
Then,
$$\frac{\phi(\lambda' x_1)}{\phi(x_1)}\leq \frac{\phi(\lambda' x_2)}{\phi(x_2)}\quad \text{for all} \quad \lambda'<1$$ is equivalent to
$$\int_{\lambda'} ^1 \eta (\lambda x_1)d\ln\lambda\leq \int_{\lambda'} ^1 \eta (\lambda x_2)d\ln\lambda\quad \text{for all} \quad \lambda'<1.$$
Let $y= \ln\lambda$, then it suffices to show
$$\int_{\ln\lambda'} ^0 \eta (x_1\exp(y) )dy\leq \int_{\ln\lambda'} ^0 \eta (x_2\exp(y))dy\quad \text{for all} \quad \lambda'<1.$$
Let $\Delta y= \ln(x_2/x_1)>0$. Then it suffices to show
$$\int_{\ln\lambda'} ^0 \eta (x_2\exp(y-\Delta y) )dy\leq \int_{\ln\lambda'} ^0 \eta (x_2\exp(y))dy\quad \text{for all} \quad \lambda'<1.$$
Let $g(y)= \eta (x_2 \exp(y))$. Then it suffices to show
\begin{equation}\label{eq: g function}
    \int_{\ln\lambda'} ^0 g(y-\Delta y) dy\leq \int_{\ln\lambda'} ^0 g(y)dy\quad \text{for all} \quad \lambda'<1.
\end{equation}
Note that function $g(y)$ is increasing in $y\in[\ln\frac{\eta ^{-1}(1)}{x_2},0]$ and single-crosses $1$ from below at $\ln\frac{\eta ^{-1}(1)}{x_2}$.

If $\ln\lambda'\geq -\Delta y$, then for all $y\in [\ln\lambda',0]$,
$$g(y)\geq g(\ln\lambda')\geq g(-\Delta y)\geq g(y-\Delta y).$$

Suppose $\ln\lambda'<- \Delta y$. Equation \eqref{eq: g function} is
$$\int_{\ln\lambda'-\Delta y}^{-\Delta y} g(y) dy\leq \int_{\ln\lambda'} ^0 g(y)dy\quad \text{for all} \quad \lambda'<1.$$
I can subtract $\int_{\ln\lambda'}^{-\Delta y} g(y) dy$ from both sides. Then the inequality becomes
$$\int_{\ln\lambda'-\Delta y}^{\ln\lambda'} g(y) dy\leq \int_{-\Delta y}^0 g(y)dy\quad \text{for all} \quad \lambda'<1.$$ It holds since for all $y\in[-\Delta y,0]$, $y'\in [\ln\lambda'-\Delta y,\ln\lambda']$,
$$g(y)\geq g(-\Delta y)\geq g(\ln\lambda')\geq g(y').$$

\end{proof}

\begin{proof}[Proof of Theorem \ref{thm: step} (3) implies (1) and Proof of Theorem \ref{thm: optimal step function}]
Suppose that $\phi$ satisfies increasing elasticity above $1$. If there is no $t$ such that $\bm\lambda(t)> 0$, then the problem is trivial.\footnote{In this case, we adopt the convention that every transfer is optimal.} Now suppose there exists a $t$ such that $\bm\lambda(t)> 0$. In step 1, I shall show that for all transfer $t$ with $\bm\lambda(t)> 0$, I can construct a new cutoff transfer $d$ such that $\bm{\lambda}(d)\geq \bm{\lambda}(t)$. That is, cutoff transfer $d$ weakly improves over $t$. In step 2, I shall optimize over cutoff transfers and prove Theorem \ref{thm: optimal step function}. In step 3, I show uniqueness under stronger conditions.

\textbf{Step 1: Optimality of Cutoff Transfers}

Fix a transfer rule $t$.  First, I show that there exists a transfer rule $t^*(\theta-a)$ such that given $t^*$, the agent chooses precision $\bm\lambda(t)$ when he has to report truthfully. That is, $$\hat{\bm\lambda}(t^*)=\bm\lambda(t).$$
For the transfer rule $t$ that depends on the state and the report, Lemma \ref{lem: without loss} gives us the desired $t^*$. Now assume that $t$ depends only on the difference between the state and the report, $\theta-a$. The construction of $t^*$ is simpler and is as follows. Let $\bm{a}(s;\bm{\lambda}(t))$ denote the agent's report after observing signal $s$ at the precision $\bm{\lambda}(t)$. 
$$\bm{a}(s;\bm{\lambda}(t))= \arg\max_a \int_{\mathbb R} t(\theta'- a)\varphi(\theta';s,\bm{\lambda}(t))d\theta'.$$
As both $t$ and $\varphi$ are translation invariant, $\bm{a}(s;\bm{\lambda}(t))-s$ is a constant. I define a new transfer rule $t^*$ by
$$t^*(x)= t(x+s-\bm{a}(s;\bm{\lambda}(t))).$$
Given transfer $t^*$, the agent reports truthfully at $\bm{\lambda}(t)$. Note that replacing $t$ by $t^*$ only changes the agent's report by a constant. Two transfers $t$ and $t^*$ provide the same incentive for the agent to choose $\lambda$, i.e., $$E(\cdot; t)=E(\cdot; t^*),\quad\bm{\lambda}(t^*)=\bm{\lambda}(t).$$ 
I define an auxiliary problem: the agent has to reveal his signal. The agent's expected transfer is $$\hat E(\lambda;t^*)=\int_{\mathbb R}  t^*(\theta') \varphi (\theta'; 0, \lambda)d\theta'.$$
It must be true that for all $\lambda \in \mathbb R_+$,
\begin{equation}\label{eq: auxiliary inequality}
    \hat E(\lambda; t^*)\leq E(\lambda;  t^*)
\end{equation}
and 
\begin{equation}\label{eq: auxiliary equality}
    \hat E(\bm{\lambda}(t); t^*)= E(\bm{\lambda}(t);  t^*).
\end{equation}
The inequality holds as the agent loses the flexibility to misreport in the auxiliary problem. The equality holds as the agent reports truthfully at $\lambda=\bm{\lambda}(t)$. By Equation \eqref{eq: auxiliary inequality} and \eqref{eq: auxiliary equality}, $$\bm{\hat \lambda}(t^*)=\bm{\lambda}(t^*)=\bm{\lambda}(t).$$

Then, I symmetrify the transfer rule $t^*$ to obtain
$$\tilde t(x)= \frac{1}{2}[t^*(x)+t^*(-x)].$$
For all $\lambda \in \mathbb R_+$, we have
$$\hat E(\lambda;\tilde t)=\hat E(\lambda;t^*).$$
$$\bm{\hat \lambda}(\tilde t)=\bm{\hat \lambda}(t^*)=\bm{\lambda}(t).$$
By Lemma \ref{lem: aux fix}, I can augment transfer $\tilde t$ to a new transfer $\tilde t'$
\[
    \tilde t'(x)=\begin{cases}
1, & \text{ if } \bm{\lambda}(t) |x|< \eta ^{-1}(1),\\
\tilde t(x), & \text{ otherwise }   \end{cases}\]
and induces higher precision $$\bm{\hat \lambda}(\tilde t')\geq \bm{\hat \lambda}(\tilde t)=\bm{\lambda}(t).$$

Then I can construct a cutoff transfer $d$. I pin down the cutoff by requiring that cutoff transfer $d$ and $\tilde t'$ offers the same expected transfer at $\bm{\lambda}(t)$, i.e.,
\begin{equation}\label{eq: construct step}
    E(\bm{\lambda}(t);d)= \hat E(\bm{\lambda}(t); \tilde t').
\end{equation}
Let $\Delta t(x)= 1_{|x|\leq d}-\tilde t'(x)$ denote the transfer difference. The transfer difference must take the form
\[\Delta t(x)\begin{cases}
= 0, & \text{ if } |x|\leq \eta ^{-1}(1)/\bm\lambda(t),\\
\geq 0, & \text{ if } \eta ^{-1}(1)/\bm\lambda(t)<|x|\leq d,\\
\leq 0, & \text{ otherwise.}\end{cases}\]
Compare the expected transfer given cutoff transfer $d$ and the expected transfer $\tilde t'$ in the auxiliary problem.
\[
\begin{split}
    E(\lambda;d)-\hat E(\lambda;\tilde t')=\int_{\mathbb R} \Delta t(\theta')\varphi (\theta'; 0,\lambda)d\theta'.
\end{split}
\]
Consider $y_1,\,y_2$ such that $\eta ^{-1}(1)/\bm\lambda(t)<y_1<y_2$. Let $x_1= \bm\lambda(t) y_1$, $x_2= \bm\lambda(t) y_2$. Then $\eta ^{-1}(1)<x_1<x_2$. By Lemma \ref{lem: ratio change}, for all $\lambda<\bm\lambda(t)$, we have
$$\frac{\phi( x_2 \lambda/\bm\lambda(t))}{\phi( x_1 \lambda/\bm\lambda(t))}\geq \frac{\phi(x_2)}{\phi(x_1)}.$$
$$\frac{\phi(\lambda y_2)}{\phi(\lambda y_1)}\geq \frac{\phi(\bm\lambda(t) y_2)}{\phi(\bm\lambda(t) y_1)}.$$
$$\frac{\varphi(y_2;0, \lambda )}{\varphi(y_1;0,\lambda )}\geq \frac{\varphi( y_2;0,\bm\lambda(t))}{\varphi( y_1;0,\bm\lambda(t))}.$$
This implies
$$E(\lambda;d)-\hat E(\lambda;\tilde t')\leq 0,  \text{ if } \lambda< \bm{\lambda}(t).$$
Given $\bm{\hat \lambda}(\tilde t')\geq \bm{\lambda}(t)$
and the definition $\bm{\lambda}(t)=
\max [\arg\max E(\cdot;t)-c(\cdot)]$, we have $\bm{\lambda} (d)\geq \bm{\lambda} (t)$ (see Figure \ref{pic expected transfer}).

\begin{figure}[htp]
    \centering
    \begin{tikzpicture}[scale=1]
        \draw[thick, ->] (0,0) -- (0,6) node[left]{$E$};
        \draw[thick, ->] (0,0) -- (9,0) node[right]{ $\lambda$};
        \draw (7, 3) .. controls (5,1) .. (1,0.3) ;
        \node[below] at (7.5,3){$c(\lambda)$};
        \draw[color=blue] (1,0.4) .. controls (4,3.5) .. (7,4)node[right]{\footnotesize $E(\lambda; t)$} ;
        \draw[color=red] (1.5,0.1) .. controls (4,3.62) .. (7,3.6)node[right]{\footnotesize $\hat E(\lambda;  t^*)=\hat E(\lambda; \tilde t)$} ;
        \node[below,color=red] at (4.2,0) {$\bm{\lambda}(t)$};
        \draw[dotted] (4.2,0) -- (4.2,4.1);
        \fill (4.2,3.26) circle (0.05);
        \draw[color=brown] (1.5,0.3) .. controls (4,4.5) .. (7,5)node[right]{\footnotesize $\hat E(\lambda; \tilde t')$} ;
        \node[below,color=brown] at (5.3,0) {$\bm{\hat \lambda}(\tilde t')$};
        \draw[dotted] (5.3,0) -- (5.3,4.7);
        \fill (5.3,4.67) circle (0.05);  
        \draw (2,0.2) .. controls (4.5,5) .. (7,6)node[right]{\footnotesize $ E(\lambda; d)$} ;
        \fill (4.2,4.1) circle (0.05);  
        
    \end{tikzpicture}
    \caption{\small Expected Transfer} \label{pic expected transfer}
\end{figure}
\vspace{5mm}
\textbf{Step 2: Proof of Theorem \ref{thm: optimal step function}}

Next, I optimize over cutoff transfers to prove Theorem \ref{thm: optimal step function} and show the existence of the optimal cutoff transfer. If we set $d<\bar d$, the agent never chooses to work. So consider $d\geq \bar d$. 

\textbf{Case 1: $\bm\lambda(\bar d) \bar d\geq \eta ^{-1}(1).$}

Let $\tilde \lambda = \eta ^{-1}(1)/\bar d\leq \bm\lambda (\bar d)$. For any $d>\bar d$, if $\bm\lambda(d)\leq \tilde\lambda$, then $\bm\lambda(d)\leq \bm\lambda (\bar d)$. Consider $\bm\lambda(d)> \tilde\lambda$. For all $d\geq \bar d$ and $\lambda\geq \tilde\lambda$,
$$\eta (\lambda d)\geq \eta (\bar d\tilde \lambda)\geq 1 \quad \Rightarrow \quad\frac{\partial^2 E(\lambda; d)}{\partial \lambda\partial d}\leq 0.$$
Since $\bm{\lambda}(d)=
\max [\arg\max E(\cdot;d)-c(\cdot)]$. By the Topkis' monotone comparative statics theorem, $\bm\lambda(d)\leq \bm\lambda(\bar d)$ for all $d\geq \bar d$. Thus, $\bar d$ is the optimal cutoff.

\textbf{Case 2: $ \bm\lambda(\bar d) \bar d< \eta^{-1}(1)$.} For all $\bar d\leq d\leq d^*$ and $\lambda\leq \frac{\eta ^{-1}(1)}{d}$,
$$\eta (\lambda d)\leq 1\quad \Rightarrow \quad\frac{\partial^2 E(\lambda; d)}{\partial \lambda\partial d}\geq 0 .$$
This implies $\bm\lambda(d)$ is increasing in $d$ for $\bar d\leq d\leq d^*$. Now suppose $d> d^*$. If $\bm\lambda(d) d \leq \eta ^{-1}(1)$, then $$\bm\lambda(d)\leq \eta ^{-1}(1)/d\leq \eta ^{-1}(1)/d^*=\bm\lambda(d^*).$$ If $\bm\lambda(d) d > \eta ^{-1}(1)$, for all $d\geq d^*$ and $ \lambda\geq \eta ^{-1}(1)/d$,
$$\frac{\partial^2 E(\lambda; d)}{\partial \lambda\partial d}\leq 0\quad \Rightarrow \quad \bm\lambda(d)\leq \bm\lambda(d^*) $$
by the Topkis' monotone comparative statics theorem.

Now, I prove existence. In case 1, $\bar d$ is the optimal cutoff transfer. In case 2, I can increase $d$ starting from $\bar d$. Before hitting $d^*$, $\bm\lambda(d)$ increases in $d$. As $\eta^{-1}(1)$ is well-defined and finite given that $\phi$ satisfies increasing elasticity above $1$. Eventually, increasing $\bm\lambda(d) d$ can hit $\eta^{-1}(1)$. Thus, $d^*$ is well defined.

\vspace{5mm}
\textbf{Step 3: Uniqueness}

Finally, suppose that $\phi$ satisfies strictly increasing elasticity above $1$, the cost function is continuously differentiable, and the optimal precision is an interior solution (not $0$ or $+\infty$).\footnote{Distribution $\phi$ satisfies \textit{strictly increasing elasticity above 1} if $\eta (\cdot)$ single-crosses $1$ from below, $\{x|\eta(x)=1\}$ is a singleton, and is strictly increasing after the cross.} I show that the optimal transfer rule is unique. First, I show that for any transfer rule $t$ that is not a cutoff transfer, the cutoff transfer constructed above induces a strictly larger precision: $\bm{\lambda}(d)>\bm{\lambda}(t)$. Since $t$ is not a cutoff transfer, at least one of $\tilde t'-\tilde t$ and $\Delta t$ is a non-zero function. If $\tilde t'-\tilde t$ is non-zero,
\begin{equation}\label{eq: unique 1}
    \frac{\partial [\hat E(\lambda;\tilde t')-\hat E(\lambda;\tilde t)]}{\partial \lambda}\bigg|_{\lambda=\bm\lambda(t)}>0.
\end{equation}
This follows by a strict version of Lemma \ref{lem: aux fix} in which strictly increasing elasticity above $1$ implies
$$\eta (\lambda d)< 1\quad \Rightarrow \quad\frac{\partial^2 E(\lambda; d)}{\partial \lambda\partial d}> 0 .$$
As $\hat E(\cdot;\tilde t)-c(\cdot)$ is continuously differentiable,\footnote{$\hat E(\cdot;\tilde t)$ is continuously differentiable as $\phi$ is continuously differentiable.} $\hat{\bm\lambda}(\tilde t)$ is interior, Equation \eqref{eq: unique 1} implies $\hat{\bm\lambda}(\tilde t')>\hat{\bm\lambda}(\tilde t)$ by \cite{edlin1998strict}. This implies $\bm\lambda(d)>\bm\lambda(t)$.

If $\Delta t$ is non-zero, next I show  
\begin{equation}\label{eq: unique 2}
    \frac{\partial [E(\lambda;d)-\hat E(\lambda;\tilde t')]}{\partial \lambda}\bigg|_{\lambda=\bm\lambda(t)}>0.
\end{equation}
By the construction of $\Delta t$, 
\begin{equation}\label{eq: construction Delta t}
    \int \Delta t(x) \varphi(x;0,\bm\lambda(t)) dx=0
\end{equation}
$$\int \Delta t(x)  \phi(\bm\lambda(t) x) dx=0.$$
Take some small $\epsilon>0$. Let $\lambda'=\bm\lambda(t) \exp(\epsilon)$. Consider 
$$E(\lambda';d)-\hat E(\lambda';\tilde t')=\int \Delta t (x)\lambda' \phi(\lambda' x)dx.$$ By $\ln(\lambda')-\ln(\bm\lambda(t))=\epsilon$ and Lemma \ref{lem: ratio}, 
$$\ln\phi(\lambda' x)-\ln\phi(\bm\lambda(t)x)=-\eta(\bm\lambda(t) |x|))\epsilon+\smallO(\epsilon).$$
Thus,
\[
\begin{split}
    E(\lambda';d)-\hat E(\lambda';\tilde t')&=\int \Delta t (x)\lambda' \phi(\lambda' x)dx\\
    &=\lambda' \int \Delta t (x) \phi(\bm\lambda(t) x)\exp(-\epsilon\eta(\bm\lambda(t) |x|))dx+\smallO(\epsilon)\\
    &=-\lambda' \int \Delta t (x) \phi(\bm\lambda(t) x)\epsilon\eta(\bm\lambda(t) |x|)dx+\smallO(\epsilon),
\end{split}
\]
where the last equality follows by Equation \eqref{eq: construction Delta t}.
\[
\begin{split}
    \lim_{\epsilon\to 0}\frac{E(\lambda';d)-\hat E(\lambda';\tilde t')}{\epsilon}
    &=-\bm\lambda(t) \int \Delta t (x) \phi(\bm\lambda(t) x)\eta(\bm\lambda(t) |x|)dx\\
    &=- \int \Delta t (x) \varphi(x;0,\bm\lambda(t) )\eta(\bm\lambda(t) |x|)dx
\end{split}
\]
$$\frac{\partial [E(\lambda;d)-\hat E(\lambda;\tilde t')]}{\partial \lambda}\bigg|_{\lambda=\bm\lambda(t)}=-\frac{1}{\bm\lambda(t)} \int \Delta t (x) \varphi(x;0,\bm\lambda(t) )\eta(\bm\lambda(t) |x|)dx$$
which is strictly positive by Equation \eqref{eq: construction Delta t}, $\eta(y)$ being strictly increasing for $y>\eta^{-1}(1)$, and $\Delta t(x)$ being supported on $|x|\geq \frac{\eta^{-1}(1)}{\bm\lambda(t)}$. Again by $\hat E(\cdot;\tilde t')-c(\cdot)$ being continuously differentiable, $\hat{\bm\lambda}(\tilde t')$ is interior, Equation \eqref{eq: unique 2} implies $\bm\lambda(d)>\hat{\bm\lambda}(\tilde t')\geq \bm\lambda(t)$ by \cite{edlin1998strict}.

I have shown that for any transfer rule $t$ that is not a cutoff transfer, $\bm{\lambda}(d)>\bm{\lambda}(t)$. Thus, any optimal transfer must be a cutoff transfer. Given strictly increasing elasticity above $1$, cost function being continuously differentiable, optimal precision being interior, all comparative statics analyses in the proof of Theorem \ref{thm: optimal step function}  are strict. Thus, there exists a unique optimal transfer rule, which is a cutoff transfer.

\end{proof}

Next, I prove a lemma which shall be critical for proving the necessity of Theorem \ref{thm: step}, statement (2) implies (3).

\begin{lem}\label{lem: monotone}
Suppose that for all increasing, convex, and continuously differentiable cost functions, there exists an optimal transfer that is a cutoff transfer. For an increasing and convex cost function $c\in C^1$, let $d^*$ be the optimal cutoff and $\lambda^*$ be the induced precision with $\eta(\lambda^* d^*)\geq 1$. Then for all $x_1\in [0,\lambda^* d^*)$ and $x_2\in [\lambda^* d^*,+\infty)$,
$$\eta (x_1)\leq \eta (x_2).$$

\end{lem}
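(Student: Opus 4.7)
The plan is to argue by contradiction via the perturbation construction of Example~\ref{counterexample}. Assume that there exist $x_1 \in [0,\lambda^* d^*)$ and $x_2 \in [\lambda^* d^*,+\infty)$ with $\eta(x_1) > \eta(x_2)$. Then $\eta(x_2) < +\infty$ forces $\phi(x_2) > 0$, and by an approximation with nearby interior points of $\mathrm{supp}(\phi)$, I may take $0 < x_1 < \lambda^* d^* < x_2$ with $\phi(x_1),\phi(x_2) > 0$. The goal is to exhibit a (non-cutoff) transfer $\underline t$ satisfying $\bm\lambda(\underline t) > \lambda^*$, contradicting the hypothesis that cutoff $d^*$ is optimal (so that $\lambda^*$ is the maximum precision inducible by any transfer).

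\paragraph{The perturbation and the derivative comparison.}
Set $y_i := x_i/\lambda^*$, so $y_1 < d^* < y_2$. For small $\delta_1,\delta_2 > 0$ define the symmetric transfer
\begin{align*}
\underline t(y) \;:=\; \begin{cases}
0, & |y| \in [y_1-\delta_1,\, y_1+\delta_1],\\
1, & |y| \in [y_2-\delta_2,\, y_2+\delta_2],\\
\mathbf{1}_{|y| \le d^*}(y), & \text{otherwise,}
\end{cases}
\end{align*}
which carves a thin interior annulus out of the cutoff and pastes a thin exterior annulus outside. Tune $(\delta_1,\delta_2)$ so that the leading-order matching condition $\delta_1\phi(x_1) = \delta_2\phi(x_2)$ holds, giving $E(\lambda^*;\underline t) = E(\lambda^*;d^*)$. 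Using $\partial_\lambda \varphi(y;0,\lambda) = \varphi(y;0,\lambda)(1-\eta(\lambda|y|))/\lambda$, which is immediate from Lemma~\ref{lem: ratio}, a first-order calculation gives
\begin{align*}
\partial_\lambda[E(\lambda;\underline t) - E(\lambda;d^*)]\big|_{\lambda=\lambda^*} \;=\; 4\,\delta_1 \phi(x_1)\bigl(\eta(x_1) - \eta(x_2)\bigr) + O(\delta_1^2),
\end{align*}
where the constant-in-$\eta$ parts cancel by the matching condition. Under the hypothesis $\eta(x_1) > \eta(x_2)$ and for $\delta_1$ small, this is strictly positive. Since $c \in C^1$ and $\lambda^* > 0$ is an interior maximizer of $E(\cdot; d^*) - c(\cdot)$, the first-order condition $c'(\lambda^*) = \partial_\lambda E(\lambda; d^*)|_{\lambda^*}$ holds, whence $\partial_\lambda[E(\cdot;\underline t)-c(\cdot)]|_{\lambda^*} > 0$. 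So the agent under $\underline t$ strictly prefers some $\lambda > \lambda^*$ to $\lambda^*$.

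\paragraph{Main obstacle and its resolution.}
The delicate step is upgrading this local first-order improvement to the global statement $\bm\lambda(\underline t) > \lambda^*$, rather than merely $\pi(\underline t) > \pi(d^*)$. A priori the new argmax could jump to some $\lambda' < \lambda^*$. The plan is to rule this out in two pieces. Locally, a direct sign computation — applying Lemma~\ref{lem: ratio} to the ratio $\phi(\lambda y_2)/\phi(\lambda y_1)$, whose logarithmic $\lambda$-derivative at $\lambda^*$ equals $(\eta(x_1)-\eta(x_2))/\lambda^* > 0$ — yields $E(\lambda;\underline t) < E(\lambda;d^*)$ for $\lambda$ slightly below $\lambda^*$, so every $\lambda < \lambda^*$ in a small left neighborhood is strictly dominated by $\lambda^*$ under $\underline t$. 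Globally, the perturbation is uniformly $O(\delta_1)$, so for $\delta_1$ sufficiently small any argmax under $\underline t$ must lie in a small neighborhood of some maximizer of $E(\cdot; d^*) - c(\cdot)$; combining this with the local sign analysis and a strict monotone-comparative-statics argument of the Edlin--Shannon type (already invoked in Step 3 of the proof of Theorem~\ref{thm: step}) forces the argmax to shift strictly to the right of $\lambda^*$, yielding $\bm\lambda(\underline t) > \lambda^*$ and the desired contradiction.
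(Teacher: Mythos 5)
Your construction, derivative formula, localization, and use of Edlin--Shannon are all in line with the paper's argument, and your explicit computation of $\partial_\lambda[E(\lambda;\underline t)-E(\lambda;d^*)]|_{\lambda^*}$ is a correct unpacking of what the paper only asserts qualitatively via Lemma~\ref{lem: ratio}. However, there is one genuine gap: you never verify that the agent reports \emph{truthfully} under $\underline t$. This matters. The transfer $\underline t$ is not single-peaked in $|\theta-a|$ --- it has a hole cut into the plateau at $y_1$ and a thin bump pasted on at $y_2$ --- so truthful reporting is not automatic, and every step of your argument is computed under the truthful-report integral $\hat E(\lambda;\underline t)=\int \underline t\,\varphi(\cdot;0,\lambda)$. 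But the quantity governing the agent's choice of $\lambda$ is $E(\lambda;\underline t)=\max_a\int\underline t(\theta'-a)\varphi(\theta';s,\lambda)d\theta'$, which dominates $\hat E(\lambda;\underline t)$ and, if the agent shades their report toward the exterior bump, can strictly exceed it. In that case your matching condition $E(\lambda^*;\underline t)=E(\lambda^*;d^*)$ fails, your sign computation for $\lambda$ just below $\lambda^*$ is invalidated, and the envelope derivative at $\lambda^*$ is taken at the misreported optimum rather than at $a=s$, so the claimed first-order improvement may not hold.

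The paper closes this gap by showing that for $\delta_1,\delta_2$ small the agent does report truthfully under $\underline t$: it compares the truthful expected transfer $2\Phi(\lambda^* d^*)-1$ to the expected transfer after shifting the report by $\epsilon'$, observing that the annulus contributions are $O(\delta)$ uniformly while the cutoff contribution $\Phi(d^*-\epsilon')+\Phi(d^*+\epsilon')-2\Phi(d^*)$ is strictly negative and strictly decreasing in $\epsilon'$. The hypothesis $\eta(\lambda^*d^*)\geq 1$ is used precisely here --- it keeps $\varphi'(d^*;0,\lambda^*)$ bounded away from $0$, so the loss from moving the cutoff region dominates any gain from chasing the thin annuli. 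Without this step the construction does not deliver a legitimate counterexample transfer. A secondary, easily-repaired point: your matching condition $\delta_1\phi(x_1)=\delta_2\phi(x_2)$ only equates masses to leading order, so it gives $E(\lambda^*;\underline t)=E(\lambda^*;d^*)+O(\delta^2)$ rather than exact equality; you should instead use the exact mass-matching condition $\int_{y_1-\delta_1}^{y_1+\delta_1}\varphi(x;0,\lambda^*)dx=\int_{y_2-\delta_2}^{y_2+\delta_2}\varphi(x;0,\lambda^*)dx$, which is what the paper does.
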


\begin{proof}[Proof of Lemma \ref{lem: monotone}]
Without loss I can assume $c(\lambda^*)=E(\lambda^*;d^*)$ since I can increase $c$ by a constant without affecting $d^*$ and $\lambda^*$. Similarly, without loss of generality, I assume that $\lambda^*$ uniquely maximizes $E(\lambda;d^*)-c(\lambda)$.\footnote{Otherwise, I can always increase $c(\lambda)$ for $\lambda\neq\lambda^*$, without affecting $d^*$ and $\lambda^*$.}

Towards a contradiction, suppose $\exists\, x_1\in [0,\lambda^* d^*),\,x_2\in [\lambda^* d^*,+\infty)$ such that 
$\eta (x_1)> \eta (x_2)$. As $\eta (\cdot)$ is continuous, I can pick $\delta_1>0$, $\delta_2>0$ small enough such that for all $y_1\in [\frac{x_1}{\lambda^*}-\delta_1, \frac{x_1}{\lambda^*}+\delta_1]$ and $y_2\in [\frac{x_2}{\lambda^*}-\delta_2, \frac{x_2}{\lambda^*}+\delta_2]$,
\begin{equation}\label{eq: order}
    \eta (y_1\lambda^*)> \eta (y_2\lambda^*)
\end{equation}
and
$$\int_{\frac{x_1}{\lambda^*}-\delta_1}^{\frac{x_1}{\lambda^*}+\delta_1} \varphi (x;0,\lambda^*) dx=\int_{\frac{x_2}{\lambda^*}-\delta_2}^{\frac{x_2}{\lambda^*}+\delta_2} \varphi (x;0,\lambda^*) dx.$$
I define $$\Delta t(x)= -1_{|x|\in [\frac{x_1}{\lambda^*}-\delta_1, \frac{x_1}{\lambda^*}+\delta_1]}+1_{|x|\in [\frac{x_2}{\lambda^*}-\delta_2, \frac{x_2}{\lambda^*}+\delta_2]}$$
and a new transfer $\underline t(x)=1_{|x|\leq d^*}+\Delta t(x)$. 

I can set $\delta_1$ and $\delta_2$ to be small enough such that the agent reports truthfully given the transfer $\underline t$, since $\varphi'(d^*;0,\lambda^*)<0$ is bounded from above due to $\eta (\lambda^* d^*)\geq 1$. To see this, it suffices to consider the case $\lambda^*=1$. If the agent reports truthfully, the expected transfer is $2\Phi(d^*)-1$. Now suppose the agent misreports by $\epsilon'>0$. Then his expected transfer is
\[
\begin{split}
    &\Phi(d^*-\epsilon')+\Phi(d^*+\epsilon')-1\\
    +&\Phi(x_2-\epsilon'+\delta_2)-\Phi(x_2-\epsilon'-\delta_2)+\Phi(x_2+\epsilon'+\delta_2)-\Phi(x_2+\epsilon'-\delta_2)\\
    -&[\Phi(x_1-\epsilon'+\delta_1)-\Phi(x_1-\epsilon'-\delta_1)]-[\Phi(x_1+\epsilon'+\delta_1)-\Phi(x_1+\epsilon'-\delta_1)],
\end{split}
\]
where the second and the third line can be made arbitrarily close to $0$ as $\delta_1$ and $\delta_2$ tend to $0$, since $\Phi$ is continuous. Moreover, the difference 
$$\Phi(d^*-\epsilon')+\Phi(d^*+\epsilon')-2\Phi(d^*)$$
is strictly negative and is strictly decreasing in $\epsilon'$, due to $\phi$ being single-peaked and $\eta(d^*)=-\frac{\phi'(d^*)}{\phi(d^*)}d^*\geq 1$. Therefore, the expected transfer under misreport is lower than $2\Phi(d^*)-1$ when $\delta_1$ and $\delta_2$ are small enough.

Then we have
$$ E(\lambda^*;\underline t)=E(\lambda^*;d^*).$$
By Lemma \ref{lem: ratio} and Equation \eqref{eq: order},
$$\frac{\partial [E(\lambda;\underline t)-E(\lambda;d^*)]}{\partial \lambda}\bigg|_{\lambda=\lambda^*}>0.$$ 
Since previously $\lambda^*$ uniquely maximizes $E(\lambda;d^*)-c(\lambda)$, we can pick $\delta_1$ and $\delta_2$ small enough such that $\bm\lambda(\underline t)\in (\lambda^*-\epsilon, \lambda^*+\epsilon)$, for some small $\epsilon>0$. As $c$ is continuously differentiable, $E(\lambda^*;\underline t)=E(\lambda^*;d^*)$, $$\frac{\partial [E(\lambda;\underline t)-E(\lambda;d^*)]}{\partial \lambda}\bigg|_{\lambda=\lambda^*}>0,$$ by \cite{edlin1998strict} we have $\bm\lambda(\underline t)> \lambda^*$,
contradicting that $(d^*,\lambda^*)$ is optimal (Figure \ref{pic monotone}).

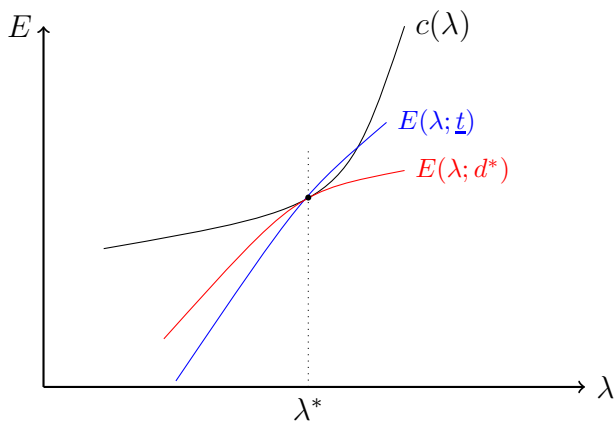
\begin{figure}[htp]
    \centering
    \begin{tikzpicture}[scale=0.8]
        \draw[thick, ->] (0,0) -- (0,6) node[left]{$E$};
        \draw[thick, ->] (0,0) -- (9,0) node[right]{ $\lambda$};
        \draw (6, 6)node[right]{$c(\lambda)$} .. controls (5,3) .. (1,2.3) ;
        \draw[color=blue] (5.7,4.4)node[right]{\footnotesize $E(\lambda; \underline t)$} .. controls (4.3,3.2)..(2.2,0.1) ;
        \draw[color=red] (6,3.6)node[right]{\footnotesize $E(\lambda; d^*)$} .. controls (4.2,3.25)..(2,0.8) ;
        \fill (4.4,3.15) circle (0.05);  
        \node[below] at (4.4,0) {$\lambda^*$};
        \draw[dotted] (4.4,0) -- (4.4,4);
    \end{tikzpicture}
    \caption{\small Expected Transfer} \label{pic monotone}
\end{figure}
\end{proof}

\begin{proof}[Proof of Theorem \ref{thm: step} (2) implies (3)]

Suppose that for all increasing, convex, and continuously differentiable cost functions, there exists an optimal transfer that is a cutoff transfer. Take a increasing, convex cost function $c_0\in C^1$, with $d_0$ and $\lambda_0$ being the corresponding optimal cutoff and induced precision. I can pick $c_0$ such that $\lambda_0$ is the unique maximizer of\footnote{Given any increasing and convex cost function $c_0\in C^1$, I can always construct a new increasing, convex, and continuously differentiable cost function by increasing $c_0(\lambda)$ for all $\lambda\neq \lambda_0$ while keeping $c_0(\lambda_0)$ and $\frac{dc_0}{d\lambda}|_{\lambda=\lambda_0}$ unchanged.} $E(\cdot;d_0)-c_0(\cdot)$. As $E(\cdot;d_0)-c(\cdot)$ is continuously differentiable, I can use the first-order approach. As 
$$\frac{\partial^2 E(\lambda; d)}{\partial \lambda\partial d}\leq 0\quad \Leftrightarrow \quad \eta (\lambda d)\geq 1,$$
we have $\eta (\lambda_0 d_0)\geq 1$. If not, I can slightly increase $d_0$, which leads to a larger $$\frac{\partial E(\lambda; d)}{\partial \lambda}\bigg|_{\lambda=\lambda_0}.$$
This induces a larger $\lambda>\lambda_0$, a contradiction.\footnote{As $\lambda_0$ is the unique maximizer of $E(\cdot;d_0)-c_0(\cdot)$, I can always make the increment on $d_0$ small enough such that the new maximizer never falls in $[0,\lambda_0)$.}

Since the cutoff transfer $d_0$ is the optimal transfer and $c_0\in C^1$, by Lemma \ref{lem: monotone} and $\eta (\cdot)$ being continuous,
$$\eta (x)\leq \eta (\lambda_0 d_0)\quad \text{if } x\leq \lambda_0 d_0$$
$$\eta (x)\geq \eta (\lambda_0 d_0)\quad \text{if } x\geq \lambda_0 d_0.$$
Note that $\eta (0^+)<1$ (as $\phi$ is integrable around $0$), $\eta (\lambda_0 d_0)\geq 1$, $\eta (\cdot)$ is continuous, I can find the largest $x$ where $\eta (\cdot)$ crosses $1$ from below
$$x_1= \min\{x|\eta (x')\geq 1, \text{ if } x'\geq x\}\leq \lambda_0 d_0.$$
$\eta (x)\geq 1$ when $x\geq x_1$.

Now pick $d_1$ and $\lambda_1$ such that $\lambda_1 d_1=x_1$. Construct an increasing, convex cost function $c_1\in C^1$ such that
$$c_1(\lambda)>E(\lambda;d_1)\quad \text{if } \lambda\neq \lambda_1$$
$$c_1(\lambda_1)=E(\lambda_1;d_1)$$
$$\frac{\partial E(\lambda; d_1)}{\partial \lambda}\bigg|_{\lambda=\lambda_1}=\frac{dc_1(\lambda)}{d\lambda}\bigg|_{\lambda=\lambda_1}.$$
As 
$$\frac{\partial^2 E(\lambda; d)}{\partial \lambda\partial d}\leq 0\quad \Leftrightarrow \quad \eta (\lambda d)\geq 1$$and $\eta (x)\geq 1$ for all $x\geq x_1$,
the cutoff transfer $d_1$ is the best among all cutoff transfers for cost function $c_1$. Since there exists an optimal transfer that is a cutoff transfer, cutoff transfer $d_1$ is the optimal transfer, and $\lambda_1$ is the maximum precision. By Lemma \ref{lem: monotone},
$$\eta (x)\leq 1\quad \text{if } x\leq x_1$$
$$\eta (x)\geq 1\quad \text{if } x\geq x_1.$$
Thus, $\eta (\cdot)$ single-crosses $1$ at $x_1$.

Similarly, for all $x_2>x_1$, I can pick $d_2$ and $\lambda_2$ such that $\lambda_2 d_2=x_2$. Pick a increasing, convex cost function $c_2\in C^1$ such that
$$c_2(\lambda)>E(\lambda;d_2)\quad \text{if } \lambda\neq \lambda_2$$
$$c_2(\lambda_2)=E(\lambda_2;d_2)$$
$$\frac{\partial E(\lambda; d_2)}{\partial \lambda}\bigg|_{\lambda=\lambda_2}=\frac{dc_2(\lambda)}{d\lambda}\bigg|_{\lambda=\lambda_2}.$$
By the same argument, we have
$$\eta (x)\leq \eta (x_2)\quad \text{if } x\leq x_2$$
$$\eta (x)\geq \eta (x_2)\quad \text{if } x\geq x_2.$$
Thus, $\eta (\cdot)$ is increasing once it goes above 1 at $x_1$.

\end{proof}

\begin{proof}[Proof of Proposition \ref{prop: cost}]

First suppose for all $\lambda$, $c_2(\lambda)\geq E(\lambda;d^*(c_1))$. Then $\bar d(c_2)\geq d^*(c_1)$, implying $d^*(c_2)\geq d^*(c_1)$. Moreover, since 
$$\text{for all }\lambda\geq \lambda^*(c_1) \text{ and } d\geq d^*(c_1),\quad\frac{\partial^2 E(\lambda; d)}{\partial \lambda\partial d}\leq 0$$ and $c_2-c_1$ is increasing,
$$\max [\arg\max_\lambda E(\lambda;d^*(c_2))-c_2(\lambda)]\leq\max [\arg\max_\lambda E(\lambda;d^*(c_1))-c_1(\lambda)].$$
Thus, $\lambda^*(c_2)\leq \lambda^*(c_1)$. Now suppose for some $\lambda$, $c_2(\lambda)\leq E(\lambda;d^*(c_1))$. Note that 
\[
\begin{split}
    &\max [\arg\max_\lambda E(\lambda;d^*(c_1))-c_2(\lambda)]\\
    =&\max [\arg\max_\lambda E(\lambda;d^*(c_1))-c_1(\lambda)-(c_2(\lambda)-c_1(\lambda))]\\
    \leq &\max [\arg\max_\lambda E(\lambda;d^*(c_1))-c_1(\lambda)]=\lambda^*(c_1),
\end{split}
\]where the inequality follows by $c_2-c_1$ being increasing. Thus, $$\bm\lambda(d^*(c_1);c_2)d^*(c_1)\leq \lambda^*(c_1) d^*(c_1)= \eta^{-1}(1).$$
By the proof of Theorem \ref{thm: optimal step function}, $d^*(c_2)\geq d^*(c_1)$ and $\lambda^*(c_2)\leq \lambda^*(c_1)$.

\end{proof}

\begin{proof}[Proof of Corollary \ref{col: comparative signal}]
    Changing the noise from $\varepsilon_1$ to $k\varepsilon_1$ is equivalent to keeping the noise at $\varepsilon_1$ and changing the cost function from $c_1(\lambda)$ to $c_2(\lambda)=c_1(k\lambda)$. Notice that $$c_2(\lambda)-c_1(\lambda)=c_1(k\lambda)-c_1(\lambda)$$
    which is increasing in $\lambda$ due to the convexity of $c_1$. The conclusion follows by Proposition \ref{prop: cost}.
\end{proof}

\begin{proof}[Proof of Proposition \ref{prop: high dimensional}]
The proof of the first part remains the same as Theorem \ref{thm: step}. For proving the second part, the only difference is to note that the expected transfer is
$$E(\lambda; d)=\int_0^{\lambda d} \phi(r) \frac{\pi^{n/2}}{\Gamma (n/2+1)}dr^n,$$
where $\Gamma$ is the gamma function and $\frac{\pi^{n/2}}{\Gamma (n/2+1)}r^n$ is the volume of the n-dimensional ball.
$$\frac{\partial^2 E(\lambda; d)}{\partial \lambda\partial d}=\frac{n\pi^{n/2}}{\Gamma (n/2+1)}\phi(\lambda d)(\lambda d)^{n-1}[n-\eta (\lambda d)].$$
Thus, 
$$\frac{\partial^2 E(\lambda; d)}{\partial \lambda\partial d}\geq 0\quad \Leftrightarrow \quad \eta (\lambda d)\leq n .$$
The rest of the argument remains the same.

\end{proof}

\begin{proof}[Proof of Proposition \ref{prop: Gaussian}]
The optimality of cutoff transfers follows by the Proof of Theorem \ref{thm: step} (3) implies (1), as Gaussian distributions satisfy increasing elasticity.

For proving the counterpart of Theorem \ref{thm: optimal step function}, it suffices to notice that
$$\frac{\partial^2 E(\lambda; d)}{\partial \lambda\partial d}=\frac{n\pi^{n/2}}{\Gamma (n/2+1)}\phi(\Lambda (\lambda)d)(\Lambda (\lambda)d)^{n-1}\frac{d\Lambda(\lambda)}{d\lambda}[n-\eta (\Lambda (\lambda)d)],$$
where $\Lambda(\lambda)= (\frac{1}{\lambda_p^2}+\frac{1}{\lambda^2})^{-\frac{1}{2}}$ for uniform prior and $\Lambda(\lambda)= (\frac{1}{\lambda_p^2}+\frac{1}{\lambda^2+\lambda_0^2})^{-\frac{1}{2}}$ for Gaussian prior.
\end{proof}

\begin{proof}[Proof of Proposition \ref{prop: unobserved}]
    I first prove the case of uniform prior. Given the agent's signal $s$, his posterior belief about the state is $\mathcal N(s,1/\lambda^2)$. Since the principal's signal is Gaussian centered at $\theta$ with variance $\frac{1}{\lambda_p^2}$, the agent's posterior belief about $s_p$ is
    $$\mathcal N(s,\frac{1}{\lambda^2}+\frac{1}{\lambda_p^2}).$$
    Now the agent's payoff maximization problem is same as in Section \ref{sec: main result} except that $s_p$ replaces $\theta$ and his precision is $(\frac{1}{\lambda^2}+\frac{1}{\lambda_p^2})^{-\frac{1}{2}}$. Thus, Proposition \ref{prop: unobserved} follows from Theorem \ref{thm: step} and \ref{thm: optimal step function}.

    The proof for the Gaussian prior is similar. The only difference is that conditional on a signal $s$, the agent's posterior about $s_p$ is
    $$\mathcal N(\frac{s\lambda^2}{\lambda_0^2+\lambda^2}, \frac{1}{\lambda_0^2+\lambda^2}+\frac{1}{s_p^2}).$$
    Thus, the precision is $\Lambda=[\frac{1}{\lambda_0^2+\lambda^2}+\frac{1}{s_p^2}]^{-\frac{1}{2}}$.
\end{proof}

\end{document}